\def\myfill{\vrule height 2mm depth 0.8mm width 0pt}
\tikzstyle{state}=[minimum size = 6.3mm, inner sep = 0.5mm, circle, draw]
\tikzstyle{statemdp}=[minimum size = 6.2mm, inner sep = 0mm, rounded corners=0, draw,rectangle split, rectangle split horizontal=false, rectangle split parts=2, rectangle split draw splits=false]
\newcommand{\bscc}[1]{\mathsf{bscc}(#1)}
\newcommand{\poly}[1]{\mathsf{poly}(#1)}
\newcommand{\symbopen}{\text{\scriptsize\FiveStarOpen}}
\newcommand{\tran}[1]{\stackrel{#1}{\rightarrow}}
\newcommand{\true}{\mathsf{true}}
\newcommand{\false}{\mathsf{false}}
\newcommand{\cyl}{\mathit{Cyl}}
\newcommand{\dist}{\mathcal{D}}
\newcommand{\mc}{\mathcal{M}}
\newcommand{\mdp}{\mathcal{M}}
\newcommand{\states}{S}
\newcommand{\actions}{A}
\newcommand{\tra}{\Delta}
\newcommand{\Gf}[1]{\mathbf{G}^{#1}}
\newcommand{\X}{\mathbf{X}\,}
\newcommand{\G}{\mathbf{G}\,}
\newcommand{\U}{\,\mathbf{U}\,}
\newcommand{\F}{\mathbf{F}\,}
\newcommand{\idf}{\mathbf{1}}
\newcommand{\Ex}[3][]{\ifblank{#1}{\mathbb{E}_{#2}}{\mathbb{E}^{#1}_{#2}}(#3)}
\newcommand{\E}[2][]{\ifblank{#1}{\mathbb{E}(#2)}{\mathbb{E}^{#1}(#2)}} 
\newcommand{\Prb}[2][]{\ifblank{#1}{\mathbb{P}_{#2}}{\mathbb{P}^{#1}_{#2}}} 
\renewcommand{\Pr}[3][]{\ifblank{#1}{\mathbb{P}_{#2}}{\mathbb{P}^{#1}_{#2}}(#3)} 
\renewcommand{\P}[2][]{\ifblank{#1}{\mathbb{P}(#2)}{\mathbb{P}^{#1}(#2)}} %
\newcommand{\ssf}[1]{x_{#1}}
\newcommand{\rabin}{R}
\newcommand{\rstates}{Q}
\newcommand{\rinit}{q_{in}}
\newcommand{\ract}{\Sigma}
\newcommand{\rtra}{\delta}
\newcommand{\racc}{\mathcal{F}}
\newcommand{\freq}[1]{\mathsf{freq}(#1)}
\newcommand{\Nset}{\mathbb{N}}
\newcommand{\sinit}{s_{in}}
\newcommand{\fix}[1]{{#1^\star}}
\newcommand{\fixs}{\fix{s}}
\newcommand{\fixq}{\fix{q}}
\newcommand{\fixI}{\fix{I}}
\newcommand{\fixh}{\fix{h}}
\newcommand{\eps}{\varepsilon}
\newcommand{\ap}[1]{\ensuremath{\mathsf{#1}}}
\newtheorem{proposition}[theorem]{\bfseries  Proposition}
\newcommand{\citeapp}[1]{Appendix~\ref{#1}}
\newcommand{\depth}{\mathit{depth}}
\newcommand{\trunc}[1]{\mathsf{trunc}(#1)}
\newcommand{\compl}[1]{\textit{co-}#1}
\newcommand{\rank}[1]{\mathit{rank}(#1)}
\newcommand{\last}[1]{\mathit{last}(#1)}
\newcommand{\thetas}{\theta_S}
\newcommand{\thetaa}{\theta_A}
\newcommand{\occu}{\mathrm{occ}}
\newcommand{\infi}{\mathrm{inf}}
\newcommand{\theoremlike}[2]{\par\medskip\penalty-250\refstepcounter{theorem}{{\bfseries\noindent#2
			\ref{#1}.}}}
\newcommand{\thmhelperpre}[2]{\theoremlike{#1}{#2}}
\newcommand{\thmhelperpost}{\par\medskip}
\newenvironment{reftheorem}[1]{\thmhelperpre{#1}{Theorem}}{\thmhelperpost}
\newenvironment{reflemma}[1]{\thmhelperpre{#1}{Lemma}}{\thmhelperpost}
\begin{document}
\title{On Frequency LTL in Probabilistic Systems}

\author[1]{Vojt\v{e}ch~Forejt}
\author[2]{Jan Kr\v{c}\'al}
\affil[1]{Department of Computer Science, University of Oxford, UK}
\affil[2]{Saarland University -- Computer Science,  Saarbr\"{u}cken, Germany}

\Copyright{Vojt\v{e}ch~Forejt, Jan Kr\v{c}\'al}

\subjclass{F.4.1 Mathematical Logic}

\keywords{Markov chains, Markov decision processes, LTL, controller synthesis}

\maketitle

\begin{abstract}
We study frequency linear-time temporal logic (fLTL) which extends the
linear-time temporal logic (LTL) with a path operator $\Gf{p}$
expressing that on a path, certain formula holds with at least a given frequency $p$, thus relaxing
the semantics of the usual $\mathbf{G}$ operator of LTL. Such logic is particularly useful
in probabilistic systems, where some undesirable events such as random failures may occur
and are acceptable if they are rare enough.
Frequency-related extensions of LTL have been previously studied by several authors, where mostly
the logic is equipped with an extended ``until'' and ``globally'' operator, leading to undecidability
of most interesting problems. 

For the variant we study, we are able to establish fundamental decidability results. We show that for Markov chains,
the problem of computing the probability with which a given fLTL formula holds has the same complexity
as the analogous problem for LTL. We also show that for Markov decision processes the problem becomes
more delicate, but when restricting the frequency bound $p$ to be 1 and negations not to be outside any $\mathbf{G}^p$ operator, we can
compute the maximum probability of satisfying the fLTL formula. This can be again performed with the same time
complexity as for the ordinary LTL formulas.
\end{abstract}

\section{Introduction}

Probabilistic verification is a vibrant area of research that aims to formally check properties of stochastic systems. Among the most
prominent formalisms, with applications in e.g. modelling of network security protocols~\cite{Shm04} or
randomised algorithms~\cite{KNP12a}, are Markov chains and Markov decision processes (MDPs). Markov chains are
apt for modelling systems that contain 
purely stochastic behaviour,
for example random failures,
while MDPs can also express nondeterminism,
most commonly present as decisions of a controller or dually as adversarial events in the system.

More technically, MDP is a process that moves in discrete steps within a finite state space (labelled by sets of atomic propositions). Its evolution starts in a given initial state $s_0$. In each step a \emph{controller} chooses an action $a_i$ from a finite set $\actions(s_i)$ of actions available in the current state $s_i$. The next state $s_{i+1}$ is then chosen randomly according to a fixed probability distribution $\tra(s_i,a_i)$. The controller may base its choice on the previous evolution $s_0 a_0 \ldots a_{i-1} s_i$ and may also choose the action randomly. A Markov chain is an MDP where the set $\actions(s)$ is a singleton for each state $s$.

For the systems modelled as Markov chains or MDPs, the desired properties such as ``whenever a signal arrives to the system, the system eventually switches off'' can be often captured by
a suitable linear-time logic. The most prominent one in the verification community is Linear Temporal Logic (LTL).
Although LTL is suitable in many scenarios, it does not allow to capture some important linear-time properties,
for example that a given event takes place {\em sufficiently often}.
The need for such properties becomes even more apparent in stochastic systems, in which probabilities often model random failures.
Instead of requiring that no failure ever happens, it is natural to require that failures are infrequent, while still having the power of the LTL to specify these failures using a complex LTL formula.

A natural solution to the above problem is to extend LTL with operators that allow us to talk about {\em frequencies} of events.
Adding such operators can easily lead to undecidability as they often allow one to encode
values of potentially infinite counters~\cite{BDL-tase12,BMM14}. In both the above papers this is caused by a variant of a ``frequency until'' operator
that talks about the ratio of the number of given events happening along a finite path. The undecidability results from \cite{BDL-tase12,BMM14} carry over to the stochastic setting easily, and so, to avoid undecidability, care needs to be taken.

In this paper, we take an approach similar to~\cite{AT12} and in addition to usual operators of LTL such as $\X$, $\U$, $\G$ or $\F$ we only allow {\em frequency globally} formulae $\Gf{p} \varphi$ that require the formula $\varphi$ to hold on $p$-fraction of suffixes of an infinite path,
or more formally, $\Gf{p} \varphi$ is true on an infinite path $s_0a_0s_1 a_1 \ldots$ of an MDP if and only if
$$\liminf_{n\to\infty} \frac{1}{n}\cdot \Big|\{i \mid \text{$i < n$ and $s_i a_i s_{i+1} a_{i+1} \ldots$ satisfies $\varphi$}\}\Big| \ge p$$

This logic, which we call \emph{frequency LTL (fLTL)}, is still a significant extension to LTL, and because all operators can be nested, it allows to express much larger class of properties (a careful reader will notice that {\em nesting} of frequency operators is not the main challenge when dealing with fLTL as it can be easily removed for the price of exponential blow-up of the size of the formula).

The problem studied in this paper asks, given a Markov chain and an fLTL formula, to compute the probability with which
the formula is satisfied in the Markov chain when starting in the initial state. Analogously, for MDPs we study the
{\em controller synthesis} problem which asks to compute the maximal probability of satisfying the formula, over all controllers.

\begin{wrapfigure}[8]{r}{0.31\textwidth}
	\begin{tikzpicture}[-latex', rounded corners, xscale=1.25,yscale=1.25,
	font=\footnotesize,
	prob/.style={font=\footnotesize}]
	
	\node[statemdp] (s) at (0,0) {\myfill$s_0$\nodepart{two}\myfill\ap{w}};
	\node[statemdp] (q) at (-1.1,-1) {\myfill$s_1$\nodepart{two}\myfill\ap{q}};
	\node[statemdp] (r) at (0,-1) {\myfill$s_2$\nodepart{two}\myfill\ap{r}};
	
	\node[statemdp] (m1) at (1.5,0) {\myfill$s_3$\nodepart{two}\myfill\ap{m}};
	\node[statemdp] (mq1) at (1.5,-1) {\myfill$s_4$\nodepart{two}\myfill\ap{q}{,}\ap{m}};
	
	\draw  (q) edge  (r);
	
	\draw  (s) -| node[pos=0.3, below] {w} node[pos=.8,left,prob] {0.5}  (q);
	\draw  (s) -- (0.6,0) coordinate (v2){} node[pos=.7,above] {m} -- node[pos=.7,above,prob] {0.5}   (m1);
	\draw  (m1)  |- (1,0.5) -| (s);
	
	\draw  (mq1)  -- +(-0.6,0) -- (s);

	\draw  ($(s)+(-0.4,0.4)$) -- (s);
	
	\draw  (s) -- +(0.6,-0)  -- node[pos=.7,right,prob] {0.5} (mq1);
	\draw  (s) edge[out=180,in=225,loop, looseness=7] node[pos=0.8,below,prob] {0.5}  (s);
	\draw  (r) edge (s);
	\end{tikzpicture}
	\caption{An example MDP.\label{fig:intro}}
\end{wrapfigure}
For an example of possible application, suppose a network service accepts \ap{queries} by immediately sending back $\ap{responses}$, and in addition
it needs to be switched off for $\ap{maintenance}$ during which the queries are not accepted.
In most states, a new \ap{query} comes in the next step with probability $0.5$. In the \ap{waiting} state, the system chooses either to wait further (action $w$), or to start a \ap{maintenance} (action $m$) which takes one step to finish.
The service is modelled as an MDP from Figure~\ref{fig:intro}, leaving some parts of the behaviour unspecified. The aim is to synthesise a control strategy that meets with a given probability the requirements on the system. Example requirements can be given by a formula $ \G\F \ap{m} \;\land\; \G\F (\ap{q} \rightarrow \X \ap{r} )$ which will require that the service sometimes accepts the request, and sometimes goes for maintenance. However, there is no quantitative restriction on how often the maintenance can take place, and such restriction is inexpressible in LTL. However, in fLTL we can use the formula $ \G\F \ap{m} \;\land\; \Gf{0.95} (\ap{q} \rightarrow \X \ap{r} )$ to restrict
that the service is running sufficiently often, or a strong restriction $\G\F \ap{m} \;\land\; \Gf{1} (\ap{q} \rightarrow \X \ap{r} )$ saying that it is running with frequency 1. The formula may also contain several frequency operators.
In order to push the frequency of correctly handled queries towards a bound $p$, the controller needs to choose to perform the maintenance less and less frequently during operation.

\subparagraph{Related work}
Controller synthesis for ordinary LTL is a well-studied problem solvable in time polynomial in the size of the model and doubly exponential
in the size of the formula~\cite{BP08}. Usually, the LTL formula is transformed to an equivalent Rabin automaton, and
the probability of reaching certain subgraphs is computed in a product of the MDP (or Markov Chain) with the automaton.

A similar approach is taken by~\cite{AT12}. They study a logic similar to our fLTL, where LTL is extended with a mean-payoff reward constraints
in which the reward structures are determined by validity of given subformulas. The authors show that any formula can be converted
to a variant of non-deterministic B\"uchi automata, called multi-threshold mean-payoff B\"uchi automata, with
decidable emptiness problem, thus yielding decidability for model-checking and
satisfiability problems of labelled transition systems. Results of~\cite{AT12} cannot be applied
to probabilistic systems:
here one needs to work with {\em deterministic automata},
but as pointed out in \cite[Section 4, Footnote 4]{AT12} the approach of~\cite{AT12} heavily relies on non-determinism, since reward values depend on complete future, and so deterministic ``multi-threshold mean-payoff Rabin automata'' are strictly less expressive than the logic.
Another variant of frequency LTL was studied in~\cite{BDL-tase12,BMM14}, in which also a modified until operator is introduced. 
The work~\cite{BDL-tase12} maintains boolean semantics of the logic, while in~\cite{BMM14} the value of a formula is a number between
0 and 1. Both works obtain undecidability results for their logics, and~\cite{BDL-tase12} also yields decidability for restricted nesting.
Another logic that speaks about frequencies on a finite interval was introduced in~\cite{DBLP:journals/corr/abs-1111-3111} but provides analysis algorithm only for a bounded fragment.

Significant attention has been given to the study of quantitative objectives. The work \cite{boker2011temporal} adds mean-payoff objectives to temporal logics, but only as atomic propositions and not allowing more complex properties to be quantified. 
The work \cite{baier2014weight} extends LTL with another form of quantitative operators, allowing accumulated weight constraint expressed using automata, again not allowing quantification over complex formulas.
\cite{bloem2009better} introduces lexicographically ordered mean-payoff objectives in non-stochastic parity games and \cite{chatterjee2011energy} gives a polynomial time algorithm for almost-sure winning in MDPs with mean-payoff and parity objectives. These objectives do not allow to attach mean-payoff (i.e. frequencies) to properties more complex than atomic propositions. The solution to the problem requires infinite-memory strategy which at high level has a form similar to the form of strategies we construct for MDPs. Similar strategies also occur in \cite{chatterjee2005mean,chatterjee2012games,BFK08} although each of these works deals with a fundamentally different problem.

In branching-time logics, CSL is sometimes equipped with a ``steady-state'' operator whose semantics is similar
to our $\Gf{p}$ (see e.g. \cite{BHHK00}), and an analogous approach has been taken for the logic PCTL~\cite{kuvcera2005controller,de1998specify}.
In such logics every temporal subformula
is evaluated over states, and thus the model-checking of a frequency operator can be directly reduced to achieving a single mean-payoff reward.
This is contrasted with our setting in which the whole formula is evaluated over a single path, giving rise to much more complex
behaviour.

\subparagraph{Our contributions}
To our best knowledge, this paper gives the first decidability results for probabilistic verification against linear-time temporal logics extended by frequency operators with \emph{complex nested subformulas} of the logic.

We first give an algorithm for computing the probability of satisfying an fLTL formula in a Markov Chain.
The algorithm works by breaking the fLTL formula into linearly many ordinary LTL formulas, and then off-the-shelf
verification algorithms can be applied. We obtain that the complexity of fLTL model-checking
is the same as the complexity of LTL model checking. Although the algorithm itself is very simple, some
care needs to be taken when proving its correctness: as we explain later, the ``obvious'' proof approach would fail
since some common assumptions on independence of events are not satisfied.

We then proceed with Markov decision processes, where we show that the controller synthesis problem is significantly more complex.
Unlike the ordinary LTL, for fLTL the controller-synthesis problem may require strategies to use {\em infinite memory}, even
for very simple formulas. On the positive side, we give an algorithm for synthesis of strategies for formulas in which the
negations are pushed to atomic propositions, and all the frequency operators have lower bound 1. Although this might appear to be
a simple problem, it is not easily reducible to the problem for LTL, and the proof of the correctness of the algorithm is in
fact very involved.
This is partly because even if a strategy satisfies the formula, it can exhibit a very ``insensible'' behaviour,
as long as this behaviour has zero frequency in the limit. In the proof, we need to identify these cases and eliminate them.
Ultimately, our construction again yields the same complexity as the problem for ordinary LTL. We believe the contribution of
the fragment is both practical, as it gives a ``weaker'' alternative of the $\G$ operator usable in controller synthesis, and
theoretical, giving new insights into many of the challenges one will face in solving the controller-synthesis problem for the whole
fLTL.

\section{Preliminaries}
\label{sec:prelims}

We now proceed with introducing basic notions we use throughout this paper.

A {\em probability distribution} over a finite or countable set $X$ is a function $d : X \to [0,1]$ such that
$\sum_{x\in X} d(x) = 1$, and $\dist(X)$ denotes the set of all probability distributions over $X$.

\subparagraph{Markov decision processes and Markov chains}
A {\em Markov decision process} (MDP) is a tuple $\mdp = (\states, \actions, \tra)$ where $\states$ is a finite set of states,
$\actions$ is a finite set of actions, and $\tra:\states\times\actions\rightarrow \dist(S)$
is a partial probabilistic transition function.
A {\em Markov chain} (MC) is an MDP in which for every $s\in \states$ there is exactly one $a$ with
$\tra(s,a)$ being defined. We omit actions completely when we speak
about Markov chains and no confusion can arise.

An infinite {\em path}, also called \emph{run}, in $\mdp$ is a sequence $\omega = s_0a_0s_1a_1\cdots$ of states and actions
such that $\tra(s_i,a_i)(s_{i+1}) > 0$ for all $i$, and we denote by $\omega(i)$ the suffix $s_ia_is_{i+1}a_{i+1}\cdots$.
A finite path $h$, also called \emph{history}, is a prefix of an infinite path
ending in a state. 
Given a finite path $h=s_0a_0s_1a_1\cdots s_i$ and a finite or infinite path $h'=s_ia_is_{i+1}a_{i+1}\cdots$
we use $h\cdot h'$ to denote the concatenated path $s_0a_0s_1a_1\cdots$.
The set of paths starting with a prefix $h$ is denoted by $\cyl(h)$, or simply by $h$ if it leads to no confusion.
We overload the notation also for sets of histories, we simply use $H$ instead of $\bigcup_{h\in H} \cyl(h)$.

A {\em strategy} is a function $\sigma$ that to every finite path $h$ assigns a probability distribution over actions such that
if an action $a$ is assigned a non-zero probability, then $\tra(s,a)$ is defined where $s$ denotes the last state in $h$. A strategy $\sigma$ is {\em deterministic} if it assigns
Dirac distribution to any history, and {\em randomised} otherwise. Further, it is {\em memoryless} if its choice only depends on
the last state of the history, and {\em finite-memory} if there is a finite automaton such that $\sigma$ only makes its choice based on
the state the automaton ends in after reading the history.

An MDP $\mdp$, a strategy $\sigma$ and an initial state $s_{in}$ give rise to a probability space $\Prb[s_{in}]\sigma$ defined in a standard way~\cite{KSK76}. For a history $h$ and a measurable set of runs $U$ starting from the last state of $h$, we denote by $\Prb[h]{\sigma}(U)$ the probability
$\Prb[s_{in}]{\sigma}(\{h\cdot\omega \mid \omega \in U \} \mid h)$.
Similarly, for a random variable $X$ we denote by $\Ex[s_{in}]{\sigma}{X}$ the expectation of $X$ in this probability space and by $\Ex[h]{\sigma}{X}$ the expectation $\Ex[s_{in}]{\sigma}{X_h \mid h}$. Here, $X_h$ is defined by $X_h(h\cdot\omega) = X(\omega)$ for runs of the form $h\cdot \omega$, and by $X_h(\omega') = 0$ for all other runs.
We say that a property holds {\em almost surely} (or for almost all runs, or almost every run)
if the probability of the runs satisfying the property is $1$.

\subparagraph{Frequency LTL}
The syntax of {\em frequency LTL (fLTL)} is defined by the equation:
\[
\varphi ::= \alpha \mid \neg \varphi \mid \varphi \vee \varphi \mid \X \varphi \mid \varphi \U \varphi \mid \Gf{p} \varphi
\]
where $\alpha$ ranges over a set $\mathit{AP}$ of atomic propositions. The logic LTL is obtained by omitting the rule for
$\Gf{p} \varphi$.
For Markov chains we study the whole fLTL whereas for MDP, we restrict to a fragment that we call \emph{1-fLTL}. In this fragment, negations only occur immediately preceding atomic propositions, and $\Gf{p}$ operators occur only with $p=1$.

For an infinite sequence
$\gamma=x_1x_2\ldots$ of numbers, we set $\freq{\gamma} := \liminf_{i\to \infty} \frac{1}{i} \sum_{j=1}^i x_i$.
Given a valuation $\nu : S \rightarrow 2^\mathit{AP}$, the semantics of fLTL is defined over a path $\omega = s_0a_0s_1\dots$ of an MDP as follows.
\[
{%
	\begin{array}{llll}
	\omega\models \alpha &\text{iff } \alpha\in \nu(s_0)&
	\ \omega\models \X \varphi &\text{iff } \omega(1) \models\varphi\\
	\omega\models \neg \varphi &\text{iff } \omega \not\models\varphi&
	\ \omega\models \varphi_1\U\varphi_2 &\text{iff } \exists k : \omega(k) \models\varphi_2 \wedge \forall \ell{<}k : \omega(\ell) \models \varphi_1\\
	\omega\models \varphi_1 {\vee} \varphi_2 &\text{iff } \omega {\models}\varphi_1\text{ or }\omega{\models}\varphi_2&
	\ \omega\models \Gf{p}\varphi &\text{iff } \freq{\idf_{\varphi,0}\idf_{\varphi,1}\ldots} \ge p\\
	\end{array}
}
\]
where $\idf_{\varphi,i}$ is $1$ for $\omega$ iff $\omega(i) \models \varphi$, and $0$ otherwise.
We define $\true, \false$, $\wedge$, and $\rightarrow$ by their usual definitions and introduce standard operators
$\F$ and $\G$ by putting $\F\varphi \equiv \true \U \varphi$ and
$\G\varphi \equiv \neg \F \neg \varphi$. 
Finally, we use $\Pr{\sigma}{\varphi}$ as a shorthand for $\Pr{\sigma}{\{\omega \mid \omega\models \varphi\}}$.

\begin{definition}[Controller synthesis]
	The controller synthesis problem asks to decide, given an MDP $\mdp$, a valuation $\nu$, an initial state $\sinit$, an fLTL formula $\varphi$ and
	a probability bound $x$, whether
	$\Pr[\sinit]{\sigma}{\varphi} \ge x$ for some strategy $\sigma$.
\end{definition}

As an alternative to the above problem, we can ask to compute the maximal possible $x$ for which the answer is true.
In the case of Markov chains, we speak about {\sf Satisfaction} problem since there is no strategy to synthesise.

\subparagraph{Rabin automata}
A {\em (deterministic) Rabin automaton} is a tuple $\rabin = (\rstates, \rinit, \ract, \rtra, \racc)$ where $\rstates$
is a finite set of states, $\ract$ is an input alphabet, $\rtra : \rstates \times \ract \rightarrow \rstates$
is a transition function, and $\racc \subseteq \rstates \times \rstates$ is an accepting condition.
A computation of $\rabin$ on an infinite word $\varrho = a_0a_1\ldots$ over the alphabet $\ract$ is the
infinite sequence $\rabin[\varrho] = q_0q_1\ldots$ with $q_0=\rinit$ and $\rtra(q_i,a_i) = q_{i+1}$. A computation
is accepting (or ``$R$ accepts $\varrho$'')
if there is $(E,F) \in \racc$ such that all states of $E$ occur only
finitely many times in the computation, and some state of $F$ occurs in it infinitely many times.
For a run $\omega=s_0a_0s_1a_1\ldots$ and a valuation $\nu$, we use $\nu(\omega)$ for the sequence $\nu(s_0)\nu(s_1)\ldots$
of sets of atomic propositions.

As a well known result~\cite{BP08}, 
for every MDP $\mdp$, valuation $\nu$ and an LTL formula $\varphi$ there is a Rabin automaton $\rabin$ over the alphabet $2^{\mathit{AP}}$ such that $\rabin$ is constructible in doubly exponential time and $\omega\models \varphi$ iff $\rabin$
accepts $\nu(\omega)$. We say that $\rabin$ is equivalent to $\varphi$.
It is not clear whether this result and the definition of Rabin automata can be extended to work with fLTL in a way
that would be useful for our goals. The reason for this is, as pointed out in~\cite[Section 4, Footnote 4]{AT12}, that the frequencies in fLTL depend on the future of a run, and so require non-determinism, which is undesirable in stochastic verification.

\section{Satisfaction problem for Markov Chains}
\label{sec:mc}

In this section we show how to solve the satisfaction problem for MCs and fLTL.
Let us fix a MC $\mc=(S,\Delta)$, an initial state $\sinit$ and fLTL formula $\psi$.
We will use the notion of {\em bottom strongly connected component} (bscc) of $\mc$, which
is a set of
states $S'$ such that for all $s \in S'$ the set of states reachable from $s$
is exactly $S'$. If $s$ is in a bscc, by $\bscc{s}$ we denote the bscc containing $s$.

We first describe the algorithm computing the probability of satisfying $\psi$ from $\sinit$, and
then prove its correctness.

\subparagraph{The algorithm}
The algorithm proceeds in the following steps. First, 
for each state contained in some bscc $B$, we compute the steady-state
frequency $x_s$ of $s$ within $B$. It is the number $\E[s]{\freq{\idf_{s,0}\idf_{s,1}\ldots}}$ where $\idf_{s,i}(\omega)$ equals $1$ if the $i$th state of $\omega$ is $s$, and $0$, otherwise. Afterwards,
we repeat the following steps and keep modifying $\psi$ for as long as it contains any $\Gf{p}$ operators:
\begin{enumerate}
	\item Let $\varphi$ be a LTL formula and $p$ a number such that $\psi$ contains $\Gf{p}\varphi$.
	\item Compute $\P[s]{\varphi}$ for every state $s$ contained in some bscc.
	\item Create a fresh atomic proposition $\alpha_{\varphi,p}$ which is true in a state $s$ iff $s$ is contained in a bscc and
	$\sum_{t\in \bscc{s}} \ssf{t}\cdot \P[t]{\varphi} \ge p$. 
	\item Modify $\psi$ by replacing any occurrence of $\Gf{p}\varphi$ with $\F \alpha_{\varphi,p}$.
\end{enumerate}
Once $\psi$ contains no $\Gf{p}$ operators, it is an LTL formula and we can use off-the-shelf techniques to compute $\P[\sinit]{\psi}$,
which is our desired value.

\subparagraph{Correctness}
The correctness of the algorithm relies on the fact that $\alpha_{\varphi,p}$ labels states in a bscc $B$ if almost every run reaching $B$ satisfies the corresponding frequency constraint:

\begin{proposition}\label{prop:mc-replace}
	For every LTL formula $\varphi$, every number $p$, every bscc $B$ and almost every run $\omega$ that enters $B$ we have
	$\omega \models \Gf{p}{\varphi}$ if and only if $\sum_{t\in B} \ssf{t}\cdot \P[t]{\varphi} \ge p$.
\end{proposition}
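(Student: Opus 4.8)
The plan is to fix a bscc $B$ and argue that the frequency $\freq{\idf_{\varphi,0}\idf_{\varphi,1}\ldots}$ converges almost surely, for runs entering $B$, to the deterministic value $\sum_{t\in B}\ssf{t}\cdot\P[t]{\varphi}$. Once this convergence is established, the liminf in the definition of $\Gf{p}\varphi$ equals that limit for almost every run, and the equivalence $\omega\models\Gf{p}\varphi \iff \sum_{t\in B}\ssf{t}\cdot\P[t]{\varphi}\ge p$ follows immediately. Since $B$ is a bscc, almost every run entering $B$ visits each $t\in B$ infinitely often and the suffix from the point of entry is again a run of the Markov chain restricted to $B$; so without loss of generality I would prove the statement for a run starting in some state of $B$.

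The key step is an ergodic/law-of-large-numbers argument. The quantity $\frac1n\sum_{i<n}\idf_{\varphi,i}$ counts the fraction of positions $i<n$ such that the suffix $\omega(i)$ satisfies $\varphi$. The obstacle — and the reason the proposition needs a careful proof, as the authors warn — is that the events $\{\omega(i)\models\varphi\}$ are \emph{not} determined by the state $s_i$ alone: $\varphi$ is an LTL formula whose truth depends on the entire future of the run, so these events are correlated with each other and with $s_i$. The way around this is to split the sum by the current state: write
\[
\frac1n\sum_{i<n}\idf_{\varphi,i} \;=\; \sum_{t\in B}\;\frac{|\{i<n : s_i = t\}|}{n}\cdot\frac{|\{i<n : s_i = t,\ \omega(i)\models\varphi\}|}{|\{i<n : s_i = t\}|}.
\]
For each fixed $t$, the first factor converges almost surely to $\ssf{t}$ by the standard ergodic theorem for finite Markov chains (this is exactly what defines $\ssf{t}=\E[t]{\freq{\idf_{t,0}\idf_{t,1}\ldots}}$). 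For the second factor, I would look only at the subsequence of times $i_1<i_2<\cdots$ at which $\omega$ is in state $t$; by the strong Markov property the suffixes $\omega(i_1),\omega(i_2),\ldots$ are i.i.d., each distributed as a run of $\mc$ started from $t$, and the event $\omega(i_k)\models\varphi$ is a measurable event of that run. Hence by the strong law of large numbers the second factor converges almost surely to $\P[t]{\varphi}$. Combining the two (finitely many states, so the a.s. convergences combine), the whole expression converges almost surely to $\sum_{t\in B}\ssf{t}\cdot\P[t]{\varphi}$.

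The main obstacle to watch out for is the subsequence/strong-Markov argument in the second factor: one must check that measurability of $\{\omega\models\varphi\}$ (an $\omega$-regular, hence Borel, property) together with the regeneration structure at return times to $t$ genuinely yields an i.i.d. sequence to which the SLLN applies, and that the denominator $|\{i<n:s_i=t\}|$ tends to infinity almost surely (true because $B$ is a bscc and $t$ is positive-recurrent, so $\ssf{t}>0$). A minor point to handle separately is the degenerate case where $B$ is a single state with a self-loop, or more generally where some $\ssf{t}$ could a priori be zero — but inside a bscc every state is recurrent, so $\ssf{t}>0$ for all $t\in B$ and no such degeneracy arises. Finally, I would note that the value $\sum_{t\in B}\ssf{t}\cdot\P[t]{\varphi}$ does not depend on which state of $B$ the run was in when it entered, so the statement is well-formed as phrased for "almost every run that enters $B$".
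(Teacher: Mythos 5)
Your overall decomposition---splitting the time average by the current state $t$, letting the visit frequencies converge to $\ssf{t}$, and averaging the satisfaction indicators along the subsequence of visits to $t$---is exactly the paper's. The gap is in the one step you yourself flag as ``the main obstacle to watch out for'' and then assert goes through: the suffixes $\omega(i_1),\omega(i_2),\ldots$ taken at successive visits to $t$ are \emph{not} independent. They are identically distributed (each is a run from $t$, by the strong Markov property), but $\omega(i_{k+1})$ is itself a suffix of $\omega(i_k)$, so the events $\{\omega(i_k)\models\varphi\}$ are functions of overlapping, nested portions of the trajectory. The regeneration structure makes the \emph{excursions} between returns to $t$ independent, but $\{\omega(i_k)\models\varphi\}$ is not a function of the $k$-th excursion alone---it depends on the entire infinite future. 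For instance with $\varphi=\alpha\U\beta$, if $\alpha$ holds throughout the $k$-th excursion then $\omega(i_{k+1})\models\varphi$ forces $\omega(i_k)\models\varphi$, so consecutive indicators are positively correlated. This is precisely the correlation problem the proposition's discussion warns about; passing to return times does not remove it, so the i.i.d.\ SLLN does not apply and the proof as written does not go through.

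What the paper does instead is to keep your random variables (its $X^t_k$) but prove a quantitative \emph{decay of correlations}: running the deterministic Rabin automaton for $\varphi$ in product with the chain, the truth of $\varphi$ from position $i_k$ becomes almost surely ``determined'' once the product reaches one of its bsccs, which happens before the $j$-th visit to $t$ except with probability exponentially small in $j-k$; on determined runs the conditional expectation of $X^t_j-\mu_t$ vanishes, giving $\E{(X^t_k-\mu_t)(X^t_j-\mu_t)}\le(1-r^M)^{\lfloor|k-j|/M\rfloor}$. A strong law for correlated $0$--$1$ variables with geometrically decaying correlation bounds (Lemma~\ref{lemma:alt-slln}, via Lyons) then yields the almost-sure convergence of your second factor to $\P[t]{\varphi}$. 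To repair your argument you would need either this correlation-decay step or a genuinely different device (e.g.\ Birkhoff's pointwise ergodic theorem applied to the bounded measurable function $\omega\mapsto\idf_{\{\omega\models\varphi\}}$ on the shift space of the bscc under the stationary measure, transferred to $\P[t]{\cdot}$ by absolute continuity); the bare strong Markov property is not enough.
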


The proposition might seem ``obviously true'', but the proof is not trivial.
The main obstacle is that satisfactions of $\varphi$ on $\omega(i)$ and $\omega(j)$
are {\em not independent} events in general: for example if $\varphi \equiv \F \alpha$ and $i<j$, then $\omega(j)\models \varphi$ implies
$\omega(i)\models \varphi$. Hence we cannot apply the Strong law of large numbers (SLLN) for independent random variables or
Ergodic theorem for Markov chains~\cite[Theorems 1.10.1-2]{Norris}, which would otherwise be obvious candidates.
Nevertheless, we can use the following variant of SLLN for correlated events.
\begin{lemma}\label{lemma:alt-slln}
	Let $Y_0,Y_1\ldots$ be a sequence of random variables which only take values $0$ or $1$ and have expectation $\mu$. Assume there are $0<r,c<1$ such that for all $i,j\in \Nset$ we have $\E{(Y_i-\mu)(Y_j - \mu)} \le r^{\lfloor c |i-j|\rfloor}$. Then
	$\lim_{n\rightarrow \infty} \sum_{i=0}^n Y_i / n = \mu$ almost surely.
\end{lemma}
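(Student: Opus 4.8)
The plan is to establish almost-sure convergence via a standard Borel--Cantelli argument applied along a subsequence, combined with a maximal-fluctuation estimate to fill the gaps. Write $S_n = \sum_{i=0}^{n} Y_i$ and $T_n = S_n - (n+1)\mu$, so that $T_n$ is the sum of the centered variables $Z_i := Y_i - \mu$. The goal is to show $T_n / n \to 0$ almost surely. The key quantitative input is the hypothesis $\E{Z_i Z_j} \le r^{\lfloor c|i-j|\rfloor}$, which gives a geometric decay of correlations; summing a geometric series yields, for every $n$,
\[
\E{T_n^2} \;=\; \sum_{i,j=0}^{n} \E{Z_i Z_j} \;\le\; \sum_{i,j=0}^{n} r^{\lfloor c|i-j|\rfloor} \;\le\; C\cdot(n+1)
\]
for a constant $C$ depending only on $r$ and $c$ (each row of the matrix $(r^{\lfloor c|i-j|\rfloor})_{i,j}$ has a bounded sum because $\sum_{k\ge 0} r^{\lfloor ck\rfloor} < \infty$). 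Thus $\E{(T_n/n)^2} = O(1/n)$, which is the $L^2$ version of the law of large numbers; the remaining work is to upgrade this to almost-sure convergence.

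For the upgrade, I would first pass to the subsequence $n_k = k^2$. By Chebyshev, $\Pr{}{|T_{n_k}| > \eps n_k} \le \E{T_{n_k}^2}/(\eps n_k)^2 \le C/(\eps^2 n_k) = C/(\eps^2 k^2)$, which is summable in $k$. By Borel--Cantelli, $T_{n_k}/n_k \to 0$ almost surely. To control the values of $T_n$ for $n$ between consecutive $n_k$ and $n_{k+1}$, note that $T_n - T_{n_k} = \sum_{i=n_k+1}^{n} Z_i$, and for $n_k \le n < n_{k+1}$ we have the crude bound
\[
\max_{n_k \le n < n_{k+1}} |T_n - T_{n_k}| \;\le\; \sum_{i=n_k+1}^{n_{k+1}} |Z_i| \;\le\; n_{k+1} - n_k,
\]
since each $|Z_i| \le 1$ (as $Y_i \in \{0,1\}$ and $\mu\in[0,1]$). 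Because $n_{k+1} - n_k = 2k+1 = o(n_k)$, dividing by $n \ge n_k$ shows this gap contribution is $o(1)$ deterministically. Combining the almost-sure convergence along the subsequence with the vanishing gap estimate gives $T_n/n \to 0$ almost surely, i.e. $S_n/n \to \mu$. (A cleaner variant of the gap step, if one prefers a square-summable decay: apply Chebyshev to $\max_{n_k \le n < n_{k+1}}|T_n - T_{n_k}|^2$ after a second moment bound on the partial sum over the block, but the deterministic $|Z_i|\le 1$ bound already suffices here.)

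The main obstacle, as the paper itself flags, is precisely that one cannot invoke the classical SLLN or the ergodic theorem because the $Y_i$ are correlated; the whole point is to exploit the quantitative decay-of-correlations hypothesis. Technically the only delicate point is verifying that the geometric sum $\sum_{k\ge 0} r^{\lfloor ck\rfloor}$ converges --- this holds because $\lfloor ck\rfloor \ge ck - 1$, so $r^{\lfloor ck \rfloor} \le r^{-1} (r^{c})^{k}$ with $r^c \in (0,1)$, giving a genuine geometric tail --- and then carrying the resulting $O(n)$ bound on $\E{T_n^2}$ through the subsequence argument. Everything else is routine Chebyshev plus Borel--Cantelli bookkeeping; no independence or Markov structure is needed.
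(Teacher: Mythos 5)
Your proof is correct, but it takes a genuinely different route from the paper's. The paper disposes of this lemma in three lines by invoking an off-the-shelf strong law for correlated sequences (Corollary~4 of the cited work of Lyons), applied to the centered variables $Z_i = Y_i - \mu$; the only verification it performs is the summability condition $\sum_{k} r^{\lfloor ck\rfloor}/k < \infty$, which it checks by exactly the same geometric-tail estimate $r^{\lfloor ck\rfloor} \le r^{-1}(r^c)^k$ that you use. You instead give a self-contained proof: the $O(n)$ bound on $\E{T_n^2}$ from row-summability of the correlation matrix, Chebyshev plus Borel--Cantelli along $n_k = k^2$, and a deterministic $o(n_k)$ control of the fluctuation inside each block via $|Z_i|\le 1$. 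Both arguments are sound. One small point worth noting in your favour: the hypothesis only bounds $\E{Z_iZ_j}$ from above (not in absolute value), but since you only ever \emph{upper}-bound the nonnegative quantity $\E{T_n^2}$, this suffices. What the paper's route buys is brevity and a strictly weaker summability hypothesis (Lyons needs only $\sum_k \rho(k)/k<\infty$, not geometric decay, and no uniform bound on the variables); what your route buys is transparency and independence from the external reference, at the mild cost of using the boundedness $Y_i\in\{0,1\}$ in the gap step --- an assumption the lemma grants you anyway.
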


Using the above lemma, we now prove Proposition~\ref{prop:mc-replace} for fixed $\varphi$, $B$, $p$. Let $\rabin$ denote the Rabin automaton equivalent to $\varphi$ and $\mc \times \rabin$ be the Markov chain product of $\mc$ and $\rabin$. 

First, we say that a finite path $s_0\ldots s_k$ of $\mc$ is \emph{determined} if the state $q_k$ reached by $R$ after reading $\nu(s_0 \ldots s_{k-1})$ satisfies that $(s_k,q_k)$ is in a bscc of $\mc \times \rabin$.
We point out that for a determined path $s_0\ldots s_k$, either almost every run of $\cyl(s_0\ldots s_k)$ satisfies $\varphi$, or almost no run of $\cyl(s_0\ldots s_k)$ satisfies $\varphi$.
Also, the probability of runs determined within $k$ steps is at least
$\sum_{i=0}^{\lfloor k/M\rfloor} (1-r^M)^{i} r^M = 1-(1-r^M)^{\lfloor k/M\rfloor}$ where $M$ is the number of states of $\mc \times \rabin$ and $r$ is the minimum probability that occurs in $\mc \times \rabin$.

Now fix a state $s \in B$.
For all $t\in B$ and $i\ge 0$
we define random variables $X^t_i$ over runs initiated in $s$. We let $X^t_i(\omega)$ take value $1$ if $t$ is visited at least $i$ times in $\omega$ and the suffix of $\omega$ starting from the $i$th visit to $t$ satisfies $\varphi$. Otherwise, we let $X^t_i(\omega)=0$.
Note that all $X^t_i$ have a common expected value $\mu_t = \P[t]{\varphi}$.

Next, let $i$ and $j$ be two numbers with $i\le j$.
We denote by $\Omega$ the set of all runs and by $D$ the set of runs $\omega$ for which the suffixes starting from the $i$th visit to $t$ are determined before the $j$th visit to $t$ (note that $D$ can possibly be $\emptyset$).
Because on these determined runs $\E[s]{X^t_j - \mu_t \mid D}=0$, we get
\[
\E[s]{(X^t_i - \mu_t) (X^t_j - \mu_t)}
\le 1-\P[s]{D}
\le (1-r^M)^{\lfloor (i-j)/M\rfloor}
\]
as shown in Appendix~\ref{app:rv-xti}. 
Thus, Lemma~\ref{lemma:alt-slln} applies to the random variables $X^t_i$ for a fixed $t$.
Considering all $t\in B$ together, we show in Appendix~\ref{app:mc-bscc}
that
$
\freq{\idf_{\varphi,0}\idf_{\varphi,1}\ldots} =  
\sum_{t \in\bscc{s}} \ssf{t} \P[s]{\varphi}
$
for almost all runs initiated in the state $s$ we fixed above.
Because almost all runs that enter $B$ eventually visit $s$, and because satisfaction of  $\Gf{p}\varphi$ is independent of any prefix,
the proof of Proposition~\ref{prop:mc-replace} is finished, and we can establish the following.

\begin{theorem}
	The satisfaction problem for Markov chains and fLTL is solvable in time polynomial in the size of the model, and doubly exponential in the size of the formula. 
\end{theorem}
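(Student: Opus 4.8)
The plan is to verify that the algorithm described above runs within the claimed time bounds and produces the correct answer, the latter being an immediate consequence of Proposition~\ref{prop:mc-replace}. I would structure the argument around the two phases of the algorithm: the precomputation of steady-state frequencies, and the iterated replacement of $\Gf{p}$ subformulas by $\F \alpha_{\varphi,p}$ atomic propositions.

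First I would handle correctness. Each iteration of the loop picks an innermost frequency subformula $\Gf{p}\varphi$ in which $\varphi$ is an \emph{LTL} formula (such a subformula always exists while $\psi$ still contains a frequency operator, since one can take a $\Gf{p}$ operator of minimal nesting depth). By Proposition~\ref{prop:mc-replace}, for every bscc $B$ the label $\alpha_{\varphi,p}$ marks exactly those states from which almost every continuation satisfies $\Gf{p}\varphi$; since almost every run eventually enters a bscc, and since satisfaction of $\Gf{p}\varphi$ does not depend on any finite prefix, we have that on almost every run, $\Gf{p}\varphi$ holds iff the run eventually reaches an $\alpha_{\varphi,p}$-labelled state, i.e.\ iff $\F\alpha_{\varphi,p}$ holds. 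Hence the modified formula $\psi'$ is almost-surely equivalent to $\psi$, so $\Pr[\sinit]{}{\psi'} = \Pr[\sinit]{}{\psi}$. Iterating, after all frequency operators have been eliminated we are left with an LTL formula whose probability equals $\Pr[\sinit]{}{\psi}$, and this is computed by a standard LTL-on-Markov-chains procedure~\cite{BP08}.

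Next I would bound the running time. The steady-state frequencies $x_s$ are obtained by identifying the bsccs (linear time in the model) and solving, in each bscc, a linear system of size polynomial in the number of states; this is polynomial in the size of $\mc$ and independent of $\psi$. The main loop runs at most once per $\Gf{p}$ operator occurring in $\psi$, hence at most $|\psi|$ times. In each iteration the expensive operation is computing $\Pr[s]{}{\varphi}$ for all states $s$ in bsccs; by the cited result this costs time polynomial in $|\mc|$ and doubly exponential in $|\varphi| \le |\psi|$. Evaluating the threshold $\sum_{t\in\bscc{s}} x_t \cdot \Pr[t]{}{\varphi} \ge p$ and performing the syntactic substitution is then cheap. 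The crucial observation here is that each substitution replaces $\Gf{p}\varphi$ by $\F\alpha_{\varphi,p}$, which does \emph{not} increase the number of remaining frequency operators and increases the formula size only by a constant, so the total size stays $O(|\psi|)$ and the final LTL model-checking call also costs time polynomial in $|\mc|$ and doubly exponential in $|\psi|$. Summing these contributions gives the claimed bound.

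I do not expect a genuine obstacle here, as the theorem is essentially a bookkeeping corollary of Proposition~\ref{prop:mc-replace}. The one point deserving a sentence of care is the justification that an innermost frequency subformula always has an LTL body: this is why the replacement by $\F\alpha_{\varphi,p}$ is legitimate (the new atomic proposition is well-defined only when $\varphi$ is an LTL formula, so that $\rabin$ equivalent to $\varphi$ exists and $\Pr[t]{}{\varphi}$ is meaningful), and why the loop makes progress (each step strictly decreases the number of $\Gf{p}$ operators). The other mildly delicate point is that the new atomic propositions $\alpha_{\varphi,p}$ are fresh and their valuation is fixed on the same state space, so the repeated applications of Proposition~\ref{prop:mc-replace} — each to a Markov chain whose valuation has been augmented by previously introduced $\alpha$'s — remain valid; I would note that the proposition is stated and proved for an arbitrary valuation and arbitrary LTL $\varphi$, which covers this.
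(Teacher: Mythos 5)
Your proposal is correct and follows essentially the same route as the paper: the same iterative replacement of innermost $\Gf{p}\varphi$ by $\F\alpha_{\varphi,p}$, with correctness resting on Proposition~\ref{prop:mc-replace} together with prefix-independence of $\Gf{p}\varphi$ and the fact that almost every run enters a bscc, and the same complexity accounting (linearly many LTL model-checking calls, each polynomial in the model and doubly exponential in the formula). Your added remarks on why an innermost frequency subformula has an LTL body and why the fresh propositions keep the proposition applicable are consistent with, and slightly more explicit than, the paper's own presentation.
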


\section{Controller synthesis for MDPs}
\label{sec:mdp}

We now proceed with the controller synthesis problem for MDPs  and 1-fLTL. The problem for this restricted fragment of 1-fLTL is still highly non-trivial. In particular, it is not equivalent to synthesis for the \emph{LTL} formula where every $\Gf{1}$ is replaced with $\G$. Indeed, for satisfying any LTL formula, finite memory is sufficient, while for 1-fLTL, the following theorem shows that infinite memory may be necessary.

\begin{theorem}\label{thm:inf}
	There is a 1-fLTL formula $\psi$ and a Markov decision process $\mdp$ with valuation $\nu$ such that the answer to the controller synthesis problem is ``yes'', but there is no finite-memory strategy witnessing this.
\end{theorem}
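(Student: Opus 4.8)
The plan is to exhibit a small MDP together with a $1$-fLTL formula that forces the controller to make an unbounded sequence of choices, each of which can only be "amortised" by a later, even rarer choice. Concretely, I would take an MDP with a central "waiting" state $s_0$ where the controller repeatedly picks either to stay (action $w$) or to trigger a one-step "bad" excursion (action $m$), where the bad excursion momentarily violates a subformula $\varphi$, say $\varphi \equiv \ap{q}\rightarrow\X\ap{r}$, exactly as in the example MDP of Figure~\ref{fig:intro}; additionally there should be a purely stochastic component (the $0.5$ self-loop and the random arrival of $\ap{q}$) that prevents the controller from predicting when $\varphi$-violations occur for free. The formula I would use is $\psi \equiv \G\F\ap{m} \;\wedge\; \Gf{1}(\ap{q}\rightarrow\X\ap{r})$: the first conjunct forces infinitely many $m$-excursions, while the second forces the long-run frequency of $\varphi$-violations to be $0$.

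The key observation making infinite memory necessary is the following tension. A finite-memory strategy $\sigma$ induces a finite Markov chain on the product of $\mdp$ with $\sigma$'s memory automaton; on this chain, the long-run frequency of any event is almost surely a fixed rational number determined by the bscc reached. If $\sigma$ takes action $m$ infinitely often (as $\G\F\ap{m}$ demands on almost every run), then in the relevant bscc the state(s) witnessing a $\varphi$-violation are visited with some \emph{strictly positive} steady-state frequency — here I would invoke the same steady-state / ergodic reasoning used for Markov chains in Section~\ref{sec:mc}, in particular that $\freq{\cdot}$ of a state in a bscc equals its steady-state probability $x_s > 0$ almost surely. Hence $\Gf{1}(\ap{q}\rightarrow\X\ap{r})$ fails with positive probability, so no finite-memory $\sigma$ can achieve the bound. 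Conversely, I would exhibit an infinite-memory strategy that plays $m$ at steps $2^k$ (or more carefully, spaces the $m$-actions so that after the $k$-th excursion it waits $\ge k$ further steps before the next), which makes $\G\F\ap{m}$ hold surely while driving the frequency of $m$-excursions — and therefore, after accounting for the stochastically-induced $\varphi$-violations, the frequency of all $\varphi$-violations — to $0$ almost surely; here one must also argue that the random $\ap{q}$-arrivals combined with the controller never choosing $m$ still yield $\varphi$ almost always, which follows because on the "good" part of the MDP $\ap{q}\rightarrow\X\ap{r}$ is designed to hold. Thus the answer to the synthesis problem is "yes" with $x=1$, yet witnessed only by infinite memory.

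The main obstacle I anticipate is making the finite-memory impossibility argument fully rigorous in the presence of the stochastic choices. One has to be careful that "$m$ taken infinitely often" is not merely possible but forced with probability one by the $\G\F\ap{m}$ conjunct, and then that in \emph{every} bscc of the finite product chain in which $m$ appears, the $\varphi$-violating state genuinely has positive steady-state frequency — this requires checking that the MDP is built so that an $m$-action is always immediately followed (or accompanied) by a $\varphi$-violation that cannot be "hidden". A secondary subtlety is ruling out the degenerate escape route flagged in the introduction, namely a strategy that does something insensible on a zero-frequency set of steps: since we are proving a \emph{negative} result about finite memory, this is not a worry here, but it does mean the construction of the witnessing infinite-memory strategy must be explicit enough that its correctness is transparent. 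I would organise the proof as: (i) describe $\mdp$, $\nu$, $\psi$; (ii) construct the infinite-memory strategy and verify $\Pr[\sinit]{\sigma}{\psi}=1$; (iii) assume a finite-memory $\sigma$, pass to the induced finite chain, and derive a contradiction from the steady-state frequency of the $\varphi$-violating state being bounded below by a positive constant on every recurrent class reachable under $\sigma$.
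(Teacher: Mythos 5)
Your proposal is correct and follows essentially the same route as the paper: the same MDP and formula $\G\F\ap{m}\wedge\Gf{1}(\ap{q}\rightarrow\X\ap{r})$, the same infinite-memory strategy playing $m$ at exponentially spaced visits to $s_0$, and the same dichotomy for finite-memory strategies (either $m$ is eventually never chosen, killing $\G\F\ap{m}$, or the violating state $s_4$ acquires positive steady-state frequency in the induced finite chain, killing $\Gf{1}$). Your step (iii), making the bscc/steady-state argument explicit, is in fact slightly more detailed than the paper's proof sketch.
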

\begin{proof}[Proof idea]
	Consider the MDP from Figure~\ref{fig:intro} together with the formula
	$\psi = \G\F \ap{m} \;\land\; \Gf{1} (\ap{q} \rightarrow \X \ap{r} )$.
	Independent of the strategy being used, no run initiated in $s_4$ satisfies
	the subformula $\ap{q} \rightarrow \X \ap{r}$, while every run initiated in any other state satisfies
	this subformula. This means that we need the frequency of visiting $s_4$ to be $0$.
	The only finite-memory strategies achieving this are those that from some history
	on never choose to go right in the controllable state. However, under such strategies the formula  
	$\G\F \ap{m}$ is not almost surely satisfied.
	On the other hand, the infinite-memory strategy that on $i$-th visit to $s_0$ picks $\ap{m}$ if and only if $i$ is of the form $2^j$ for some $j$ satisfies $\psi$.
	
	Note that although the above formula requires infinite memory due to ``conflicting'' conjuncts, infinite memory is needed already for simpler
	formulae of the form $\Gf{1}(\ap{a} \U \ap{b})$.
\end{proof}

The above result suggests that it is not possible to easily re-use verification algorithms for
ordinary LTL. Nevertheless, our results allow us to establish the following theorem.

\begin{theorem}\label{thm:main-mdp}
	The controller-synthesis problem for MDPs and 1-fLTL is solvable in time polynomial
	in the size of the model and doubly exponential in the size of the formula.
\end{theorem}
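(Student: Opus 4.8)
\textbf{Proof plan for Theorem~\ref{thm:main-mdp}.}
The plan is to reduce the 1-fLTL controller-synthesis problem to a finite collection of ordinary LTL
objectives, one for each choice of which $\Gf{1}$ subformulas are to be satisfied with frequency exactly $1$
in each ``region'' of the MDP, and then to solve these by the standard product construction with deterministic
Rabin automata. First I would translate each $\Gf{1}\varphi$ into a reachability/B\"uchi-like constraint on
the end components of a suitable product: in analogy with the Markov-chain case (Proposition~\ref{prop:mc-replace}),
in a bottom strongly connected behaviour the frequency of a subformula $\varphi$ equals the steady-state mass of
the states from which $\varphi$ holds almost surely, so $\Gf{1}\varphi$ holds iff the strategy, from some point on,
confines itself (with probability $1$ in the limit, i.e.\ with zero frequency of deviation) to states satisfying
$\varphi$ almost surely. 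The subtlety flagged in the introduction is that a strategy can visit ``bad'' states with
zero frequency, so I would work with the product of $\mdp$ with the Rabin automata for all maximal proper
subformulas, and within that product identify the maximal end components where all relevant $\varphi$'s are
almost-surely satisfied; the controller must reach such an end component and then spend a frequency-$1$ fraction of
time inside it, while still meeting the remaining LTL part of $\psi$ (the operators outside the $\Gf{1}$'s) and the
$\G\F$-type fairness conditions.

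The key steps, in order, would be: (1)~Bring $\psi$ to a normal form and, by the standard blow-up, assume each
$\Gf{1}$ subformula is of the form $\Gf{1}\varphi_k$ with $\varphi_k$ an LTL formula; build a deterministic Rabin
automaton $\rabin_k$ for each $\varphi_k$ and a Rabin automaton $\rabin_0$ for the LTL ``skeleton'' of $\psi$
(the formula with each $\Gf{1}\varphi_k$ replaced by a fresh proposition), and form the product MDP
$\mdp' = \mdp \times \rabin_0 \times \prod_k \rabin_k$. (2)~In $\mdp'$, for each $k$ compute the set $T_k$ of
states from which $\varphi_k$ holds almost surely (this is a standard almost-sure LTL computation on the product
with $\rabin_k$). (3)~Characterise, via an argument mirroring Lemma~\ref{lemma:alt-slln} and
Proposition~\ref{prop:mc-replace} lifted to MDPs, that a strategy achieves $\Gf{1}\varphi_k$ iff almost every run
eventually stays, with frequency $1$, inside an end component contained in $T_k$ --- and crucially that one can
always modify the strategy so that the residual frequency-$0$ excursions are removed without affecting the LTL
skeleton or the $\G\F$ conjuncts, yielding a strategy that genuinely stays inside $\bigcap_{k \in K} T_k$ for the
relevant index set $K$ of ``active'' frequency constraints in the chosen end component. (4)~Enumerate the maximal
end components $E$ of $\mdp'$; for each, determine the set $K(E)$ of frequency constraints that $E$ can satisfy
(those $k$ with $E \subseteq T_k$, after trimming) together with which fresh propositions are thereby forced true,
check that the Rabin acceptance of $\rabin_0$ and all other $\rabin_k$ with $k \notin K(E)$ can be met inside $E$,
and mark $E$ as ``good'' if so. (5)~Solve a maximal-reachability problem in $\mdp'$ for reaching the union of good
end components; the optimal value equals the answer to the synthesis problem, and a strategy is assembled from a
finite-memory reachability strategy followed, inside each good end component, by the infinite-memory ``exponentially
sparse'' schedule of the kind used in the proof of Theorem~\ref{thm:inf} to reconcile the $\G\F$ conjuncts with the
frequency-$1$ requirement. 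The complexity bound follows since the product has size polynomial in $|\mdp|$ and
doubly exponential in $|\psi|$, and all subroutines (almost-sure LTL, end-component decomposition, maximal
reachability) are polynomial in the size of the product.

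The main obstacle I expect is step~(3): proving that frequency-$1$ satisfaction of $\Gf{1}\varphi_k$ forces the
run into an end component inside $T_k$ \emph{and} that the unavoidable zero-frequency misbehaviour can be surgically
removed while simultaneously preserving all other conjuncts --- including other frequency constraints and
$\G\F$-fairness --- without lowering the probability. This is exactly the ``insensible behaviour'' issue raised in
the introduction: a naive attempt to eliminate bad excursions can destroy a $\G\F\ap{m}$-type requirement, so the
argument must carefully interleave a witnessing schedule (visiting the required states on a set of steps of
frequency $0$, e.g.\ at positions $2^j$) with the confinement to $\bigcap_{k\in K(E)} T_k$. A secondary difficulty
is handling nested $\Gf{1}$ operators correctly under the normal-form blow-up, and ensuring the end-component
``goodness'' check is well-defined when different $\Gf{1}$ subformulas impose overlapping almost-sure constraints;
I would handle this by processing the subformulas in order of nesting depth, so that each inner frequency
proposition has already been resolved before the outer one is analysed.
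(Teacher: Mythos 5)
Your high-level architecture (deterministic Rabin automata, a product MDP, identifying ``good'' regions, then maximal reachability) resembles the paper's, and you correctly locate the difficulty in your step~(3). But that step contains a genuine gap that the rest of your construction cannot repair: a product $\mdp \times \rabin_0 \times \prod_k \rabin_k$ with \emph{one} copy of each $\rabin_k$ cannot express $\Gf{1}\varphi_k$. The formula $\Gf{1}\varphi_k$ quantifies over the fraction of positions $i$ with $\omega(i)\models\varphi_k$, and whether $\omega(i)\models\varphi_k$ depends on the entire future of the run from position $i$; to track it you need a fresh instance of $\rabin_k$ launched at essentially every position, not the single instance launched at time $0$. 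Relatedly, your set $T_k$ of ``states from which $\varphi_k$ holds almost surely'' is not strategy-independent in an MDP, and even if you define it via the existence of \emph{some} almost-surely winning strategy, the run has a single shared future that must simultaneously make all launched instances accept \emph{and} satisfy the Rabin condition of the skeleton and the remaining conjuncts; these local strategies can conflict (this is exactly why Theorem~\ref{thm:inf} forces infinite memory, and why the Markov-chain argument of Proposition~\ref{prop:mc-replace} does not lift: in a Markov chain the future from each state is fixed). So the characterisation ``$\Gf{1}\varphi_k$ holds iff the run stays with frequency $1$ in an end component inside $T_k$'' cannot even be stated on your product, let alone proved, and steps~(4)--(5) have no correct object to operate on.

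The paper's resolution is a different product $\mdp_\otimes$ whose second component is a subset of a finite set $\mathcal{C}$ of automaton configurations, representing an unbounded \emph{collection} of concurrently running instances of the automata $\rabin_{\varphi_i,I}$ (each element of the subset standing for arbitrarily many instances in the same configuration, tagged with a commitment to an accepting pair). The witnessing strategy alternates ``accumulating'' phases, which launch new instances and grow in length so that they have limit frequency $1$, with ``reaching'' phases that drive the entire collection to a ``fulfilled'' state; Lemma~\ref{prop:screwing-paths} extracts finite-memory strategies and sets $M,N$ realising these two phases from an arbitrary satisfying strategy, using L\'evy's zero-one law and a delicate elimination of the zero-frequency misbehaviour, and Proposition~\ref{prop:reach} splits the problem into reaching a set $\Upsilon$ and almost-surely winning from it. That collection-tracking product and the accumulate/reach decomposition are the essential ideas your proposal is missing. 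Your guessing of the satisfied constraint set $K$ does echo the paper's index sets $I$ and substituted formulas $\xi^I$, and your final reachability step matches the last step of the algorithm, but these are the easy parts.
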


For the rest of this section, in which we prove Theorem~\ref{thm:main-mdp}, we fix an MDP $\mdp$, valuation $\nu$, an initial state $\sinit$,
and a 1-fLTL formula $\psi$. The proof is given in two parts. In the first part, in Section~\ref{sec:strat} we show that the controller-synthesis problem is equivalent to problems of reaching a certain set $\Upsilon$ and then ``almost surely winning'' from this set. To prove this, the ``almost surely winning'' property will further be reduced to finding certain set of states and actions on a product MDP (Lemma~\ref{prop:screwing-paths}). In the second part of the proof, given in Section~\ref{sec:alg}, we will show that all the above sets can be computed.

\subsection{Properties of satisfying strategies}
\label{sec:strat}

Without loss of generality suppose that the formula $\psi$ does not contain $\Gf{1}$ as the outermost operator, and that it contains $n$ subformulas of the form $\Gf{1} \varphi$. Denote these subformulas $\Gf{1}\varphi_1,\ldots \Gf{1}\varphi_n$. 
For example, $\psi = \ap{i} \rightarrow \big( \G (\ap{q} \rightarrow \ap{a}) \,\land\, \Gf{1} ( \ap{p_1} \U \ap{r} \lor \Gf{1} \ap{a} ) \big)
$
contains $\varphi_1= \ap{p_1} \U \ap{r} \lor \Gf{1} \ap{a}$ and $\varphi_2 = \ap{a}$.

The first step of our construction is to convert these formulae $\psi, \varphi_1,\ldots,\varphi_n$ to equivalent Rabin automata. However, as the formulae contain $\Gf{1}$ operators, they cannot be directly expressed using Rabin automata (and as pointed out by \cite{AT12}, there is a fundamental obstacle preventing us from extending Rabin automata to capture $\Gf{p}$). 

To overcome this, we replace all occurrences of $\Gf{1}\varphi_i$ in such formulae by either $\true$ or $\false$, to capture that the frequency constraint is or is not satisfied on a run. 
Such a replacement can be fixed only after a point in the execution is reached where it becomes clear which frequency constraints in $\psi$ can be satisfied.
For a formula $\xi\in \{\psi, \varphi_1,\ldots \varphi_n\}$, any subset $I \subseteq \{1, \ldots,n \}$ of satisfied constraints defines a LTL formula $\xi^I$ obtained from 
$\xi$ by replacing all subformulas $\Gf{1}\varphi_i$ (not contained in any other $\Gf{1}$) with $\true$ if $i\in I$ and with $\false$ if $i\not\in I$. The Rabin automaton for $\xi^I$ is then denoted by $\rabin_{\xi,I}$.
For the formula $\psi$ above, we have, e.g.,
$\psi^{\{1\}} = \psi^{\{1,2\}} = \ap{i} \rightarrow \big( \G (\ap{q} \rightarrow \ap{a}) \,\land\, \true \big)$, and $\varphi_1^{\emptyset} = \ap{p_1} \U \ap{r} \lor \false$. 

We use $Q$ for a disjoint union of the state spaces of these distinct Rabin automata, and $Q_\psi$ for a disjoint union of the state spaces of the automata $\rabin_{\psi,I}$, called \emph{main} automata, for all $I$. 
Finally, for $q\in Q$ belonging to a Rabin automaton $\rabin$ we denote by $\rabin^q$ the automaton obtained from $\rabin$ by changing its initial state to $q$.

Let us fix a state $s$ of $\mdp$ and a state $q$ of $R_{\psi,I}$ for some $I \subseteq \{1,\ldots,n\}$. We say that a run $s_0a_0s_1a_1\ldots$ \emph{reaches} $(s,q)$ if for some $k$ we have $s = s_k$ and $q$ is the state reached by the main automaton $\rabin_{\psi,I}$ after reading $\nu(s_0a_0s_1\ldots s_{k-1})$. Once $(s,q)$ is reached, we say that a strategy $\sigma'$ is \emph{almost-surely winning} from $(s,q)$ if $\Prb[s]{\sigma'}$ assigns probability $1$ to the set of runs $\omega$ such that $\nu(\omega)$ is accepted by $\rabin_{\psi,I}^q$, and $\omega\models G^{1} \varphi_i$ whenever%
\footnote{Note that the product construction that we later introduce does not give us ``iff'' here. This is also why we require the negations to only occur in front of atomic propositions} we have $i\in I$.

\begin{proposition}\label{prop:reach}
	There is a strategy $\sigma$ such that $\Pr{\sigma}{\psi}=x$
	if and only if
	there is a set $\Upsilon \subseteq S\times Q_\psi$ for which the following two conditions are satisfied:
	\begin{enumerate}
		\item\label{item:2a} There is a strategy $\sigma'$ such that $\Pr{\sigma'}{\{ \omega \mid \text{$\omega$ reaches a pair from $\Upsilon$} \}} = x$.
		\item\label{item:2b} For any $(s,q) \in \Upsilon$ there is $\sigma_{s,q}$ almost-surely winning from $(s,q)$.
	\end{enumerate}
\end{proposition}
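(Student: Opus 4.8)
The plan is to prove the two directions separately, with the backward direction being the routine one and the forward direction requiring genuine work. For the backward direction ($\Leftarrow$), given $\Upsilon$ and the strategies $\sigma'$ and $(\sigma_{s,q})_{(s,q)\in\Upsilon}$ from items~\ref{item:2a} and~\ref{item:2b}, I would build $\sigma$ by running $\sigma'$ until the induced run reaches some pair $(s,q)\in\Upsilon$ (picking, say, the first such pair along the run, with ties broken by a fixed order on $\Upsilon$), and then switching permanently to $\sigma_{s,q}$. I must argue this is well-defined as a strategy, i.e.\ that ``first reaching a pair of $\Upsilon$'' is detectable from the history: this is fine because the state of each main automaton $\rabin_{\psi,I}$ after reading $\nu$ of the history is a deterministic function of the history. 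On the event that a pair of $\Upsilon$ is reached (probability exactly $x$ under $\sigma'$, hence under $\sigma$ since the prefix behaviour agrees), the almost-sure winning property of $\sigma_{s,q}$ gives that $\nu(\omega)$ is accepted by $\rabin_{\psi,I}^q$ and $\omega\models\Gf{1}\varphi_i$ for all $i\in I$; since acceptance of the suffix by $\rabin_{\psi,I}^q$ together with the validity of the constraints indexed by $I$ implies $\omega\models\psi$ (here $I$ is exactly the set of frequency constraints forced true, and $\psi^I$ is what $\rabin_{\psi,I}$ checks), we get $\Pr{\sigma}{\psi}\ge x$; the converse inequality is not needed for this direction if we are only matching the value $x$, but to get equality I would note that $\sigma$ can also be taken to be $\le x$ by the forward direction applied to it, or simply phrase both directions around the supremum.

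For the forward direction ($\Rightarrow$), let $\sigma$ be a strategy with $\Pr{\sigma}{\psi}=x$. The idea is to read off $\Upsilon$ from the behaviour of $\sigma$: for almost every run $\omega$ satisfying $\psi$, I want to identify a ``commitment point'' — a finite prefix after which the set of frequency constraints $\Gf{1}\varphi_i$ that hold on $\omega$ is already determined — and let $(s,q)$ be the configuration of $\mdp\times\rabin_{\psi,I}$ at that point, where $I$ is the committed set. The key structural fact is that whether $\omega\models\Gf{1}\varphi_i$ is a tail property (independent of finite prefixes), and more importantly that if $\omega\models\psi$ then $\omega$ ``eventually behaves like a run of $\psi^I$'' for $I=\{i : \omega\models\Gf{1}\varphi_i\}$; combined with the fact that the main automaton $\rabin_{\psi,I}$ is deterministic, after enough steps the run is in a configuration $(s,q)$ from which $\sigma$ (viewed as a strategy from that history) is almost-surely winning in the sense of item~\ref{item:2b}. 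I would make this precise by defining, for each $I$, the set $W_I$ of configurations $(s,q)\in S\times Q_{\rabin_{\psi,I}}$ from which \emph{some} strategy is almost-surely winning, then showing that almost every $\psi$-run eventually enters $\bigcup_I W_I$ (at a configuration matching its own committed $I$); taking $\Upsilon := \bigcup_I W_I$ and $\sigma' := \sigma$, item~\ref{item:2b} holds by construction, and item~\ref{item:2a} holds because the set of $\psi$-runs is (up to a null set) contained in the set of runs reaching $\Upsilon$, so $\Pr{\sigma}{\text{reach }\Upsilon}\ge x$; for the reverse inequality I would use that reaching a pair of $\Upsilon$ means reaching an almost-surely-winning configuration, but under $\sigma'=\sigma$ itself we may overshoot $x$, so instead I would define $\Upsilon$ to contain only those configurations and let $\sigma'$ be a \emph{truncation} of $\sigma$ that stops ``trying'' once it reaches $\Upsilon$ — or, cleaner, argue the whole equivalence at the level of $x = \sup$ over strategies and over sets $\Upsilon$, so that only the inequalities ``$\exists\sigma:\Pr{\sigma}{\psi}\ge x \iff \exists\Upsilon:\ldots\ge x$'' are needed.

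The main obstacle is the forward direction's claim that almost every run satisfying $\psi$ reaches an almost-surely-winning configuration — in other words, that we may ``commit'' to the set $I$ of satisfied frequency constraints after finitely many steps and that from the resulting configuration $\sigma$'s conditional behaviour is genuinely almost-surely winning rather than merely winning with positive probability. The subtlety flagged in the paper's introduction is exactly here: a satisfying strategy may behave ``insensibly'' on a zero-frequency set of positions, and a run may satisfy $\Gf{1}\varphi_i$ without $\rabin_{\varphi_i}$-acceptance holding on every suffix. I expect to handle this with a Lévy-type zero-one / martingale argument: the conditional probability (given the history of length $k$) that the run satisfies $\psi$-with-committed-$I$ converges almost surely to the indicator of that event, so on the event itself it eventually exceeds, say, $1/2$; a standard argument then upgrades ``conditional probability $\ge 1/2$ of almost-sure-type event'' along a tail $\sigma$-field to the existence of an almost-surely-winning continuation from the reached configuration, using that the property in question (Rabin acceptance $+$ frequency-1 constraints) is a tail property preserved under the product construction. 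Formalising ``almost-surely winning from $(s,q)$'' as a property of the configuration rather than of the history — i.e.\ proving it only depends on $(s,q)$, not on how the history reached it — is where the determinism of the main automata and the shift-invariance of the objective are used, and is the technical heart I would expect to spend the most care on.
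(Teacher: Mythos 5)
Your $\Leftarrow$ direction and your treatment of ``reaching $\Upsilon$'' via L\'evy's zero-one law match the paper's proof almost exactly: the paper likewise defines the set $\Gamma$ of commitment prefixes $h$ with $\Pr{\sigma}{\{\omega \mid \omega\models\psi^{I}\wedge I_\omega=I\}\mid h}\ge 1-\lambda$ and reads $\Upsilon$ off the (deterministic) automaton state at the end of $h$; your worry about overshooting $x$ is a real but minor looseness that the paper's own write-up shares.

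The genuine gap is the step you dispose of as ``a standard argument then upgrades conditional probability $\ge 1/2$ of an almost-sure-type event \dots to the existence of an almost-surely-winning continuation.'' For ordinary $\omega$-regular (Rabin) tail objectives in a finite MDP this upgrade is indeed standard, because finite-memory strategies suffice and the set of achievable values from a configuration is finite, so a value close enough to $1$ forces value $1$. Here the objective from $(s,q)$ is Rabin acceptance \emph{conjoined with} the non-$\omega$-regular constraints $\Gf{1}\varphi_i$ for $i\in I$, and Theorem~\ref{thm:inf} shows the witnessing strategies may need infinite memory; there is no finite-memory sufficiency and hence no finite value set to appeal to, so the ``standard'' upgrade does not apply. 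This is precisely where the paper spends essentially all of its effort: it builds the product $\mdp_\otimes$ tracking unboundedly many instances of the automata $\rabin_{\varphi_i,I}$, proves (Lemma~\ref{lemma:brew-often}, via Fatou plus L\'evy) that a strategy satisfying the frequency constraints has infinitely many ``good'' prefixes from which all freshly started instances accept with probability at least $1-\lambda$, extracts from $\sigma$ finite-memory ``accumulating'' and ``reaching'' strategies $\pi$ and $\zeta$ on suitable sets $M,N$ (Lemma~\ref{prop:screwing-paths}), and only then assembles an infinite-memory strategy alternating between progressively longer accumulating phases and reaching phases so that all accumulated automata accept almost surely and the accumulation happens with frequency $1$ (Lemma~\ref{lemma:prod-to-orig}). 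Your proposal contains neither this construction nor a substitute for it, so item~\ref{item:2b} --- the heart of the proposition --- remains unproved. A secondary remark: defining $\Upsilon=\bigcup_I W_I$ with $W_I$ the almost-surely-winnable configurations makes item~\ref{item:2b} true by fiat, but then the claim that almost every $\psi$-run enters $\bigcup_I W_I$ is exactly the same unproved statement in different clothing.
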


\noindent
Intuitively, the proposition relies on the fact that if $\Gf{1} \varphi_i$ holds on a run, then it holds on all its suffixes, and says that any strategy $\sigma$ can be altered so that almost surely there will be a prefix after which we know which of the $\Gf{1} \varphi_i$ will be satisfied.

\begin{example}\label{ex:upsilon-intuition}
	Let us first illustrate the set $\Upsilon$ on a formula $ \X q \;\land\; \G\F \ap{m} \;\land\; \Gf{1} (\ap{q} \rightarrow \X \ap{r} )$ that can be satisfied on the MDP from Figure~\ref{fig:intro} with probability $0.5$. Figure~\ref{fig:rabin} shows Rabin automata for the formulae
	$\psi^{\{1\}} = \X \ap{q} \land \G\F \ap{m} \land \true$ (left) and $\varphi_1^{\{1\}} = \ap{q} \rightarrow \X\ap{r}$. 
	In this simple example, the ``decision'' whether the formula will be satisfied (and which $\Gf{1}$ subformulas will be satisfied) comes after the first step. Thus, we can set $\Upsilon = \{(s_1,q_1)\}$.
\end{example}

\begin{wrapfigure}[10]{r}{0.35\textwidth}
	\begin{tikzpicture}
	\begin{scope}[x=0.4cm,y=0.4cm]
	\node[state,initial, initial below,initial text=] (a) {$q_0$}; 
	\node[state] (b) [right=1 of a] {$q_1$};
	\node[state] (c) [above=1 of b] {$q_2$};
	\node[state,accepting] (d) [left=1 of c] {$q_3$};
	\node[state] (e) [below=1 of b] {$q_4$};
	\path[-latex'] 
	(a) edge[] node[below=0.3]{$\true$} (b)
	(b) edge[] node[right=0.3]{$\neg \ap{q}$} (e)
	(b) edge[] node[right]{$\ap{q}$} (c)
	(c) edge[bend left] node[auto]{$\ap{m}$} (d)
	(d) edge[bend left] node[auto]{$\true$} (c)
	(c) edge[loop right] node[pos=0.7,below]{$\neg\ap{m}$} (c)
	(e) edge[loop right] node[pos=0.7,below]{$\true$} (e)
	;
	\end{scope}
	
	\begin{scope}[xshift=2.3cm,x=0.4cm,y=0.4cm]
	\node[state,initial, initial below,initial text=] (a) {$q_5$}; 
	\node[state] (b) [right=1 of a] {$q_6$};
	\node[state,accepting] (c) [above=1 of b] {$q_7$};
	\node[state] (d) [below=1 of b] {$q_8$};
	\path[-latex'] 
	(a) edge[] node[below, pos=0.5]{$\ap{q}$} (b)
	(a) edge[] node[above, pos=0.2]{$\neg\ap{q}$} (c)
	(b) edge[] node[auto,swap]{$\ap{r}$} (c)
	(b) edge[] node[auto]{$\neg \ap{r}$} (d)
	(c) edge[loop right] node[pos=0.3,above]{$\true$} (c)
	(d) edge[loop right] node[pos=0.7,below]{$\true$} (d)
	;
	\end{scope}
	\end{tikzpicture}
	\caption{Example Rabin aut.\label{fig:rabin}}
\end{wrapfigure}

We now prove Proposition~\ref{prop:reach}. The direction $\Leftarrow$ is straightforward. It suffices to define $\sigma$ so that it behaves as $\sigma'$ initially until it reaches some $(s,q) \in \Upsilon$ for the first time; then it behaves as $\sigma_{s,q}$.

Significantly more difficult is the direction $\Rightarrow$ of Proposition~\ref{prop:reach} that we address now. We fix a strategy $\sigma$ with $\Pr{\sigma}{\psi}=x$.
The proof is split into three steps. We first show how to identify the set $\Upsilon$, and then we show that items \ref{item:2a} and \ref{item:2b} of Proposition~\ref{prop:reach} are satisfied. The last part of the proof requires most of the effort.
In the proof, we will need to eliminate some unlikely events, and for this we will require that their probability is small to start with.
For this purpose, we fix a {\em very small} positive number \label{page:lambda}$\lambda$; to avoid cluttering of notation, we do not give
a precise value of $\lambda$, but instead point out that it needs to be chosen such that any numbers that depend on
it in the following text have the required properties (i.e. are sufficiently small or big; note that such choice is indeed possible). 
We should stress that $\lambda$ is influencing \emph{neither} the size of representation of our strategy \emph{nor} the complexity of our algorithm.
\paragraph*{Identifying the set $\Upsilon$}\label{page:sqI}

In the first step, we identify an appropriate set $\Upsilon$.
Intuitively, we put into $\Upsilon$ positions of the runs satisfying $\psi$ where \emph{the way $\psi$ is satisfied} is (nearly) decided, i.e. where it is (nearly) clear which frequency constraints will be satisfied by $\sigma$ in the future.
To this end, we mark every run $\omega$ satisfying $\psi$ with a set $I_\omega \subseteq \{1,\ldots,n\}$ such that $i\in I_\omega$ iff the formula
$\Gf{1} \varphi_i$ holds on the run. We then define a set of finite paths $\Gamma$ to contain all paths
$h$ for which there is $I_h \subseteq \{1,\ldots,n\}$ such that exactly the frequency constraints from $I_h$ as well as $\psi^{I_h}$ are satisfied on (nearly) all runs starting with $h$. Precisely, such that
$\Pr{\sigma}{\{\omega' \mid \omega' \models \psi^{I_h} \land I_{\omega'} = I_h\} \mid h} \ge 1-\lambda$.
Finally, for every $h \in \Gamma$ we add to $\Upsilon$ the pair $(s,q)$ where $h = h' s$ and $q$ is the state in which $\rabin_{\psi,I_h}$ ends after reading $\nu(h')$.

\paragraph*{Reaching $\Upsilon$}

It suffices to show that the strategy $\sigma$ itself satisfies $\Pr{\sigma}{\Gamma} = x$. We will use the following variant of L\'evy's Zero-One Law, a surprisingly powerful formalization of the intuitive fact that ``things need to get (nearly) decided, eventually''.

\begin{lemma}[L\'evy's Zero-One Law~\cite{chung}]\label{lemma:levy}
	Let $\sigma$ be a strategy and $X$ a measurable set of runs. Then for \emph{almost} every run $\omega$ we have
	$\lim_{n\to\infty} \Pr{\sigma}{X \mid h_n} = \idf_X(\omega)$
	where each $h_n$ denotes the prefix of $\omega$ with $n$ states and the function $\idf_X$ assigns $1$ to $\omega \in X$ and $0$ to $\omega \not\in X$.
\end{lemma}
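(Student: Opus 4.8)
The plan is to recognise the quantities $\Pr{\sigma}{X \mid h_n}$, viewed as functions of $\omega$, as a bounded martingale with respect to the natural prefix filtration, to apply the martingale convergence theorem, and then to identify the limit as $\idf_X$ using the fact that the prefixes eventually generate the whole event space. First I would set up the filtration on the probability space $\Prb[\sinit]\sigma$: let $\mathcal{F}_n$ be the $\sigma$-algebra generated by the cylinder sets $\cyl(h)$ ranging over histories $h$ with exactly $n$ states (so each $h$ records the first $n$ states and $n-1$ actions). These form an increasing sequence $\mathcal{F}_0 \subseteq \mathcal{F}_1 \subseteq \cdots$, and since such cylinders generate the whole $\sigma$-algebra of measurable run sets, $\mathcal{F}_\infty := \sigma\bigl(\bigcup_n \mathcal{F}_n\bigr)$ coincides with the full $\sigma$-algebra $\mathcal{F}$ on runs. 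By construction, $M_n(\omega) := \Pr{\sigma}{X \mid h_n}$ is precisely the conditional expectation $\mathbb{E}_\sigma(\idf_X \mid \mathcal{F}_n)(\omega)$, and the tower property along the increasing filtration makes $(M_n)_n$ a martingale. Since $0 \le M_n \le 1$, it is bounded, hence uniformly integrable.

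Next I would invoke the (uniformly integrable) martingale convergence theorem to obtain that $M_n$ converges both almost surely and in $L^1$ to an $\mathcal{F}_\infty$-measurable limit $M_\infty$. The step I expect to cost the most care is identifying this limit as $\idf_X$. I would first show $M_\infty = \mathbb{E}_\sigma(\idf_X \mid \mathcal{F}_\infty)$: for any $A$ in the generating algebra $\bigcup_m \mathcal{F}_m$ and any $n$ large enough that $A \in \mathcal{F}_n$, the defining identity of conditional expectation gives $\mathbb{E}_\sigma(M_n \cdot \idf_A) = \mathbb{E}_\sigma(\idf_X \cdot \idf_A)$; letting $n \to \infty$ and using the $L^1$ convergence of $M_n$ to $M_\infty$ transfers this to $\mathbb{E}_\sigma(M_\infty \cdot \idf_A) = \mathbb{E}_\sigma(\idf_X \cdot \idf_A)$. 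Because $\bigcup_m \mathcal{F}_m$ is closed under finite intersections and generates $\mathcal{F}_\infty$, a Dynkin (monotone-class) argument extends this integral identity from the generating algebra to all of $\mathcal{F}_\infty$, which pins down $M_\infty$ as a version of $\mathbb{E}_\sigma(\idf_X \mid \mathcal{F}_\infty)$.

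Finally, since $X$ is by hypothesis a measurable set of runs, $X \in \mathcal{F} = \mathcal{F}_\infty$, so $\idf_X$ is already $\mathcal{F}_\infty$-measurable and therefore $\mathbb{E}_\sigma(\idf_X \mid \mathcal{F}_\infty) = \idf_X$ almost surely. Combining this with the previous paragraph yields $\lim_{n\to\infty} M_n = M_\infty = \idf_X(\omega)$ for almost every $\omega$, which is exactly the asserted convergence $\lim_{n\to\infty}\Pr{\sigma}{X \mid h_n} = \idf_X(\omega)$. The only genuinely delicate ingredients are measure-theoretic bookkeeping rather than probabilistic content: the claim that the length-$n$ cylinders generate the full $\sigma$-algebra on runs (a consequence of the Ionescu--Tulcea construction of $\Prb[\sinit]\sigma$) and the monotone-class extension of the integral identity; the martingale property, the uniform boundedness, and the convergence theorem itself are all routine.
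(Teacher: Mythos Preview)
Your argument is the standard martingale proof of L\'evy's Zero-One Law and is correct. Note, however, that the paper does not prove this lemma at all: it is stated with a citation to a textbook (Chung) and used as a black box, so there is no ``paper's own proof'' to compare against. Your write-up simply supplies the classical justification that the paper chose to import.
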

\noindent
For every $I \subseteq \{1,\ldots,n\}$ we define $X_I = \{ \omega' \mid \omega' \models \psi^{I} \land I_{\omega'} = I \}$ to be the set of runs that are marked by $I$ and satisfy the formula $\psi^I$. 
Then by Lemma~\ref{lemma:levy}, for almost every run $\omega$ that satisfies $\psi$ and has $I_\omega = I$, there must be a prefix $h$ of the run for which $\Pr{\sigma}{ X_I \mid h} \geq 1-\lambda$ because $\omega \in X_I$. Any such prefix was added to $\Gamma$, with $I_h = I$.

\paragraph*{Almost-surely winning from $\Upsilon$}

For the third step of the proof of direction $\Rightarrow$ of Proposition~\ref{prop:reach} we fix $(\fixs,\fixq) \in \Upsilon$ and we construct a strategy $\sigma_{\fixs,\fixq}$ that is almost-surely winning from $(\fixs,\fixq)$. Furthermore, let $\fixI\subseteq\{1,\ldots,n\}$ denote the set such that $\fixq$ is a state from $\rabin_{\psi,\fixI}$. 
As we have shown in Theorem~\ref{thm:inf}, strategies might require infinite memory, and this needs to be taken into
consideration when constructing $\sigma_{\fixs,\fixq}$. The strategy cycles through two ``phases``, called {\em accumulating} and {\em reaching} that we illustrate on our example.

\begin{example}\label{ex:accum-intuition} 
	Returning to Example~\ref{ex:upsilon-intuition}, 
	we fix $(\fixs,\fixq) = (s_1,q_1)$ and $\fixI = \{1\}$, with the corresponding history from $\Gamma$ being $s_0 w s_1$. The strategy $\sigma_{s_1,q_1}$ we would like to obtain
	\begin{itemize} 
		\item first ``accumulates'' arbitrarily many steps from which all $\varphi_1^{\{1\}}$ can be almost surely satisfied. I.e., it accumulates arbitrarily many newly started instances of the Rabin automaton $\rabin_{\varphi_1,\{1\}}$ (all being in state $q_5$) by repeating action $w$ in $s_0$.
		\item Then it ``reaches'' with all the Rabin automata $\rabin_{\psi,\{1\}}$ and $\rabin_{\varphi_1,\{1\}}$ accumulated in the previous phase their accepting states $q_3$ and $q_7$ respectively. For $\rabin_{\varphi_1,\{1\}}$ this happens without any intervention of the strategy,
		but for $\rabin_{\psi,\{1\}}$ the strategy needs to take the action $m$. Then after returning to $s_0$ it comes back to a state where the next accumulating
		phase starts.
		Thus, we need to make sure we make the accumulating phases progressively longer so that in the long run they take place with frequency 1.
	\end{itemize}
	The proof that such a simple behaviour suffices is highly non-trivial.
	To illustrate this, let us extend the MDP from Figure~\ref{fig:intro} with an action \ap{decline} with $\Delta(s_1,\ap{decline})=s_0$.
	The strategy $\sigma$ from the proof of Theorem~\ref{thm:inf} satisfies $\Pr{\sigma}{\psi}=1$ for $\psi= \G\F \ap{m} \;\land\; \Gf{1} (\ap{q} \rightarrow \X \ap{r} )$.
	However, we can modify $\sigma$ and obtain a ``weird'' strategy $\sigma'$ that takes the action \ap{decline} in the $i$-th visit to $s_1$ with probability $1/2^i$. Such a strategy (a) still satisfies $\Pr{\sigma'}{\psi}=1/2$ but (b) it does not guarantee almost sure satisfaction of $\varphi_1^{\{1\}}$ in $s_1$. Thus, it does not accumulate in the sense explained above. We will show that any such weird strategy can be slightly altered to fit into our scheme. \qed
\end{example}

To show that alternation between such accumulating and reaching suffices (and to make a step towards the algorithm to construct such $\sigma_{\fixs,\fixq}$), we introduce a tailor-made product construction $\mdp_\otimes$. The product keeps track of a \emph{collection} of arbitrarily many Rabin automata accumulated up to now. 
We need to make sure that almost all runs of all automata in the collection are accepting, and we will do this by ensuring that:
(i) almost every computation of all Rabin automata eventually commits to an accepting condition $(E,F)$, and (ii) from the point the automaton
``commits'' to the accepting condition, no more elements of $E$ are visited and (iii) some element of $F$ is visited infinitely often.
To ensure this, we store additional information along any state $q \in Q$ of each automaton:
\begin{itemize}
	\item $(q,\symbopen)$ is a new instance that has to commit to an accepting condition soon;
	\item $(q,(E,F)_\circ)$ is an instance that has to visit a state of $F$ soon;
	\item $(q,(E,F)_\bullet)$ is an instance that recently fulfilled the accepting condition by visiting $F$;
	\item $(q,\bot)$ is an instance that violated the accepting condition it had committed to.
\end{itemize}

\noindent
Let $\mathcal{C}$ denote the set of these pairs for all $q \in Q$ and all accepting conditions $(E,F)$ of the Rabin automaton where the state $q$ belongs.
Note that $\mathcal{C}$ is finite; because we need to encode unbounded number of instances of Rabin automata along the run, each element of a collection $C \subseteq \mathcal{C}$ might stand for multiple instances that are in exactly the same configuration.
We say that $C\subseteq \mathcal{C}$ is {\em fulfilled} if it contains only elements of the form 
$(q,(E,F)_\bullet)$.
The aim is to fulfil the collection infinitely often, the precise meaning of ``recently'' and ``soon'' above is then ``since the last fulfilled state'' and ``before the next fulfilled state''. 

Using the product $\mdp_{\otimes}$, we show that if there is a satisfying strategy in $\mdp$, there is a  strategy in $\mdp_{\otimes}$ with a simple structure that visits a fulfilled state infinitely often (in Lemma~\ref{prop:screwing-paths}). Due to the simple structure, such a strategy can be found algorithmically. Finally, we show that such a strategy in the product induces a satisfying strategy in $\mdp$ (in Lemma~\ref{lemma:prod-to-orig}) yielding correctness of the algorithm.

\subparagraph{The product}
Let $\mdp_{\otimes}$ be an MDP with states $\states_\otimes = \states \times 2^{\mathcal{C}}$,
actions
$\actions_\otimes = \actions \times 2^{\mathcal{C}}$, and transition function $\tra_\otimes$ defined as follows.
We first define possible choices of a strategy in $\mdp_\otimes$. Given a state $(s,C_s)$, we say that an action $(a,C_a)$ is \emph{legal} in $(s,C_s)$ if
$a$ is a valid choice in $s$ in the original MDP, i.e. $\tra(s,a)$ is defined; and $C_a$ satisfies the following:
\begin{itemize}
	\item for all tuples $(q,\symbopen)\in C_s$ we have $(q,\symbopen)\in C_a$ or $(q,(E,F)_\circ)\in C_a$ for some accepting condition $(E,F)$,
	{\footnotesize ($q$ can ``commit'' to $(E,F)$, or keep waiting)}
	\item for all $(q,x)\in C_s$ with $x{\neq}\symbopen$ we have $(q,x)\in C_a$, {\footnotesize (all $q$ are kept along with the commitments)}
	\item all $(q,x) \in C_a$, not added by one of the two above items, are of the form $(q_{in}, \symbopen)$ where $q_{in}$ is the initial state of a Rabin automaton $R_{\varphi_i,\fixI}$ for $i\in \fixI$, {\footnotesize (initial states can be added)}
\end{itemize}

\noindent
\label{page:det-evolv}%
The randomness in $\mdp_\otimes$ comes only from $\mdp$. We set $\tra_\otimes((s,C_s),(a,C_a))(t,C_t) = \tra(s,a)(t)$
for any state $(s,C_s)$, any action $(a,C_a)$ legal in $(s,C_s)$, and any state $(t,C_t)$ such that $C_a$ ``deterministically evolves'' by reading $s$ into $C_t$. 
Precisely, we require that $C_t$ is the minimal set such that for any $(q,x) \in C_a$ there is $(q',x') \in C_t$ with $q \tran{\nu(s)} q'$ and $x \stackrel{q'}{\rightsquigarrow} x'$ where the latter relation is defined by $\symbopen \stackrel{q'}{\rightsquigarrow} \symbopen$ and $\bot \stackrel{q'}{\rightsquigarrow} \bot$  and for any $\cdot \in \{\bullet, \circ\}$ by
\begin{itemize}
	\item $(E,F)_\cdot \stackrel{q'}{\rightsquigarrow} (E,F)_\cdot$ if $q'\not\in E \cup F$ and $C$ is not fulfilled, {\footnotesize (no special state visited)}
	\item $(E,F)_\cdot \stackrel{q'}{\rightsquigarrow} (E,F)_\circ$ if $q'\not\in E \cup F$ and $C$ is fulfilled, {\footnotesize (resetting back to $\circ$)}
	\item $(E,F)_\cdot \stackrel{q'}{\rightsquigarrow} (E,F)_\bullet$  if $q' \in F$, {\footnotesize (the accepting condition becomes fulfilled)} 
	\item $(E,F)_\cdot \stackrel{q'}{\rightsquigarrow} \bot$ if $q' \in E$; {\footnotesize (the accepting condition is violated)} 
\end{itemize}

\noindent
Finally, a state is called {\em fulfilled} if its second component is fulfilled.

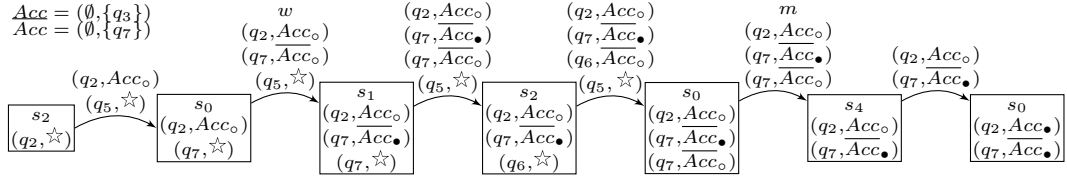
\begin{figure*}
	\begin{center}
		\begin{tikzpicture}[font=\scriptsize,inner ysep=0mm,inner xsep=-1.5mm,x=1.07cm]
		\node at (1.5,1.52) {$Acc=(\emptyset{,}\{q_3\})$};
		\node at (1.5,1.32) {$\overline{Acc}=(\emptyset{,}\{q_7\})$};
		\node[draw] (v1) at (1,0) {$\begin{array}{c}s_2\\(q_2{,}\symbopen)\end{array}$};
		\node[draw] (v2) at (3,0) {$\begin{array}{c}s_0\\(q_2{,}Acc_\circ)\\(q_7{,}\symbopen)\end{array}$};
		\node[draw] (v3) at (5,0) {$\begin{array}{c}s_1\\(q_2{,}Acc_\circ)\\(q_7{,}\overline{Acc}_\bullet)\\(q_7{,}\symbopen)\end{array}$};
		\node[draw] (v6) at (7,0) {$\begin{array}{c}s_2\\(q_2{,}Acc_\circ)\\(q_7{,}\overline{Acc}_\bullet)\\(q_6{,}\symbopen)\end{array}$};
		\node[draw] (v7) at (9,0) {$\begin{array}{c}s_0\\(q_2{,}Acc_\circ)\\(q_7{,}\overline{Acc}_\bullet)\\(q_7{,}\overline{Acc}_\circ)\end{array}$};
		\node[draw] (v8) at (11,0) {$\begin{array}{c}s_4\\(q_2{,}Acc_\circ)\\(q_7{,}\overline{Acc}_\bullet)\end{array}$};
		\node[draw] (v9) at (13,0) {$\begin{array}{c}s_0\\(q_2{,}Acc_\bullet)\\(q_7{,}\overline{Acc}_\bullet)\end{array}$};%
		
		\draw (v1.0) edge[bend left,-latex'] node[midway,above] {$\begin{array}{c}(q_2{,}Acc_\circ)\\(q_5{,}\symbopen)\end{array}$}  (v2.180);
		\draw (v2.28) edge[bend left,-latex'] node[midway,above] {$\begin{array}{c}w\\(q_2{,}Acc_\circ)\\(q_7{,}\overline{Acc}_\circ)\\(q_5{,}\symbopen)\end{array}$}  (v3.152);
		\draw (v3.25) edge[bend left,-latex'] node[midway,above] {$\begin{array}{c}(q_2{,}Acc_\circ)\\(q_7{,}\overline{Acc}_\bullet)\\(q_7{,}\overline{Acc}_\circ)\\(q_5{,}\symbopen)\end{array}$}  (v6.155);
		\draw (v6.25) edge[bend left, -latex'] node[midway,above] {$\begin{array}{c}(q_2{,}Acc_\circ)\\(q_7{,}\overline{Acc}_\bullet)\\(q_6{,}\overline{Acc}_\circ)\\(q_5,\symbopen)\end{array}$}  (v7.155);
		\draw (v7.30) edge[bend left,-latex'] node[xshift=2mm,midway,above] {$\begin{array}{c}m\\(q_2{,}Acc_\circ)\\(q_7{,}\overline{Acc}_\bullet)\\(q_7{,}\overline{Acc}_\circ)\end{array}$}  (v8.150);
		\draw (v8.30) edge[bend left,-latex'] node[midway,above] {$\begin{array}{c}(q_2{,}Acc_\circ)\\(q_7{,}\overline{Acc}_\bullet)\end{array}$}  (v9.150);
		\end{tikzpicture}
	\end{center}
	\caption{Illustration for Example~\ref{ex:product}. The names of actions from $\mdp$ are omitted when only a single action is available.\label{fig:product}}
\end{figure*}

\begin{example}\label{ex:product}
	Figure~\ref{fig:product} shows one path in the product $\mdp_\otimes$ for the MDP and the Rabin automata from Example~\ref{ex:upsilon-intuition}. The path shown illustrates how the initial states can be added non-deterministically (in the first three steps), and then reaches a fulfilled state.
\end{example}

A very useful property of the product is that any strategy that ensures visiting fulfilled states infinitely often yields a strategy in the original MDP such that the automata the strategy added almost surely accept. This is formalised in the following lemma.

\begin{lemma}\label{lemma:prod-to-orig}
	For a deterministic strategy $\pi$ in $\mdp_\otimes$ there is a strategy $\pi'$ in $\mdp$ that for any $h = (s_0,C_0) \cdots (a_n,D_n)(s_{n+1},C_{n+1})$ with
	$\Pr[h]{\pi}{\{\text{fulfilled state visited infinitely often}  \}} = 1$:
	\begin{itemize}
		\item $\Pr[s_0]{\pi'}{s_0\ldots a_n s_{n+1}} =\Pr[(s_0,C_0)]{\pi}{h}$, and 
		\item for any $(q,\symbopen) {\in} D_n$ where $R$ is the automaton of $q$, 
		$\Pr[s_0 a_0 \cdots s_n]{\pi'}{\{\omega \mid \text{$\rabin^q$ accepts $\omega$}\}} = 1$.
	\end{itemize}
\end{lemma}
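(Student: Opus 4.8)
I would take $\pi'$ to be the deterministic ``shadow'' of $\pi$. Since the only randomness in $\mdp_\otimes$ is inherited from $\mdp$ and $\pi$ is deterministic, from the fixed initial product state $(s_0,C_0)$ every history $g=s_0a_0\cdots s_k$ of $\mdp$ lifts to a unique history $\widehat g=(s_0,C_0)(a_0,D_0)(s_1,C_1)\cdots(s_k,C_k)$ of $\mdp_\otimes$ consistent with $\pi$: the $D_i$ are the legal sets that $\pi$ chooses, and each $C_{i+1}$ is the deterministic evolution of $D_i$ under $\nu(s_i)$. I would then set $\pi'(g)$ to be the Dirac distribution on the action $a$ for which $\pi(\widehat g)$ is Dirac on some $(a,C_a)$; legality of $(a,C_a)$ makes $\tra(s_k,a)$ defined, so $\pi'$ is a well-defined deterministic strategy. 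The map $g\mapsto\widehat g$ is a bijection between the positive-probability histories of $\pi'$ and those of $\pi$ (every $h$ arising in the statement is of this form) and extends to a measurable bijection $\omega\mapsto\widehat\omega$ on runs; since $\tra_\otimes((s,C_s),(a,C_a))(t,C_t)=\tra(s,a)(t)$ while both strategies are deterministic, corresponding cylinders receive equal probability. This is exactly the first bullet, and it also shows that $\Prb[(s_0,C_0)]{\pi}$ is the pushforward of $\Prb[s_0]{\pi'}$ along $\omega\mapsto\widehat\omega$; hence almost-sure events transfer between the two probability spaces, also after conditioning on $h$ (resp.\ on its $\mdp$-projection $g$).

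Given this, the second bullet reduces to a purely combinatorial statement about $\mdp_\otimes$. Fix $h$ as in the hypothesis, let $U$ be the measurable set of runs of $\mdp_\otimes$ that visit a fulfilled state infinitely often (so $\Pr[h]{\pi}{U}=1$), and fix $(q,\symbopen)\in D_n$. By the transfer above it suffices to show that for \emph{every} run $\widehat\omega\in U$ extending $h$ the automaton $\rabin^q$ accepts $\nu(s_n)\nu(s_{n+1})\cdots$, i.e.\ the computation $q_n:=q$, $q_{m+1}:=\rtra(q_m,\nu(s_m))$ satisfies some $(E,F)\in\racc$ (finitely many $E$-states, infinitely many $F$-states). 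I would trace the element $(q,\symbopen)$ forward along $\widehat\omega$: it has a unique successor at every step (the legality step sends $(\cdot,\symbopen)$ to $(\cdot,\symbopen)$ or to some $(\cdot,(E,F)_\circ)$, and the evolution step is deterministic), producing a chain $e_n,e_{n+1},\ldots$ with $e_m$ in the $m$-th collection for $m>n$ and with automaton-component exactly $q_m$. Then: (i) $e_m$ cannot stay $\symbopen$ forever, since a collection containing a $\symbopen$-element is not fulfilled, contradicting $\widehat\omega\in U$; so at some step $k$ the chain \emph{commits} to a pair $(E,F)$, after which $e_m$ is a non-$\symbopen$ element carrying this fixed $(E,F)$ and is preserved by every later legality step. (ii) $e_m$ can never become $\bot$, as $\bot$ is absorbing and a collection containing $\bot$ is never fulfilled again; hence $q_m\notin E$ for all $m>k$, so $E$ is visited only finitely often. (iii) By the design of the annotation, $(\cdot,(E,F)_\bullet)$ records that a state of $F$ was visited since the last fulfilled state and $(\cdot,(E,F)_\circ)$ that one was not; hence between any two consecutive fulfilled states past $k$ the chain $e_m$ must pass through a state of $F$, and since fulfilled states occur infinitely often along $\widehat\omega$, $F$ is visited infinitely often. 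By (ii) and (iii) the pair $(E,F)\in\racc$ is met, so $\rabin^q$ accepts, which is what we needed.

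The routine parts are the well-definedness and shadowing of $\pi'$, the history/run bijection, the equality of cylinder probabilities, and the transfer of almost-sure events. I expect the real work to be step (iii): verifying from the ``deterministically evolves'' relation that the $\symbopen/\circ/\bullet/\bot$ bookkeeping indeed maintains the invariant ``$(E,F)_\bullet$ iff a state of $F$ was seen since the last fulfilled state'', so that fulfilling the collection infinitely often forces $F$-recurrence (together with $E$-finiteness) of every accumulated instance. This amounts to reading off the ``recently''/``soon'' $=$ ``since''/``before the last fulfilled state'' semantics spelled out in the product's definition, with particular attention to how the reset rule interacts with freshly added $(q_{in},\symbopen)$ instances; once that invariant is established the rest of the argument is as above.
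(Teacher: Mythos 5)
Your proposal is correct and follows essentially the same route as the paper's own proof: define $\pi'$ as the deterministic shadow of $\pi$ via the unique lift of $\mdp$-histories to $\mdp_\otimes$-histories, transfer probabilities through the resulting measure-preserving correspondence, and observe that visiting fulfilled states infinitely often forces every tracked instance to commit to some $(E,F)$, never hit $\bot$ (so $E$ is seen finitely often), and revisit $F$ between consecutive fulfilled states. The only difference is that you spell out the $\symbopen/\circ/\bullet/\bot$ bookkeeping argument that the paper dismisses as ``easy to see from the definition of $\mdp_\otimes$''.
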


To be able to use above lemma, we need to establish that it is {\em sufficient} to look for a strategy that visits fulfilled states infinitely often. In other words that existence of the satisfying strategy $\sigma$ implies existence of a strategy that visits fulfilled states infinitely often.
Here we use the following lemma saying that $\sigma$ and $(\fixs,\fixq)$ give rise to two
strategies in the product $\mdp_\otimes$ that can be used to add initial states into the collections, and to
reach fulfilled states. We will show below how these strategies can be used to finish the proof of Proposition~\ref{prop:reach}.

\begin{lemma}\label{prop:screwing-paths}
Assume $\fixs, \fixq, \fixI$ are chosen as described on page~\pageref{page:sqI}.	
	Then there are sets $M \subseteq S_\otimes$, $N\subseteq A_\otimes$ where 
	$N$ contains only ``accumulating`` actions, i.e. actions $(a,C)$ with $\{(q_{in}, \symbopen) \mid q_{in} \text{ is the initial state of $\rabin_{\varphi_i,\fixI}$ for } i \in \fixI\} \subseteq C$;
	and there are finite-memory deterministic strategies $\pi$ and $\zeta$ such that:
	\begin{enumerate} 
		\item\label{item:screwing-a}
		When starting in $(s,C) \in M$, $\pi$ only uses actions from $N$ and never leaves $M$
		\item\label{item:screwing-c} When starting in $(s,C) \in M\cup\{(\fixs,\{(\fixq,\symbopen)\})\}$,  $\zeta$ almost surely reaches a fulfilled state (possibly leaving $M$) and then reaches $M$.
	\end{enumerate}
\end{lemma}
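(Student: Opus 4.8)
The plan is to extract the sets $M$, $N$ and the strategies $\pi$, $\zeta$ from the original satisfying strategy $\sigma$ by analysing the behaviour of $\sigma$ conditioned on being at the position $(\fixs,\fixq)\in\Upsilon$, i.e. on a history in $\Gamma$. By the choice of $(\fixs,\fixq,\fixI)$, with probability at least $1-\lambda$ a run from such a history satisfies $\psi^{\fixI}$ and has exactly the frequency constraints in $\fixI$ holding; in particular for every $i\in\fixI$ the formula $\Gf{1}\varphi_i$ holds almost surely after a suitable further prefix. I would first lift $\sigma$ (restricted to continuations of a fixed history $\fixh\in\Gamma$ ending in $(\fixs,\fixq)$) to a strategy $\sigma_\otimes$ in $\mdp_\otimes$ that, whenever it is ``safe'' to do so, adds a fresh instance $(q_{in},\symbopen)$ of each $\rabin_{\varphi_i,\fixI}$, $i\in\fixI$, into the collection. ``Safe'' means: at the current history, the conditional probability under $\sigma$ that all automata currently in the collection — together with the main automaton $\rabin_{\psi,\fixI}$ — end up accepting is still at least $1-\lambda'$ for a slightly larger $\lambda'$. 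The key point, proved via L\'evy's Zero--One Law (Lemma~\ref{lemma:levy}) exactly as in the ``Reaching $\Upsilon$'' step, is that along almost every $\sigma$-run this conditional probability tends to $1$, so such safe moments recur infinitely often and in fact the collection gets fulfilled infinitely often with probability $1$ under $\sigma_\otimes$.

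Next I would take $M$ to be the set of product states $(s,C)$ that are visited with positive probability under $\sigma_\otimes$ and from which $\sigma_\otimes$ still almost surely fulfils the collection infinitely often while continuing to accept all automata; $N$ is the set of legal ``accumulating'' actions (those whose second component contains a fresh copy of every initial state $q_{in}$ of $\rabin_{\varphi_i,\fixI}$, $i\in\fixI$) that $\sigma_\otimes$ uses inside $M$. The existence of the two \emph{finite-memory deterministic} strategies $\pi$ and $\zeta$ then follows from standard MDP theory: the condition ``from $(s,C)\in M$ one can play only $N$-actions and stay in $M$ forever'' is a safety-type condition on the finite product $\mdp_\otimes$, hence if it holds for \emph{some} strategy it holds for a memoryless deterministic one; this gives $\pi$. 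Likewise, ``almost surely reach a fulfilled state and then return to $M$'' is a conjunction of reachability conditions on the finite MDP $\mdp_\otimes$, and almost-sure reachability in finite MDPs admits memoryless deterministic witnesses (one composes the two memoryless strategies into a finite-memory one with two modes); this gives $\zeta$. The point is that $\sigma_\otimes$ itself witnesses that these conditions are satisfiable, and then we replace it by the simple strategies. One also has to check that $(\fixs,\{(\fixq,\symbopen)\})$ can be taken into the ``$\zeta$-winning'' region: this is where the fact that $\fixh\in\Gamma$ enters, guaranteeing that from $(\fixs,\fixq)$ the main automaton together with one committed instance of each $\varphi_i$, $i\in\fixI$, is accepted with probability close to $1$, hence almost surely fulfilled after re-routing through $M$.

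The main obstacle I anticipate is the first step: showing that the lifted strategy $\sigma_\otimes$ can \emph{simultaneously} keep the collection growing (adding fresh instances infinitely often, so that $N$-actions really are used cofinally in $M$) \emph{and} fulfil it infinitely often \emph{and} keep the main automaton on track — all almost surely. The subtlety, flagged in Example~\ref{ex:accum-intuition} with the ``weird'' $\ap{decline}$ strategy, is that $\sigma$ may waste positive-but-vanishing probability mass on behaviour that never accumulates; one must argue that conditioning on the surviving $(1-\lambda)$-mass of runs (those that actually satisfy $\psi^{\fixI}$ with $I_\omega=\fixI$) still leaves a strategy whose product lifting has the desired recurrence properties, and that the resulting $M$ is nonempty and reachable. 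Bounding the accumulated error across the unboundedly many fresh instances — each contributing at most a $\lambda$-type slack — against the $1-\lambda$ budget, and making this compatible with the ``$\liminf$ frequency $1$'' semantics of $\Gf{1}$, is the delicate accounting that the rest of the section will have to carry out; here it is encapsulated in the statement of the lemma, with $\lambda$ chosen small enough (as stipulated on page~\pageref{page:lambda}) for all these estimates to go through.
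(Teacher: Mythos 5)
Your overall architecture (lift $\sigma$ to a strategy in $\mdp_\otimes$ that opportunistically adds fresh automaton instances, identify a recurrent set of accumulating states, then extract memoryless deterministic witnesses from the finite product) matches the paper's. However, there is a genuine gap at the central step: you claim that, by L\'evy's Zero--One Law, the lifted strategy $\sigma_\otimes$ \emph{itself} almost surely fulfils the collection infinitely often. This is false in general, and the paper explicitly warns against it (``if we repeated our steps to infinity, such an error might take place almost surely''). L\'evy's law applies to a \emph{fixed} measurable event, but here the event changes every time you add a new instance: each freshly added automaton, and each commitment to an accepting pair $(E,F)$, only comes with a guarantee of the form ``accepts with conditional probability $\geq 1-\lambda$''. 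Over infinitely many accumulating phases these errors compound multiplicatively, so with probability $1$ some instance eventually fails (lands in $\bot$) or the collection is never again fulfilled. Your own Example~\ref{ex:accum-intuition}-style worry about ``wasted vanishing mass'' is precisely this phenomenon, but your proposal asserts the almost-sure property rather than repairing it.

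The paper's machinery exists exactly to close this gap, and your proposal omits all of it: one runs only a \emph{finite} number $\ell = |S_\otimes|+2$ of pseudo-accumulating phases, so the total error is bounded by $\kappa$ (Lemma~\ref{lemma:alternating-runs}); one then carefully selects histories $W_i$ and $Z_h$ at which the conditional probability of reaching the next phase through a fulfilled state stays above an explicit threshold (Lemmas~\ref{lemma:future-prob} and~\ref{lemma:still-in-set}); since $\ell$ exceeds the number of product states, some state $(s,C)$ recurs, and finally Lemma~\ref{lemma:pone} converts ``achievable with probability $> 1 - p^{2|S|}$'' into ``achievable with probability $1$'' by \emph{switching to a different, memoryless} strategy. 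Your appeal to ``standard MDP theory'' for extracting $\pi$ and $\zeta$ presupposes that almost-sure satisfiability of the safety/recurrence conditions has already been witnessed by some strategy; since $\sigma_\otimes$ does not witness it, you need the quantitative threshold argument of Lemma~\ref{lemma:pone} (or an equivalent) to bridge from ``probability close to $1$'' to ``probability $1$''. Without that bridge the proof does not go through.
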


\begin{proof}[Proof idea]
	The proof is involved and gives a crucial insight into the main obstacles of the proof of Theorem~\ref{thm:main-mdp}.
	Due to the space constraints we only sketch it here.
	
	We first prove that for any fixed $\ell$, almost every $\omega$ that satisfies all $\Gf{1} \varphi^\fixI_i$ has infinitely many {\em good} prefixes. Intuitively, a finite path $h$ is \emph{good} if, when starting from $h$,
	all the automata $\rabin_{\varphi_i,\fixI}$ for $i\in \fixI$ started within $\ell$ first steps accept with probability at least $1-\lambda$.
	
	In the second step, we show how to avoid actions that cause that any $\rabin_{\varphi_i,\fixI}$ does not accept.
	To do so, we inductively start labelling the prefixes of runs of the MDP with elements of $\mathcal{C}$. Having fixed a label for a prefix, the label for its extension is obtained by ``deterministic evolving'' as in the definition of the product MDP, and by (non-deterministically) adding $(q_{in},\symbopen)$.
	The latter part is performed by switching between a ``pseudo-accumulating'' and ``pseudo-reaching'' phase. Initially, we start in a pseudo-reaching phase, only with singletons corresponding to the current state of $\rabin_{\psi,\fixI}$, and do not add any $(q_{in},\symbopen)$. When a good prefix is reached (which happens almost surely), we switch to a pseudo-accumulating phase for the next $\ell$ steps and we keep adding ``initial states'' $(q_{in},\symbopen)$ of $\rabin_{\varphi_i,\fixI}$ for each $i\in\fixI$. After $\ell$ steps, we switch back to a pseudo-reaching phase and do not add any new elements to the label until we pass through a state whose label is fulfilled and get to a good prefix again, in which point another pseudo-accumulating phase starts.
	
	Along the way, we might obtain tuples of the form $(q,\bot)$ in the label, or we might not ever visit a fulfilled state. Indeed, if we repeated our steps to infinity, such an ``error'' might take place almost surely. However, before an error happens with too high probability, the labels start repeating because $\mathcal{C}$ is finite. 
	We show that supposing $\ell$ was large enough and our tolerance $\lambda$ was small enough, there must be a strategy that \emph{almost-surely} traverses such a cycle without any error.
	We can extract from the pseudo-accumulating and pseudo-reaching phases of such a strategy the sets $M$ (and $N$), given by the tuples of the MDP states (actions) and their labels.
\end{proof}

We are now ready to finish the proof of Proposition~\ref{prop:reach}. We show that Lemma~\ref{prop:screwing-paths} allows us to
construct a strategy $\sigma_\otimes$ for $\mdp_\otimes$ that almost surely (i) visits fulfilled states, and (ii) with frequency $1$ it takes actions from $N$. By Lemma~\ref{lemma:prod-to-orig} this strategy yields an almost-surely winning strategy $\sigma_{\fixs,\fixq}$ in $\mdp$.

The strategy $\sigma_\otimes$ is constructed as follows.
Inductively, for path $h$ in $\mdp_\otimes$, we say that its first accumulating phase starts in the first step, $i$th accumulating phase takes $i$ steps, and the $(i+1)$th accumulating phase starts when the set $M$ is reached through a fulfilled state after the $i$th accumulating phase ended.
Within every accumulating phase started in a history $h$, $\sigma_\otimes$ is defined to play as $\pi$ initiated after $h$. Similarly, outside every accumulating phase ended in a history $h$, $\sigma_\otimes$ is defined to play as $\zeta$.

\subsection{The algorithm}
\label{sec:alg}
To conclude the proof of Theorem~\ref{thm:main-mdp}, we need to give a procedure for  
computing the optimal probability of satisfying $\psi$.
It works in the following steps (for details, see~\citeapp{app:algorithm}):
\begin{enumerate}
	\item\label{item:alg-a} Initialize $\Upsilon := \emptyset$, and construct $\rabin_{\xi,I}$ for all $\xi\in \{\psi,\varphi_1,\ldots,\varphi_n\}$ and $I\subseteq \{1,\ldots,n\}$.
	\item\label{item:alg-b} For every $I$ 
	find the largest sets $(M_I,N_I)$ satisfying the conditions \ref{item:screwing-a}--\ref{item:screwing-c} of Lemma~\ref{prop:screwing-paths}, and add to $\Upsilon$ all pairs $(s,q)$ such that $M_I$ can be almost-surely reached from 
	$(s,\{(q,\symbopen)\})$.
	\item\label{item:reach} Compute an optimal strategy $\sigma'$ for ``reaching'' $\Upsilon$  and return the probability. Intuitively,
	\begin{itemize}
		\item we build the ``naive'' product of $\mdp$ with all the main automata $\rabin_{\psi,I}$ for $I\subseteq \{1,\ldots,n\}$;
		\item 
		reaching $\Upsilon$ is reduced to ordinary reachability of all states of the form $(s,q_1,\ldots,q_m)$ such that $(s,q_i)\in \Upsilon$ for some $i$.
		\item
		By standard algorithms for reachability in MDP, we find an optimal strategy $\sigma''$ in the naive product that easily induces the strategy $\sigma'$ in $\mdp$.
	\end{itemize}
\end{enumerate}

\noindent
By connecting Proposition~\ref{prop:reach}, Lemmas~\ref{lemma:prod-to-orig} and~\ref{prop:screwing-paths}, and the construction of $\sigma_\otimes$ above, there is a strategy $\sigma$ in $\mdp$ yielding probability $\geq p$ iff the set $\Upsilon$ computed by the algorithm can also be reached with probability $\geq p$.

We briefly discuss the complexity of the algorithm. Each of the Rabin automata in step \ref{item:alg-a} above can be computed in time $2^{2^{\poly{|\varphi|}}}$,
and since there is exponentially many such automata (in $|\varphi|$), step 1. takes time $2^{2^{poly(|\varphi|)}}$.
Step \ref{item:alg-b} can be performed in time $\poly{S}\cdot 2^{2^{\poly{|\varphi|}}}$.
In step \ref{item:reach} we are computing reachability probability in the naive product MDP which is of size $\poly{S} \cdot 2^{2^{\poly{|\varphi|}}}$,
and so also this step can be done in time $\poly{S}\cdot 2^{2^{\poly{|\varphi|}}}$.

\section{Conclusions}

We have given algorithms for controller synthesis of the logic LTL extended with
an operator expressing that frequencies of some events exceed a given bound.
For Markov chains we gave an algorithm working with the complete logic, and for
MDPs we require the formula to be from a certain fragment.
The obvious next step is extending the MDP results to the whole fLTL.
This will require new insights. Our product construction relies on the (non-trivial) observation
that given $\Gf{1}\varphi$, the formula $\varphi$ is almost surely satisfied from
any history of an accumulating phase. This is no longer true when the frequency bound is
lower than $1$. In such cases different histories may require different probability of
satisfying $\varphi$. However, both authors strongly believe that even for these cases the problem
is decidable.
Another promising direction for future work is implementing the algorithms into a probabilistic model checker and evaluating their time requirements experimentally.

\subparagraph{Acknowledgements.} The authors would like to thank anonymous reviewers for their insightful comments on an earlier version of this paper. 
This work is partly supported by the German Research Council (DFG) as part of the Transregional Collaborative Research Center AVACS (SFB/TR 14), by the Czech Science Foundation under grant agreement P202/12/G061, by the EU 7th Framework Programme under grant agreement no. 295261 (MEALS) and 318490 (SENSATION), by the CAS/SAFEA International Partnership Program for Creative Research Teams, and EPSRC grant EP/M023656/1. Vojt\v{e}ch Forejt is also affiliated with Faculty of Informatics, Masaryk University, Brno, Czech Republic.

\bibliographystyle{plain}
\bibliography{bib}

\onecolumn
\newpage
\appendix
\section{Details for proof for Markov chains}

\begin{reflemma}{lemma:alt-slln}
Let $Y_1,Y_2\ldots$ be a sequence of random variables which only take values $0$ or $1$ and have expectation $\mu$. Assume there are $0<r,c<1$ such that for all $i,j\in \Nset$ we have $\E{(Y_i-\mu)(Y_j - \mu)} \le r^{\lfloor c |i-j|\rfloor}$. Then
$\lim_{n\rightarrow \infty} \sum_{i=0}^n Y_i / n = \mu$ almost surely.
\end{reflemma}
\begin{proof}
We can use \cite[Corollary 4]{lyons} applied to random variables $Z_i = Y_i - \mu$ (we cannot use the result directly for $Y_i$ since \cite{lyons} requires the random variables to have expectation value equal to $0$). Clearly if $\lim_{n\rightarrow \infty} \sum_{i=0}^n Z_i / n = 0$,
then
$\lim_{n\rightarrow \infty} \sum_{i=0}^n Y_i / n = \lim_{n\rightarrow \infty} \sum_{i=0}^n (Z_i + \mu) / n = \mu$. 
Finally, the corollary of \cite{lyons} indeed applies since
$\sum_{k=0}^\infty \frac{r^{\lfloor c \cdot k\rfloor}}{k} \le
 \sum_{k=0}^\infty r^{c \cdot k - 1} = 
 r^{-1}\cdot\sum_{k=0}^\infty (r^c)^{k} \le 
 r^{-1}\cdot\sum_{k=0}^\infty (r^c)^{k} \le 
 1/r(1-r^c)
 < \infty$
\end{proof}
\subsection{Properties of random variables $X^t_i$}
\label{app:rv-xti}
The following is a more detailed computation for properties of the random variables $X^t_i$. First, we need to extend the definition of a path being determined. We say that a path $h$ is {\em positively determined} if almost every run of $\cyl(s_0\ldots s_k)$ satisfies $\varphi$, and {\em negatively determined} if almost no run of $\cyl(s_0\ldots s_k)$ satisfies $\varphi$. Now splitting runs of $D$ to $D^+$ and $D^-$ depending on whether the associated path
is positively or negatively determined, we have:
\begin{align*}
 \E[s]{&(X^t_i - \mu_t) (X^t_j - \mu_t)} \\
& = \P[s]{D^+}\cdot \E[s]{(1-\mu_t) (X^t_j - \mu_t) \mid D^+} \\
& \quad + \P[s]{D^-} \cdot \E[s]{(-\mu_t) (X^t_j - \mu_t) \mid D^-}\\
& \quad + \P[s]{\Omega\setminus(D^+\cup D^-)}\cdot\E[s]{(X^t_i - \mu_t) (X^t_j - \mu_t) \mid \Omega\setminus(D^+\cup D^-)}\\ 
& = \P[s]{D^+}\cdot (1-\mu_t)\cdot \E[s]{X^t_j - \mu_t \mid D^+}\\
& \quad+ \P[s]{D^-} \cdot (-\mu_t)\cdot \E[s]{X^t_j - \mu_t \mid D^-} \\
& \quad + \P[s]{\Omega\setminus(D^+\cup D^-)}\cdot\E[s]{(X^t_i - \mu_t) (X^t_j - \mu_t) \mid \Omega\setminus(D^+\cup D^-)}\\ 
\intertext{and because $\E[s]{X^t_j - \mu_t \mid D^+} = \E[s]{X^t_j - \mu_t) \mid D^-} = 0$ (as shown later), we have}
& = \P[s]{\Omega\setminus(D^+\cup D^-)}\cdot\E[s]{(X^t_i - \mu_t) (X^t_j - \mu_t) \mid \Omega\setminus(D^+\cup D^-)}\\ 
& \le (1-\P[s]{D^+\cup D^-}) \cdot 1\\ 
& \le (1-p^M)^{\lfloor (i-j)/M\rfloor}
\end{align*}

Now let us show that $\E[s]{X^t_j - \mu_t \mid D^+} = 0$ by showing that $\E[s]{X^t_j\mid D^+}=\mu_t$. The argument for $D^-$ is analogous.

Let $h_1,h_2\ldots$ be the sequence of all finite paths ending in $t$, containing $j$ occurrences of $t$, and satisfying the condition that
the suffix from $i$th to $j$th occurrence of $t$ is positively determined. The sets $h_k\cap D^+$ partition $D^+$ and moreover
$\P[s]{h_k} = \P[s]{h_k\cap D^+}$. Hence, we have

\begin{multline*}
\E[s]{X^t_j \mid D^+} = \sum_{k} \P[s]{h_k\mid D^+}\cdot \E[s]{X^t_j \mid h_k} = \sum_{k} \P[s]{h_k\mid D^+}\cdot \E[s]{X^t_j}\\= \mu_t\cdot \sum_{k} \P[s]{h_k\mid D^+} = \mu_t
\end{multline*}
where the second equality follows because the value of $X^t_j$ does not depend on the prefix up to the $j$th visit to $t$, and because $h_k$ is a cylindric set.
\subsection{Relating frequency with probability of achieving bsccs.}
\label{app:mc-bscc}
The following is the final computation for the proof of Proposition~\ref{prop:mc-replace}. Below, we use $N^t_n$ for the random variable
that for a run $\omega$ returns the number of visits to $t$ on the prefix of $\omega$ of length $n$.
\begin{align*}
\liminf_{n\rightarrow\infty} \sum_{i=0}^n 1_{\varphi,i} / n &= \liminf_{n\rightarrow\infty} \sum_{t \in\bscc{s}} \sum_{i=0}^{N_n^t} X^t_i / n\\
&= \sum_{t \in\bscc{s}} \liminf_{n\rightarrow\infty} \sum_{i=0}^{N_n^t} X^t_i / n\\
&= \sum_{t \in\bscc{s}} \liminf_{n\rightarrow\infty} \sum_{i=0}^{n} X^t_i / (n/\ssf{t})\\
&= \sum_{t \in\bscc{s}} \ssf{t} \liminf_{n\rightarrow\infty} \sum_{i=0}^{n} X^t_i / n\\
&= \sum_{t \in\bscc{s}} \ssf{t} \P[s]{\varphi}\\
\end{align*}

\section{Details for proof for Markov decision processes} \label{app:algorithm}

\subsection{Proof of Lemma~\ref{lemma:prod-to-orig}}

\begin{reflemma}{lemma:prod-to-orig}
For a deterministic strategy $\pi$ in $\mdp_\otimes$ there is a strategy $\pi'$ in $\mdp$ that for any $h = (s_0,C_0) \cdots (a_n,D_n)(s_{n+1},C_{n+1})$ with
$\Pr[h]{\pi}{\{\text{fullfilled state visited infinitely often}  \}} = 1$:
\begin{itemize}
	\item $\Pr[s_0]{\pi'}{s_0\ldots a_n s_{n+1}} =\Pr[(s_0,C_0)]{\pi}{h}$, and 
	\item for any $(q,\symbopen) {\in} D_n$ where $R$ is the automaton of $q$, 
	$\Pr[s_0 a_0 \cdots s_n]{\pi'}{\{\omega \mid \text{$\rabin^q$ accepts $\omega$}\}} = 1$.
\end{itemize}
\end{reflemma}
\begin{proof}
	For every finite path $\hat h = s_0a_0 \cdots a_{n}s_{n+1}$ in $\mdp$ there is at most one path of the form $(s_0,C_0)(a_0,D_0) \cdots (a_{n},D_{n})(s_{n+1},C_{n+1}))$, denoted $\hat h_\otimes$ which satisfies that:
 \begin{itemize}
  \item  with $C_0$ fixed above
  \item  all $D_i$ are as chosen with probability $1$ by the deterministic strategy $\sigma_\otimes$ and
  \item  all $C_i$ all given uniquely by the definition of $\mdp_\otimes$
 \end{itemize}
	
	We define the strategy $\pi'$ by 
	$\pi'(\hat{h}) = \pi(\hat{h}_\otimes)$ for all $\hat h$ starting with $s_0$, and define $\pi'(h)$ arbitrarily otherwise.
       
	Let $h = (s_0,C_0) \cdots (a_n,D_n)(s_{n+1},C_{n+1})$ be a path. Clearly,
        $\Pr[s_0]{\pi'}{s_0\ldots a_n s_{n+1}} =\Pr[(s_0,C_0)]{\pi}{h}$ by the definition of $\pi$. Also,
	since $\pi$, when starting after $h$, almost surely fulfils infinitely often, it also never reaches any state with
        second component containing $(q,\bot)$. Hence, it is easy to see from the definition of
        $\mdp_\otimes$ that $\Pr[(s_0,C_0)]{\sigma_\otimes}{h}>0$, then 
	From the definition of $\mdp_\otimes$ it is easy to see that fulfilling infinitely often implies that for all $(q,\symbopen)\in D_n$ the automaton
        $\rabin^q$ (where $\rabin$ is the automaton containing $q$) almost surely accepts suffixes of $s_0a_0\ldots s_k$.
\end{proof}

\subsection{Proof of Lemma~\ref{prop:screwing-paths}}
\label{sec:prop-proof}

First of all, let us introduce further definitions that we will require later in the proof. For a run $\omega$ and a state $q$ of a Rabin automaton $\rabin$, by $\infi_q(\omega)$ (resp. $\occu_q(\omega)$) we denote the set of states of $\rabin$
that occur infinitely many times in $\rabin[\omega]$ (resp. that occur at least once in $\rabin[\omega]$)
when the Rabin automaton is started from state $q$ instead of $\rinit$.

\begin{reflemma}{prop:screwing-paths}
Assume $\fixs, \fixq, \fixI$ are chosen as described on page~\pageref{page:sqI}.	
	Then there are sets $M \subseteq S_\otimes$, $N\subseteq A_\otimes$ where 
	$N$ contains only ``accumulating`` actions, i.e. actions $(a,C)$ with $\{(q_{in}, \symbopen) \mid q_{in} \text{ is the initial state of $\rabin_{\varphi_i,\fixI}$ for } i \in \fixI\} \subseteq C$;
	and there are finite-memory deterministic strategies $\pi$ and $\zeta$ such that:
	\begin{enumerate} 
		\item\label{item:screwing-a}
		When starting in $(s,C) \in M$, $\pi$ only uses actions from $N$ and never leaves $M$
		\item\label{item:screwing-c} When starting in $(s,C) \in M\cup\{(\fixs,\{(\fixq,\symbopen)\})\}$,  $\zeta$ almost surely reaches a fulfilled state (possibly leaving $M$) and then reaches $M$.
	\end{enumerate}
\end{reflemma}

We will now prove Lemma~\ref{prop:screwing-paths}. As before, fix $\fixs\in S$, $\fixq\in Q$, and $\fixI\subseteq\{1,\ldots n\}$, and we also fix a finite path $\fixh$ from $\Gamma$ witnessing that $(\fixs,\fixq)\in \Upsilon$.
We also fix \label{page:ell}$\ell = |S_\otimes| + 2$ and $\kappa = 3 \cdot |\ell|^2 \cdot 2^n \cdot \lambda$ where $\lambda$ is the small number introduced at page~\pageref{page:lambda}.

The following definition and Lemma~\ref{lemma:brew-often} will help us identify (possible) recurring behaviour of $\sigma$. We need to identify long enough parts of runs where all the frequency formulae $\varphi_{i}^{\fixI}$ are satisfied with probability very close to $1$. Based on the behaviour of $\sigma$ within these parts, we later define the ``accumulating'' strategy.

For a path $h$, let $|h|$ be length of $h$, i.e. the number of states in $h$.
We say that a finite path $h$ extending $\fixh$ is {\em good} 
if 
\[
\Ex{\sigma}{\sum_{k=0}^{\ell-1} Y_{|h|+k} \mid h} > \ell\cdot(1-\lambda).
\]
where $Y_j(\omega) = \idf_{\forall i\in\fixI: \omega(j\ldots) \models \varphi_{i}^{\fixI}}$ is the indicator function that the suffix of $\omega$ starting at $j$-th position satisfies all $\varphi_{i}^{\fixI}$.

\begin{lemma}\label{lemma:prefix-approximate}
Let $X$ be a set of runs, $\beta > 0$, and let $J_i = \{h \mid \text{$|h|=i$ and $\mathbb{P}(X \mid h) \ge \beta$}\}$ then
$\lim_{i\to\infty} J_i = X$, i.e. for every $\omega$ there is $i$ such that
for all $i' > i$ we have $\omega\in J_i$ iff $\omega\in X$.
\end{lemma}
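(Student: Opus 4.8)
The statement is essentially a restatement of L\'evy's Zero-One Law (Lemma~\ref{lemma:levy}), and the plan is to derive it directly from there. Recall first that, as with every set of histories in this paper, $J_i$ is silently identified with the set of runs $\bigcup_{h\in J_i}\cyl(h)$; thus ``$\omega\in J_i$'' means that the $i$-state prefix of $\omega$ lies in $J_i$. For the statement to be non-degenerate we need $\beta\le 1$ (and in every application of the lemma we have $\beta<1$), so I would assume $0<\beta<1$ throughout; I would also note that $X$ is measurable, which holds for all the sets $X$ to which we apply the lemma (e.g. sets of runs defined by acceptance of Rabin automata or by satisfaction of an (f)LTL formula).

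First I would apply Lemma~\ref{lemma:levy} to the fixed strategy $\sigma$ and the set $X$: for almost every run $\omega$ we have $\lim_{n\to\infty}\Pr{\sigma}{X\mid h_n}=\idf_X(\omega)$, where $h_n$ denotes the $n$-state prefix of $\omega$. Fix any such $\omega$ (almost every $\omega$ qualifies). I would then split into two cases according to the value $\idf_X(\omega)\in\{0,1\}$.

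If $\omega\in X$, then $\Pr{\sigma}{X\mid h_n}\to 1$, and since $\beta<1$ there is $i$ such that $\Pr{\sigma}{X\mid h_n}\ge\beta$ for all $n>i$; by the definition of $J_n$ this says $h_n\in J_n$, i.e. $\omega\in J_n$, for all $n>i$. If $\omega\notin X$, then $\Pr{\sigma}{X\mid h_n}\to 0<\beta$, so there is $i$ such that $\Pr{\sigma}{X\mid h_n}<\beta$, i.e. $h_n\notin J_n$, i.e. $\omega\notin J_n$, for all $n>i$. In both cases, for every $i'>i$ the equivalence ``$\omega\in J_{i'}$ iff $\omega\in X$'' holds, which is exactly the assertion $\lim_{i\to\infty}J_i=X$ (up to a null set of runs, which is harmless in the way the lemma is used later).

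I do not expect any real obstacle: the whole content is in Lemma~\ref{lemma:levy}. The only points a careful write-up must address are (a) the identification of the set of histories $J_i$ with a set of runs and the corresponding reading of ``$\omega\in J_i$'', and (b) the measurability of $X$, which is where one would, if needed, recall why the concrete sets used in the sequel (acceptance sets of Rabin automata, satisfaction sets of $\varphi_i^{\fixI}$, etc.) are measurable. The strict inequality $\beta<1$ is what makes the ``$\omega\in X$'' case go through, and this is automatically satisfied in all uses of the lemma.
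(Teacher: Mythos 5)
Your proof is correct and follows the same route as the paper: a direct application of L\'evy's Zero-One Law with a case split on $\idf_X(\omega)$. The caveats you raise (reading $J_i$ as a set of runs, $\beta<1$, and the ``almost every $\omega$'' qualifier that the lemma's statement elides) are all apt and are implicitly present in the paper's own, more terse, proof.
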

\begin{proof}
If $\omega\in X$, then by Lemma~\ref{lemma:levy} there is $i$ such that for all $i'>i$ we have
$\mathbb{P}(X\mid h) \ge \beta$ where $h$ is the prefix of $\omega$ of length $i'$. Then $i$ is the required number.
If $\omega\not\in X$, then again by Lemma~\ref{lemma:levy} there is $i$ such that for all $i'>i$ we have
$\mathbb{P}(X\mid h) < \beta$ where $h$ is the prefix of $\omega$ of length $i'$. Then again we pick $i$.
\end{proof}

The following lemma allows us to simplify the notation and only deal with one frequency-globally formula $\varphi := \bigwedge_{i\in\fixI} \varphi_{i,\fixI}$.
\begin{lemma}\label{lemma:one-formula}
	Let $\xi_1,\ldots \xi_n$ be LTL formulae, and $\omega$ a run. We have
	$\omega\models \bigwedge_{i=1}^n \Gf{1} \xi_i$
	if and only if
	$\omega\models \Gf{1}\bigwedge_{i=1}^n  \xi_i$.
\end{lemma}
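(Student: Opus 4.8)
The plan is to reduce the statement to an elementary fact about asymptotic densities of subsets of $\Nset$. First I would unfold the semantics of $\Gf{1}$. For an LTL formula $\xi$ and a run $\omega$, let $B_\xi = \{j \ge 0 \mid \omega(j) \not\models \xi\}$ be the set of positions at which $\xi$ fails, and for $m\in\Nset$ write $d_m(\xi) = \frac{1}{m}\,|B_\xi \cap \{0,\ldots,m-1\}|$. Then the $m$-th partial average $\frac{1}{m}\sum_{j=0}^{m-1}\idf_{\xi,j}$ equals $1 - d_m(\xi)$, and since every such partial average lies in $[0,1]$, we have $\limsup_{m}(1-d_m(\xi))\le 1$; hence the condition $\freq{\idf_{\xi,0}\idf_{\xi,1}\ldots}\ge 1$, i.e. $\liminf_m(1-d_m(\xi))\ge 1$, is equivalent to $\lim_{m\to\infty} d_m(\xi) = 0$. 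In other words, $\omega\models \Gf{1}\xi$ if and only if $B_\xi$ has asymptotic density $0$.

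Next I would rewrite both sides of the claimed equivalence in these terms. By the semantics of conjunction, $\omega(j)\models\bigwedge_{i=1}^n \xi_i$ iff $\omega(j)\models\xi_i$ for every $i$, so $B_{\bigwedge_i \xi_i} = \bigcup_{i=1}^n B_{\xi_i}$. Therefore $\omega\models\bigwedge_{i=1}^n\Gf{1}\xi_i$ means that each $B_{\xi_i}$ has density $0$, whereas $\omega\models\Gf{1}\bigwedge_{i=1}^n\xi_i$ means that the finite union $\bigcup_{i=1}^n B_{\xi_i}$ has density $0$.

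Finally I would invoke the elementary fact that for finitely many sets $B_1,\ldots,B_n\subseteq\Nset$, the union $\bigcup_i B_i$ has density $0$ if and only if each $B_i$ does. The ``only if'' direction is monotonicity, $|B_i\cap\{0,\ldots,m-1\}|\le |\bigcup_k B_k\cap\{0,\ldots,m-1\}|$. The ``if'' direction is finite subadditivity, $|\bigcup_i B_i\cap\{0,\ldots,m-1\}|\le \sum_{i=1}^n |B_i\cap\{0,\ldots,m-1\}|$, after which dividing by $m$ and letting $m\to\infty$ finishes the argument, since a finite sum of sequences tending to $0$ tends to $0$. Chaining the three steps gives the lemma.

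I do not expect a genuinely hard step here; the only point that needs a little care is the first one, namely the remark that because the partial averages are bounded above by $1$, requiring the $\liminf$ in the definition of $\freq{\cdot}$ to be $\ge 1$ actually pins down the ordinary limit of the success frequency to be $1$, and hence pins down the failure set to have density exactly $0$. Without this observation one might worry that $\Gf{1}$ only constrains a $\liminf$ and so does not interact cleanly with finite unions; once it is made, the rest is routine bookkeeping.
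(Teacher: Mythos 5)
Your proof is correct. The paper actually states Lemma~\ref{lemma:one-formula} without any proof, treating it as an easy observation; your argument---noting that since the partial averages are bounded by $1$, the condition $\freq{\cdot}\ge 1$ forces the failure positions to form a set of asymptotic density $0$, and that a finite union of density-zero subsets of $\Nset$ has density $0$---is exactly the standard reasoning the authors evidently had in mind, and it fills the omitted proof cleanly.
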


\begin{lemma}\label{lemma:brew-often}
	Almost every $\omega$ satisfying $\bigwedge_{i\in \fixI} \Gf{1} \varphi^\fixI_i$
	has infinitely many good prefixes.
\end{lemma}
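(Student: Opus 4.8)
The goal is to show that almost every run $\omega$ satisfying $\bigwedge_{i\in\fixI}\Gf{1}\varphi^\fixI_i$ has infinitely many good prefixes. By Lemma~\ref{lemma:one-formula}, it suffices to work with the single formula $\varphi := \bigwedge_{i\in\fixI}\varphi^\fixI_i$, so let $X$ be the event $\Gf{1}\varphi$, i.e. the set of runs $\omega$ with $\freq{\idf_{\varphi,0}\idf_{\varphi,1}\ldots} \ge 1$; equivalently $\lim_{n} \frac1n\sum_{j<n} Y_j(\omega) = 1$, where $Y_j$ is the indicator that $\omega(j)\models\varphi$. The plan is to prove the contrapositive-style statement: the set of runs in $X$ that have only finitely many good prefixes has probability zero.

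\medskip
\textbf{Step 1: reduce ``infinitely many good prefixes'' to a L\'evy-type statement.} Fix $\omega\in X$. A prefix $h$ of length $m$ is good iff $\Ex{\sigma}{\sum_{k=0}^{\ell-1}Y_{m+k}\mid h} > \ell(1-\lambda)$. I would introduce the bounded random variable $Z_m := \sum_{k=0}^{\ell-1} Y_{m+k}$, which takes values in $\{0,1,\ldots,\ell\}$, and note that the event ``$h_m$ is good'' is exactly $\Ex[h_m]{\sigma}{Z_m} > \ell(1-\lambda)$. I want to show that for almost every $\omega\in X$ this happens for infinitely many $m$.

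\textbf{Step 2: use the frequency hypothesis to control the Ces\`aro averages of $Z_m$ along $\omega$.} Since $\omega\in X$ we have $\frac1N\sum_{m<N} Y_m(\omega)\to 1$, hence also $\frac1N\sum_{m<N} Z_m(\omega) = \frac1N\sum_{m<N}\sum_{k<\ell} Y_{m+k}(\omega)\to \ell$ (the finitely many boundary terms are negligible). Because each $Z_m\le\ell$, a standard averaging/Markov argument shows that the \emph{density} of indices $m$ with $Z_m(\omega) > \ell(1-\lambda/2)$, say, tends to $1$; in particular there are infinitely many such $m$. Call such an $m$ a \emph{truly good} position of $\omega$. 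So almost every $\omega\in X$ has infinitely many truly good positions (this part is purely pathwise, no probability needed beyond ``$\omega\in X$'').

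\textbf{Step 3: transfer from the actual value $Z_m(\omega)$ to the conditional expectation $\Ex[h_m]{\sigma}{Z_m}$ via L\'evy's 0--1 law.} This is the crux. For a fixed bound $\theta$ and $k<\ell$, consider the event $W_k := \{\omega \mid \omega(\cdot)\models \X^k\varphi\}$ ``$Y_k$-style'' — more precisely, apply Lemma~\ref{lemma:levy} (L\'evy's Zero--One Law) to the single variable $Y_{m+k}$, viewed through the set of runs on which it is $1$, conditioning on longer and longer prefixes. However we need it simultaneously for the window, so instead I would apply Lemma~\ref{lemma:prefix-approximate}: for the set of runs $X$ itself and parameter $\beta$, the indicator of $\{h : \Pr{\sigma}{X\mid h}\ge\beta\}$ converges to $\idf_X$. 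Thus for almost every $\omega\in X$ there is an index $m_0$ such that every prefix $h_m$ with $m\ge m_0$ satisfies $\Pr[h_m]{\sigma}{X}\ge\beta$ for, say, $\beta = 1-\lambda^2$. Now for such a prefix $h_m$, conditioned on $h_m$ the run still lies in $X$ with probability $\ge 1-\lambda^2$; on $X$ we have $\frac1N\sum_{j<N}Y_j\to 1$, so for large enough $N$ the conditional expectation of $\frac1N\sum_{j<N}Y_{m+j}$ under $\Pr[h_m]{\sigma}$ is $\ge 1-2\lambda^2$. Averaging this over $N$ in a suitable range forces, by Markov, that a positive fraction of the windows starting in $[m,m+N)$ have $\Ex[h_{m'}]{\sigma}{Z_{m'}} > \ell(1-\lambda)$, i.e. are good. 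Since this holds for every $m\ge m_0$, there are infinitely many good prefixes.

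\medskip
\textbf{Main obstacle.} The delicate point is Step 3: going from ``$\omega$ itself has good empirical frequency'' to ``the \emph{strategy's conditional expectation} after the prefix $h_m$ is large''. These are genuinely different — a fixed run can behave well while the bulk of the conditional mass after its prefix behaves badly. The fix is exactly the L\'evy / Lemma~\ref{lemma:prefix-approximate} argument: almost surely, after a finite prefix the conditional measure is concentrated ($\ge 1-\lambda^2$) on $X$, and then one only has to push the pathwise frequency statement on $X$ through the (bounded) conditional expectation, paying a small error controlled by $\lambda$. Choosing $\beta$ close enough to $1$ (as a function of $\lambda$ and $\ell$) so that the resulting error stays below the $\ell\lambda$ slack in the definition of ``good'' is what makes the estimate close. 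Keeping track of the two distinct uses of $\lambda$ (the conditioning-concentration level vs. the goodness slack) is the part that needs care, but it is routine once the structure above is in place.
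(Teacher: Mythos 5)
Your overall strategy is the same as the paper's in its essential ingredients: use L\'evy's zero--one law (via Lemma~\ref{lemma:prefix-approximate}) to argue that along almost every run in $X=\{\omega\mid\omega\models\Gf{1}\varphi\}$ the conditional measure after long prefixes concentrates on $X$, and then use Fatou plus a window-averaging/Markov argument to force the conditional expectations $\Ex[h]{\sigma}{\sum_{k<\ell}Y_{|h|+k}}$ to be large. The paper packages this as a proof by contradiction (extract a positive-measure $X\subseteq X'$ with a uniform cutoff $m$ beyond which no prefix is good, then show $\Ex{\sigma}{\liminf_n\frac1n\sum_i Y_i\mid X}\le 1-\lambda<1$), whereas you argue directly; that difference is cosmetic. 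Your Step~2 is correct but unused in Step~3.

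There is, however, a genuine gap at the very end of Step~3. What your Markov argument actually yields is: for a.e.\ $\omega\in X$ and every large $m$, \emph{under the conditional measure} $\Prb[h_m]{\sigma}$ a positive fraction (indeed, with conditional probability $\ge 1-O(\lambda)$, at least half) of the prefixes $h_{m'}$, $m'\in[m,m+N)$, \emph{of a generic extension of $h_m$} are good. This is a statement about the set of runs extending $h_m(\omega)$, not about $\omega$ itself; the specific run $\omega$ may lie in the exceptional set at every scale $m$, so ``this holds for every $m\ge m_0$, hence $\omega$ has infinitely many good prefixes'' does not follow. To close the gap you need one more almost-sure argument transferring the measure statement back to the individual run --- e.g.\ apply L\'evy's zero--one law a second time to the event $B_m=\{\omega\in X\mid\text{no prefix of length}\ge m\text{ is good}\}$: your estimate shows $\Pr[h_{m'}]{\sigma}{B_m}\le O(\lambda)$ for all large $m'$ along a.e.\ $\omega\in B_m$, while L\'evy forces this conditional probability to tend to $1$ on $B_m$, whence $\Pr{\sigma}{B_m}=0$. (A conditional second Borel--Cantelli lemma would serve the same purpose, though the conditioning fields in your setup are not aligned with the events, so it is not a direct application.) This is precisely the step the paper's contradiction setup avoids: once all runs of the positive-measure set $X$ have no good prefixes past a fixed $m$, every prefix $h$ with $\Pr{\sigma}{X\mid h}>0$ and $|h|\ge m$ is automatically not good, and the bound on the conditional expectations is immediate.
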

\begin{proof}
	By contradiction.
        Employing Lemma~\ref{lemma:one-formula} we can slightly simplify the problem and
        consider runs satisfying  $\Gf{=1}\varphi$ for $\varphi\equiv \bigwedge_{i\in \fixI} \varphi^\fixI_i$
        Suppose that there is a set $X'$ with $\Pr{\sigma}{X'} > 0$ such that
	all $\omega\in X'$ satisfy $\Gf{=1}\varphi$ and have only finitely
	many good prefixes.
	
	Further, let $m_\omega$ for a run $\omega\in X'$ denote the smallest number such that
	for all $m' \ge m_\omega$ the prefix $h$ of $\omega$ of length $m'$ is not good.
	We can pick $m$ and $X\subseteq X'$ satisfying that $\Pr{\sigma}{X} > 0$,
	and every $\omega\in X$ satisfies that $m_\omega \le m$. 
	Note that such choice is possible, as with increasing $m$ the set $X$ tends monotonically to $X'$.
	
	Note that we have
	\begin{align} \label{eq:freq}
	\Ex{\sigma}{\liminf_{n\to\infty} \frac{1}{n}\sum_{i=0}^n Y_i \mid X } = 1
	\end{align}
	
	Furthermore, by Fatou's Lemma, by linearity of expectation, and by taking a subsequence of averages of chunks of length $\ell$, we have
	\begin{align*}
	\Ex{\sigma}{&\liminf_{n\to\infty} \frac{1}{n}\sum_{i=0}^n Y_i\mid X} \\
	& \; \leq \;
	\liminf_{n\rightarrow \infty} \frac{1}{n} \sum_{i=0}^n
	\Ex{\sigma}{Y_i\mid X} \\
	& \; \leq \;
	\liminf_{k\rightarrow \infty} \frac{1}{k} \sum_{j=0}^{k-1} \frac{1}{\ell} \sum_{i=0}^{\ell-1}
	\Ex{\sigma}{Y_{\ell\cdot j + i} \mid X}
	\; = \; (\ast)
	\end{align*}
	
	Let $J_i$ be the set defined as in Lemma~\ref{lemma:prefix-approximate}, and denote $\compl{J_i} = \{h \mid |h| = i\} \setminus J_i$.
	For $\ell\cdot j \ge m$ we have
	\begin{align*}
	\lefteqn{\frac{1}{\ell} \sum_{i=0}^{\ell-1} \Ex{\sigma}{Y_{\ell\cdot j + i} \mid X}}\\
	& \; = \;
	\frac{1}{\ell} \sum_{|h| = \ell\cdot j} \Pr\sigma{h \mid X} \sum_{i=0}^{\ell-1} \Ex{\sigma}{Y_{\ell\cdot j + i}\mid h\cap X}\tag{law of total expectation}\\
	& \; = \;
	\frac{1}{\ell}\sum_{|h| = \ell\cdot j} \Pr\sigma{h \mid X} \sum_{i=0}^{\ell-1} \sum_{x\in \{0,1\}} x \cdot \Pr{\sigma}{Y_{\ell\cdot j + i} = 1 \mid h \cap X}\tag{def. of expectation}\\
	& \; = \;
	\frac{1}{\ell}\sum_{|h| = \ell\cdot j} \frac{\Pr\sigma{h \cap X}}{\Pr\sigma{X}} \sum_{i=0}^{\ell-1}  \frac{\Pr{\sigma}{Y_{\ell\cdot j + i} = 1 \cap h \cap X}}{\Pr{\sigma}{X\cap h}}\tag{removing $0$ terms; def. of cond. probability}\\
	& \; = \;
	\frac{1}{\ell} \sum_{|h| = \ell\cdot j} \frac{1}{\Pr\sigma{X}} \sum_{i=0}^{\ell-1} \Pr{\sigma}{Y_{\ell\cdot j + i} = 1 \cap h \cap X}\tag{$\Pr\sigma{h \cap X}$ cancels out}\\
	& \; = \;
	\frac{1}{\ell}  \frac{1}{\Pr\sigma{X}}\bigg( \sum_{h \in J_{\ell\cdot j}}\sum_{i=0}^{\ell-1} \Big(\Pr{\sigma}{Y_{\ell\cdot j + i} = 1 \cap h} - \Pr{\sigma}{Y_{\ell\cdot j + i} = 1 \cap (h\setminus X)}\Big)\\
	& \qquad \qquad + \sum_{h \in \compl{J_{\ell\cdot j}}}\sum_{i=0}^{\ell-1} \Pr{\sigma}{Y_{\ell\cdot j + i} = 1 \cap h \cap X}\bigg)\tag{partitioning in paths (not) in $J_{\ell\cdot j}$; set operation}\\
	& \; \le \;
	\frac{1}{\ell} \frac{1}{\Pr\sigma{X}} \bigg( \sum_{h \in J_{\ell\cdot j}}\sum_{i=0}^{\ell-1} \Pr{\sigma}{Y_{\ell\cdot j + i} = 1 \cap h} + \sum_{h \in \compl{J_{\ell\cdot j}}}\sum_{i=0}^{\ell-1} \Pr{\sigma}{h \cap X}\bigg)
       \tag{removing negative terms}\\
	& \; = \;
	\frac{1}{\ell}\frac{1}{\Pr\sigma{X}}\bigg(\sum_{h \in J_{\ell\cdot j}} \sum_{i=0}^{\ell-1}  \Pr\sigma{h} \frac{\Pr{\sigma}{Y_{\ell\cdot j + i} = 1 \cap h}}{\Pr\sigma{h}} + \sum_{h \in \compl{J_{\ell\cdot j}}}\ell\cdot  \Pr{\sigma}{h \cap X}\bigg)\tag{multiplying some summands by $\Pr\sigma{h}/\Pr\sigma{h}$}\\
	& \; = \;
	\frac{1}{\ell} \frac{1}{\Pr\sigma{X}}\bigg(\sum_{h \in J_{\ell\cdot j}} \Pr\sigma{h} \sum_{i=0}^{\ell-1}\Pr{\sigma}{Y_{\ell\cdot j + i} = 1 \mid h} + \ell \cdot \Pr{\sigma}{\compl{J_{\ell\cdot j}} \cap X}
	\tag{def. of cond. prob.}\\
	& \; \le \;
	\frac{1}{\ell} \frac{1}{\Pr\sigma{X}}\bigg( \Pr\sigma{J_{\ell\cdot j}} \cdot \ell\cdot(1-\lambda) + \ell\cdot \Pr{\sigma}{\compl{J_{\ell\cdot j}} \cap X}\bigg)\tag{property of $J_{\ell\cdot j}$ for $\ell\cdot j\ge m$}\\
	& \; \le \;
	\frac{\Pr\sigma{J_{\ell\cdot j}} \cdot (1-\lambda)}{\Pr\sigma{X}} + \frac{\Pr{\sigma}{\compl{J_{\ell\cdot j}} \cap X}}{\Pr\sigma{X}}
	\end{align*}
	and hence
	\begin{align*}
	(\ast) &\le \liminf_{k\rightarrow \infty} \frac{1}{k} \sum_{j=0}^{k-1} \Big(\frac{\Pr\sigma{J_{\ell\cdot j}} \cdot (1-\lambda)}{\Pr\sigma{X}} + \frac{\Pr{\sigma}{\compl{J_{\ell\cdot j}} \cap X}}{\Pr\sigma{X}}\Big)\\
	\intertext{and since in $\lim_{\ell\cdot j \rightarrow \infty} J_{\ell\cdot j} = X$, we get}
	&= (1-\lambda)
	\end{align*}
	which is a contradiction with (\ref{eq:freq}).
\end{proof}

By heavily relying on existence of good prefixes, we define labellings of histories of $\mdp$ that will help us establish a connection to $\mdp_\otimes$. Namely, the labellings (1) identify what is the current state in $\mdp_\otimes$ and (2) resolve the additional choices w.r.t. the second component of $\mdp_\otimes$.

We introduce functions $\thetas$ and $\thetaa$ that label histories starting with $\fixh$ with elements of $2^\mathcal{C} \cup \{\bot\}$ and define the current state and the current action to pick in $\mdp_\otimes$ in the given history, respectively. Inductively, together with defining the labellings, we also assign one of two distinct tags to these histories, {\em pseudo-accumulating} or {\em pseudo-reaching}.
We will then speak about pseudo-reaching and pseudo-accumulating \emph{phases} which are maximal consecutive ranges within histories labelled so far such that all prefixes in this range are tagged as pseudo-reaching or pseudo-accumulating, respectively. A pseudo-reaching phase is fulfilled if it contains a prefix $h$ in its range such that $\thetas(h)$ is fulfilled.

Initially, we tag the history $\fixh$ of $\mdp$ as pseudo-reaching and set $\thetas(\fixh) = \emptyset$ and $\thetaa(\fixh) = \{(\fixq,\symbopen)\}$.

Suppose that $\thetaa(h)$ has already been defined and the tag of a history $h$ of $\mdp$ has been determined. 

First for an action $a$ and state $t$, we tag the extension $h\cdot a \cdot t$ of $h$ as pseudo-accumulating if (i)
$h$ is tagged as pseudo-accumulating and the length of the current pseudo-accumulating phase is less than $\ell$ so far; or (ii)
$h$ is in a pseudo-reaching phase such that some prefix of $h$ within that phase is fulfilled
and $h$ is good.
Otherwise we tag $h\cdot a\cdot t$ as pseudo-reaching.

Next, we define $\thetas(h\cdot a\cdot t)$ by ``deterministically evolving'' by reading the last state of $h$ as in the definition of $\mdp_\otimes$ at page~\pageref{page:det-evolv}, i.e. $\thetas(h\cdot a\cdot t)$ is the minimal set such that for any $(q,x) \in \thetaa(h)$ there is $(q',x') \in \thetas(h\cdot a\cdot t)$ with $q \tran{\nu(s)} q'$ and $x \stackrel{q'}{\rightsquigarrow} x'$ where the latter relation is the relation from the definition of $\mdp_\otimes$

We define $\thetaa(h\cdot a\cdot t)$ to be the minimum element (w.r.t. set inclusion) of $\mathcal{C}$ satisfying the following
\begin{itemize}
	\item If $h\cdot a\cdot t$ is in a pseudo-accumulating phase, then $\thetaa(h\cdot a\cdot t)$ contains $(q_{in},\symbopen)$ for the initial states $q_{in}$ of $\rabin_{\varphi_i,\fixI}$ for all $i\in \fixI$.
	\item For all $(q,\symbopen)\in \thetas(h\cdot a\cdot t)$ such that for some $(E,F)$,
		$$
		\Pr[h]{\sigma}{\{\omega \mid \occu_{q}(\omega)\cap E = \emptyset, \infi_{q}(\omega) \cap F \neq \emptyset \}} > 1-\lambda
		$$
		we put $(q,(E,F)_\circ)\in \thetaa(h\cdot a\cdot t)$, and otherwise we put $(q,\symbopen)\in \thetaa(h\cdot a\cdot t)$. In the case there are several $(E,F)$
		satisfying the condition above, we pick the least one w.r.t. an arbitrary but a priori fixed total order.
	\item For all $(q,(E,F)_x)\in \thetas(h\cdot a\cdot t)$ we put $(q,(E,F)_x)\in \thetaa(h\cdot a\cdot t)$.
\end{itemize}
Note that the minimum element satisfying these conditions always exists. Also note that these definitions are analogous to those in $\mdp_\otimes$, but in addition we give a rule for ``committing'' to an accepting condition.

Finally, any (finite or infinite) path $\omega = s_0a_0s_1a_1\ldots$ in $\mdp$ initiated in $\fixh$ corresponds to a path
$$\omega_\otimes=(s_0,\thetas(s_0))(a_0,\thetaa(s_0))(s_1,\thetas(s_0a_0s_1))(a_1,\thetaa(s_0a_0s_1))\ldots$$
in $\mdp_\otimes$. Similarly, the strategy $\sigma$ gives rise to a strategy $\sigma_\otimes$ defined, for all $h$, by
$\sigma_\otimes(h_\otimes)(a,\thetaa(h_\otimes)) = \sigma(h)(a)$.
The connection between the labellings and the MDP $\mdp_\otimes$ is completed by the following lemma that can be proven immediately from the definitions.

\begin{lemma}\label{lemma:to-product}
	
	For any set $T$,
	$\Pr[\fixh]{\sigma}{T} = \Pr{\sigma_\otimes}{\{\omega_\otimes\mid \omega\in T\}}$.
\end{lemma}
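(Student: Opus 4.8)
The plan is to check that $\omega \mapsto \omega_\otimes$ is an injective, measurable map that transports $\Prb[\fixh]{\sigma}$ onto $\Prb{\sigma_\otimes}$ on cylinders, and then to extend the identity from cylinders to all measurable sets by a $\pi$--$\lambda$ argument.

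First I would observe that, by the inductive definition of the labellings, for every finite path $h$ of $\mdp$ extending $\fixh$ the values $\thetas$ and $\thetaa$ on $h$ and on all of its prefixes are uniquely determined, so $h \mapsto h_\otimes$ (and hence $\omega\mapsto\omega_\otimes$ on infinite paths) is well defined; it is injective since $h$ is recovered from $h_\otimes$ by erasing all second components. The preimage of a cylinder $\cyl(g)$ of $\mdp_\otimes$ under this map is either empty (when $g$ is not of the form $h_\otimes$) or equal to $\cyl(h)$ (when $g=h_\otimes$), so the map is measurable. Crucially, $\sigma_\otimes$ is defined so that from any history $h_\otimes$ it plays only actions whose second component equals $\thetaa(h_\otimes)$; hence $\Prb{\sigma_\otimes}$ assigns probability $1$ to the set $\{\omega_\otimes \mid \omega \text{ a run of }\mdp\text{ from }\fixh\}$, the way $\sigma_\otimes$ is resolved on the remaining histories is irrelevant, and for every cylinder we get $\Pr{\sigma_\otimes}{\{\omega_\otimes\mid\omega\in\cyl(h)\}} = \Pr{\sigma_\otimes}{\cyl(h_\otimes)}$.

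Second, I would expand both sides on a cylinder $\cyl(h)$ with $h=s_0a_0\cdots s_k$ extending $\fixh$ as a finite product of one-step probabilities and compare them term by term. By the definition of the probability space of $\mdp$, $\Pr[\fixh]{\sigma}{\cyl(h)}$ is the product over the steps beyond $\fixh$ of the factors $\sigma(s_0a_0\cdots s_j)(a_j)\cdot\tra(s_j,a_j)(s_{j+1})$. Along $h_\otimes$ every action is legal and its collection component evolves deterministically, exactly as in the definition of $\mdp_\otimes$ on page~\pageref{page:det-evolv}, so the matching factor of $\Pr{\sigma_\otimes}{\cyl(h_\otimes)}$ is a strategy factor times $\tra_\otimes\big((s_j,\cdot),(a_j,\cdot)\big)(s_{j+1},\cdot) = \tra(s_j,a_j)(s_{j+1})$, and by the definition of $\sigma_\otimes$ the strategy factor equals $\sigma(s_0a_0\cdots s_j)(a_j)$. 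Thus the two products coincide, so $\Pr[\fixh]{\sigma}{\cyl(h)} = \Pr{\sigma_\otimes}{\{\omega_\otimes\mid\omega\in\cyl(h)\}}$.

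Finally, I would close the argument by Dynkin's lemma: the cylinders extending $\fixh$ form a $\pi$-system generating the run $\sigma$-algebra, the collection of measurable $T$ satisfying the claimed identity contains all such cylinders, and it is closed under complements and countable disjoint unions (here one uses injectivity of $\omega\mapsto\omega_\otimes$, so images of disjoint sets are disjoint, together with the observation that the image of the full run space carries all the $\Prb{\sigma_\otimes}$-mass); hence it contains every measurable set. I do not expect any genuine obstacle: the statement is essentially a definitional unfolding, and the only point needing mild care is the measure-theoretic bookkeeping — in particular the fact that runs of $\mdp_\otimes$ not of the form $\omega_\otimes$ are $\Prb{\sigma_\otimes}$-null, which is what makes the ``arbitrary elsewhere'' freedom in the definition of $\sigma_\otimes$ harmless.
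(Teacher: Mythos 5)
Your proof is correct: the paper gives no proof of this lemma at all (it merely asserts it ``can be proven immediately from the definitions''), and your argument — injectivity and measurability of $\omega\mapsto\omega_\otimes$, term-by-term matching of the one-step factors on cylinders via the definitions of $\tra_\otimes$ and $\sigma_\otimes$, and extension by Dynkin's lemma using that the image of the run space is $\Prb{\sigma_\otimes}$-conull — is exactly the standard unfolding the authors intend. No gaps; the only point you leave implicit is that the image of a measurable $T$ is itself measurable, which follows since it equals the preimage of $T$ under the first-component projection intersected with the (measurable) image of the full run space.
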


Note that the strategy $\sigma_\otimes$ in Lemma~\ref{lemma:to-product} is possibly still very complex in
its structure and in particular can reach states with $(q,\bot)$ in the second component. We however show that within a certain finite horizon
this happens with a small probability.

Let $\depth(h)$ be the number of pseudo-accumulating phases along the path $h$ . Let
$T$ be the set of runs $\omega$ that have $\depth(\omega) \ge \ell$, and for which no prefix $h$ with $\depth(h) \le \ell$ has $\thetaa(h) = \bot$.
We will show below that the probability of runs in $T$ is very large. 

\begin{lemma}\label{lemma:alternating-runs}
	$\Pr{\sigma}{T \mid \fixh} \ge 1-3 \cdot \ell^2 \cdot |I|\cdot \lambda = 1-\kappa$. 
\end{lemma}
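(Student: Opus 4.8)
The plan is to bound, step by step, the probability of the "bad" events that prevent a run from lying in $T$, namely (a) the run has fewer than $\ell$ pseudo-accumulating phases, and (b) some prefix $h$ of depth at most $\ell$ receives the error label $\thetaa(h)=\bot$. For (a), the key observation is that a new pseudo-accumulating phase is triggered as soon as, within a pseudo-reaching phase, we first hit a fulfilled prefix and then a good prefix. Lemma~\ref{lemma:brew-often} (together with Lemma~\ref{lemma:one-formula}) guarantees that, conditioned on satisfying $\bigwedge_{i\in\fixI}\Gf{1}\varphi^\fixI_i$, almost every run has infinitely many good prefixes; and by the choice of $\fixh\in\Gamma$ we know $\Pr{\sigma}{\bigwedge_{i\in\fixI}\Gf{1}\varphi^\fixI_i \mid \fixh}\ge 1-\lambda$ (this is exactly what puts $(\fixs,\fixq)$ into $\Upsilon$). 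Combined with the fact that the product has finitely many states so "fulfilled" is reachable within a bounded horizon whenever we have committed correctly, we get that the probability of failing to start an $\ell$-th pseudo-accumulating phase is at most a small multiple of $\lambda$.

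For (b), I would argue inductively over the (at most $\ell$) pseudo-accumulating phases. Each phase adds at most $|\fixI|$ fresh instances $(q_{in},\symbopen)$, and each such fresh instance is turned into a committed instance $(q,(E,F)_\circ)$ only when the ``committing'' rule fires, i.e. when $\Pr[h]{\sigma}{\occu_q(\omega)\cap E=\emptyset,\ \infi_q(\omega)\cap F\neq\emptyset}>1-\lambda$. The crucial point is that once we commit an instance to $(E,F)$, the only way it can later evolve to $\bot$ is if some state of $E$ is actually visited; but the committing condition says this happens with probability at most $\lambda$ from the point of commitment. So for a single committed instance the probability of ever producing a $\bot$ is at most $\lambda$. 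There are at most $\ell$ phases, each contributing at most $|\fixI|$ committed instances, and each instance lives for a bounded number of ``stretches'' between fulfilled states — here one uses that the whole horizon $\depth(h)\le\ell$ contains at most $\ell$ pseudo-reaching phases, so the instance gets at most $\ell$ chances to go wrong. A union bound over all these at most $\ell\cdot|\fixI|\cdot\ell$ ``commit-and-wait'' events, each of probability at most $\lambda$, together with a constant factor absorbing the phase-starting failures from part (a), yields a bound of the form $1-3\cdot\ell^2\cdot|\fixI|\cdot\lambda = 1-\kappa$. (Recall $\kappa$ was fixed on page~\pageref{page:ell} precisely as $3\cdot\ell^2\cdot 2^n\cdot\lambda$, with the $2^n$ there being a looser constant; the $|I|$ form in the statement is the sharper one, and either way the point is just that it is a fixed constant times $\lambda$, hence arbitrarily small.)

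The technical glue is Lemma~\ref{lemma:to-product}, which lets us compute all these probabilities under $\sigma$ in $\mdp$ rather than under $\sigma_\otimes$ in the product — but since the labellings $\thetas,\thetaa$ are defined in terms of conditional probabilities under $\sigma$, it is actually cleanest to do the entire estimate directly in $\mdp$, tracking the label $\thetaa(h)$ along a run and invoking Lévy's Zero-One Law (Lemma~\ref{lemma:levy}) / Lemma~\ref{lemma:prefix-approximate} to guarantee that the conditional-probability thresholds used in the committing rule and in the definition of ``good'' are eventually met with overwhelming probability.

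\textbf{Main obstacle.} The delicate part is not any single estimate but making the bookkeeping rigorous: an instance $(q,\symbopen)$ added in one pseudo-accumulating phase may sit uncommitted through several pseudo-reaching phases before the committing rule fires, and after committing it must be re-checked at every subsequent fulfilled state (the $(E,F)_\bullet \rightsquigarrow (E,F)_\circ$ reset). I would therefore set up the induction so that the invariant maintained after each phase is ``with probability $\ge 1 - (\text{running count})\cdot\lambda$, no $\bot$ has appeared and at least the expected number of phases have started'', and carefully verify that each phase increments the running count by at most $3\cdot\ell\cdot|\fixI|$. Getting this counting exactly right — in particular confirming that the number of ``chances to fail'' per instance is $O(\ell)$ and not larger — is where the real work lies; once the combinatorics is pinned down, the probabilistic inputs are immediate from Lemmas~\ref{lemma:levy}, \ref{lemma:prefix-approximate}, \ref{lemma:one-formula}, and~\ref{lemma:brew-often}.
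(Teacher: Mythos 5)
Your plan follows the same broad strategy as the paper (condition on the frequency constraints holding from $\fixh$ at cost $\lambda$, use the commitment mechanism with per-commitment failure probability $\lambda$, union bound), but it has two genuine gaps. First, you never isolate the event that \emph{all accumulated automata actually accept} -- the paper's set $V$ -- which contributes its own $\ell^2\cdot\lambda$ term to $\kappa$ and is obtained from the definition of a good history by a Markov-type step: a phase starts only at a good $h$, so $\Ex{\sigma}{\sum_{k=0}^{\ell-1}Y_{|h|+k}\mid h} > \ell(1-\lambda)$ forces $\Pr{\sigma}{\text{all $\ell$ suffixes satisfy all }\varphi_i^{\fixI}\mid h} > 1-\ell\lambda$, and there are $\ell$ phases. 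This event is not a luxury: Lemma~\ref{lemma:rabin-determined} only guarantees that an uncommitted instance $(q,\symbopen)$ eventually commits \emph{on runs where that instance's automaton accepts}. Without first restricting to $V$, you cannot argue that every instance commits, hence cannot argue that the collection ever becomes fulfilled, hence cannot argue that the $(i+1)$-th accumulating phase ever starts. Your substitute -- ``fulfilled is reachable within a bounded horizon because the product is finite'' -- is not how the argument works and is not true as stated; recurrence of fulfilled states comes from membership in $V\cap W$ plus Lemma~\ref{lemma:rabin-determined}, not from finiteness of $S_\otimes$.

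Second, your union-bound bookkeeping is wrong even though the final count coincidentally matches. Each of the $\ell$ accumulating phases lasts $\ell$ steps and adds $|\fixI|$ fresh instances \emph{at every step}, so there are up to $\ell^2\cdot|\fixI|$ instances; each instance makes exactly \emph{one} decision, and that single decision (which asserts $\occu_q(\omega)\cap E=\emptyset$ and $\infi_q(\omega)\cap F\neq\emptyset$ over the entire infinite future) already covers all subsequent $(E,F)_\bullet\rightsquigarrow(E,F)_\circ$ resets -- there is no need for, and no validity to, ``$\ell$ chances to fail per instance''. Your own framing concedes that pinning down this counting is ``where the real work lies'', so the proposal as written does not yet contain the proof; the paper resolves it with the three-set decomposition $\Omega\setminus U$, $U\setminus V$, $V\setminus W$ summing to $\lambda+\ell^2\lambda+|I|\ell^2\lambda\le\kappa$, together with the inductive verification that $V\cap W\subseteq T$ up to a null set.
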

\begin{proof} 
	First, we start with the set of runs
	$$U = \{\omega \mid \omega\models\psi^\fixI \wedge \bigwedge_{i\in \fixI} \Gf{1} \varphi^\fixI_i\} \cap \fixh$$
	with $\Pr{\sigma}{\Omega \setminus U \mid \fixh} < \lambda$ as given by the assumption of Lemma~\ref{prop:screwing-paths} (here $\Omega$ denotes the set of all runs).
	
	Furthermore, let $V \subseteq U$ be the set of runs where all the ``accumulated'' Rabin automata accept, i.e. runs $\omega$ such that for all $i \in \fixI$ and for all prefixes $h_0$ in an at most $\ell$-th pseudo-accumulating phase, we have that $\rabin_{\varphi_i,I}$ accepts $\omega'$ where $\omega = h_0 \cdot \omega'$.
	For a fixed accumulating phase which starts at some good history $h$, we have (denoting $\sum_{k=0}^{\ell-1} Y_{|h|+k}$ by $Y_{1ph}$ where $|h|$ is the number of states in $h$)
	\begin{align*}
	\ell\cdot(1-\lambda) 
	&< \Ex{\sigma}{Y_{1ph} \mid h}  \\
	&< \ell \cdot \Pr{\sigma}{Y_{1ph} = \ell \mid h} + (\ell - 1) \cdot \Pr{\sigma}{Y_{1ph} < \ell \mid h},
	\end{align*}
	yielding $\Pr{\sigma}{Y_{1ph} = \ell \mid h} > 1-\ell\lambda$. Thus for $Y_{\ell ph}$ denoting the number of Rabin automata accepting in all $\ell$ accumulating phases, we easily obtain $\Pr{\sigma}{Y_{\ell ph} = \ell \cdot \ell \mid \fixh} > 1-\ell^2\cdot \lambda$ and thus $\Pr{\sigma}{U \setminus V} < \ell^2\cdot \lambda$.
	
	For every $i \in I$, we say that \emph{starting after $h$, the history $h'$ decides for an acceptance condition $(E,F)$ of $\rabin_{\varphi_i,\fixI}$} if
	\begin{itemize}
		\item $h$ is in pseudo-accumulating phase, 
		\item $h'$ is the shortest history such that for some $(E',F')$ 
		$$\Pr[h\cdot h']{\sigma}{\{\omega \mid \occu_{q}(\omega)\cap E = \emptyset, \infi_{q}(\omega) \cap F \neq \emptyset \}} > 1-\lambda$$
		where $q$ is the state in which $\rabin_{\varphi_i,\fixI}$ ends after reading $h'$, and
		\item $(E,F)$ is the minimal one among such acceptance conditions $(E',F')$ (w.r.t. the above fixed order).
	\end{itemize}   
	
	We define a set $W \subseteq V$ of runs where this ``decision'' turns out to be correct for all automata started in the first $\ell$ accumulating phases. Technically, $\omega \in W$ if for every $i\in \fixI$ and every splitting $\omega = h\cdot h' \cdot \omega_2$ such that $h$ is in an at most $\ell$-th pseudo-accumulating phase we have the following. If starting after $h$, $h'$ decides for some $(E,F)$, we have $\occu_{q}(\omega_2)\cap E = \emptyset$ and $\infi_{q}(\omega_2) \cap F \neq \emptyset$ where again $q$ is the state in which $\rabin_{\varphi_i,\fixI}$ ends after reading $h'$. 
	
	When starting after a single $h$, $h'$ decides for some $(E,F)$, the probability of not sticking to this decision is by definition at most $\lambda$ (conditioned by $h\cdot h'$). Similarly as before, there are at most $\ell^2 \times |I|$ decisions to take, yielding the overall probability at most $\Pr{\sigma}{V \setminus W} < |I|\cdot \lambda\cdot \ell^2$ of runs that do not stick to decisions up to $\ell$.
	
	For almost every run $\omega \in V$ we have that $\omega \in T$ if $\omega \in W$. Indeed, inductively, for all prefixes $h a t$ of $\omega$ such that $h$ is in an at most $\ell$-th pseudo-accumulating or pseudo-reaching phase and $\thetaa(h) \neq \bot$, we have $\thetaa(h a t) \neq \bot$ because no forbidden state $q$ of a previously decided automaton is visited along any $\omega$ of $W$. Furthermore, every label $(q,(E,F)_\circ)$ is eventually replaced by $(q,(E,F)_\bullet)$ because $\omega \in V$; and every $(q,\symbopen)$ is eventually replaced by some $(q,(E,F)_\circ)$ (for almost every $\omega \in V$) due to Lemma~\ref{lemma:rabin-determined} given below. Thus, the set of labels along $\omega$ becomes at least $\ell$ times fulfilled.
	
	Summing up $\Prb[h]{\sigma}(\Omega \setminus U)$, $\Prb[h]{\sigma}(U \setminus V)$ and $\Prb[h]{\sigma}(V \setminus W)$, we obtain the statement of the lemma.
\end{proof}

An important step in the previous proof was that on almost every accepting path there is a prefix where the Rabin automaton ``decides'' for one accepting condition with high probability.
The proof is again based on Lévy's Zero-One Law.

\begin{lemma}\label{lemma:rabin-determined}
	Let $R$ be a Rabin automaton, $h$ be a path, $V = \{h \cdot \omega \mid R \text{ accepts } \omega \}$, and $\Pr{\sigma}{V} > 0$.
	For almost all $h \cdot \omega' \in V$ there is a prefix $h'$ of $\omega'$ and an acceptance pair $(E,F)$ of $\rabin$ such that	%
	$$\Pr[h\cdot h']{\sigma}{\{\omega \mid \occu_{q}(\omega)\cap E = \emptyset, \infi_{q}(\omega) \cap F \neq \emptyset \}} > 1-\lambda$$
	where $q$ is the state in which $\rabin$ ends after reading $h'$. 
\end{lemma}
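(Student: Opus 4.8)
The plan is to invoke L\'evy's Zero-One Law (Lemma~\ref{lemma:levy}), but \emph{not} on the event $V$ directly. The difficulty is that belonging to $V$ only says that the computation of $\rabin$ on $h\cdot\omega'$ \emph{eventually} stops visiting some set $E$ and keeps visiting the corresponding $F$ infinitely often, whereas the conclusion asks for a \emph{concrete} prefix $h'$ after which this good behaviour has already started (so that $\occu_q$ of the remaining suffix misses $E$ entirely). To cover this gap I would apply L\'evy's law to a whole monotone family of events and then use a trivial inclusion to line up the prefix length with the family index.

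For each acceptance pair $(E,F)\in\racc$ of $\rabin$ and each $i\in\Nset$, let $D^i_{E,F}$ be the set of runs $r=h\cdot\omega'$ such that, writing $q$ for the $\rabin$-state reached after the prefix of $r$ that has $i$ states more than $h$, the suffix $\omega$ of $r$ from that point (started in $\rabin^{q}$) never visits a state of $E$ and visits some state of $F$ infinitely often, i.e. $\occu_{q}(\omega)\cap E=\emptyset$ and $\infi_{q}(\omega)\cap F\neq\emptyset$. Each $D^i_{E,F}$ is measurable, being a countable boolean combination of cylinders since $\rabin$ is deterministic. Two observations drive the proof. First, $D^i_{E,F}$ constrains only the suffix beyond that prefix, so for any prefix $h\cdot h'$ with exactly $i$ more states than $h$ we get $\Pr{\sigma}{D^i_{E,F}\mid h\cdot h'}=\Pr[h\cdot h']{\sigma}{\{\omega\mid \occu_{q}(\omega)\cap E=\emptyset,\ \infi_{q}(\omega)\cap F\neq\emptyset\}}$, with $q$ exactly the automaton state named in the statement. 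Second, $D^i_{E,F}\subseteq D^{i+1}_{E,F}$ (avoiding $E$ from position $i$ on implies avoiding it from position $i{+}1$ on, and ``$F$ infinitely often'' is insensitive to the starting index), and $\bigcup_i D^i_{E,F}$ is precisely the set of runs whose $\rabin$-computation witnesses acceptance through $(E,F)$; hence $V=\bigcup_{(E,F)\in\racc}\bigcup_{i} D^i_{E,F}$.

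Then I would apply Lemma~\ref{lemma:levy} to each of the countably many sets $D^i_{E,F}$: for almost every run $r$, and for every $(E,F)$ and $i$, $\lim_{n\to\infty}\Pr{\sigma}{D^i_{E,F}\mid h_n}=\idf_{D^i_{E,F}}(r)$, where $h_n$ is the prefix of $r$ with $n$ states. Fix such an $r\in V$. By the decomposition of $V$ there are a pair $(E,F)$ and an index $i^\star$ with $r\in D^{i^\star}_{E,F}$, so $\idf_{D^{i^\star}_{E,F}}(r)=1$; hence there is an $n$ large enough that $h_n=h\cdot h'$ for a prefix $h'$ of $\omega'$ having at least $i^\star$ more states than $h$, with $\Pr{\sigma}{D^{i^\star}_{E,F}\mid h_n}>1-\lambda$. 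Letting $i$ denote the index of $h'$ in the family (so $i\ge i^\star$), monotonicity gives $D^{i^\star}_{E,F}\subseteq D^{i}_{E,F}$ and therefore $\Pr{\sigma}{D^{i}_{E,F}\mid h\cdot h'}>1-\lambda$; by the first observation this is exactly the desired inequality for the prefix $h'$ and the pair $(E,F)$. Since the full-measure set coming from L\'evy's law meets $V$ in all but a null part of $V$, this holds for almost all $h\cdot\omega'\in V$.

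The remaining work is routine bookkeeping: the measurability of the tail-type events $D^i_{E,F}$ as indicated, and keeping the off-by-one between ``prefix with $n$ states'' and ``the state $\rabin$ reaches after reading $h'$'' consistent with the conventions used elsewhere in Section~\ref{sec:prop-proof}. The one genuinely delicate point, and the reason a naive use of L\'evy's law on $V$ fails, is the passage from ``eventually good'' to ``good from a nameable prefix on''; the monotone family $D^i_{E,F}$ together with the inclusion $D^{i^\star}_{E,F}\subseteq D^{i}_{E,F}$ is precisely what makes that passage go through.
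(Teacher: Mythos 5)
Your proof is correct, and while it rests on the same engine as the paper's --- L\'evy's Zero-One Law applied to a countable family of measurable events that together exhaust $V$ --- the decomposition is genuinely different. The paper enumerates the acceptance pairs $(E_i,F_i)$ of $\racc$, partitions $V$ into sets $R'_i$ of runs accepted via $(E_i,F_i)$, and then needs \emph{two} applications of L\'evy's law per run: one to the prefix-independent event $R_i$ (driving $\Pr{\sigma}{R_i \mid h_k}$ above $1-\lambda/2$), and a second to the event that the automaton visits $E_i$ again after the last visit observed on the given run (formalised via the random variable $lastSin_i$, driving that conditional probability below $\lambda/2$); the two bounds are then combined by a union bound. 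You instead fold the ``no further visits to $E$'' requirement into the event itself by indexing on the cut position: the family $D^i_{E,F}$ is monotone in $i$, its union over $i$ and $(E,F)$ is exactly $V$, and a single application of L\'evy's law to each $D^{i^\star}_{E,F}$ together with the inclusion $D^{i^\star}_{E,F}\subseteq D^{i}_{E,F}$ aligns the index with the prefix length. This buys a cleaner argument: no $lastSin$ bookkeeping, no splitting of $\lambda$ into halves, and the identification of $\Pr{\sigma}{D^i_{E,F}\mid h\cdot h'}$ with the conditional probability in the statement is immediate from the definition of $D^i_{E,F}$. The only points to watch are the ones you already flag --- measurability of $D^i_{E,F}$ (a countable union, over cylinders of length $|h|+i$, of tail events; fine since $R$ is deterministic) and the convention that $\occu_q$ includes $q$ itself --- and neither affects the argument.
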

\begin{proof}
	Let the acceptance conditions of $R$ be $(E_i,F_i)_{1 \leq i \leq n}$ and its initial state be $q_0$. 
	For each $i$, let $R_i$ be the set $$\{h\cdot \omega' \in V \mid 
	\infi_{q_0}(\omega') \cap E_i = \emptyset,
	\infi_{q_0}(\omega') \cap F_i \neq \emptyset \}$$ and $\mathbf{1}_{R_i}$ be its indicator function.
	As for each $h\cdot \omega'$ in some 
	$R_i$, 
	$\idf_{R_i}(h\cdot \omega') = 1$, we also have from Lemma~\ref{lemma:levy} (Lévy's zero one law) that $\lim_{k\to\infty} \Pr{\sigma}{R_i \mid h_k} = 1$ where $h_k$ are the prefixes of $h\cdot \omega'$ of length $k$. 
	Hence, there is $k$ such that all prefixes $h_{k'}$ for $k'\geq k$ satisfy:
	\begin{align}\label{eq:rabin-det1}
	\Pr{\sigma}{ R_i \mid h_k  } 
	\;>\;
	1-\frac{\lambda}{2}.
	\end{align}
	
	Let us fix an arbitrary partition of $V$ into \emph{disjoint} sets $R'_1, \ldots, R'_n$ such that for all $1 \leq i \leq n$, $R'_i \subseteq R_i$. 
	For each $i$ and run $h\cdot \omega' = s_0 a_0 \cdots $ let $$lastSin_i(\omega') = \sup(\{0\} \cup \{n \mid s_n \in E_i \}).$$

	Let $h\cdot \omega' \in R'_i$. 
	As we have $\idf_{\{lastSin_i > lastSin_i(h\cdot \omega' )\}}(h\cdot \omega' ) = 0$, we also have from Lemma~\ref{lemma:levy} that $\lim_{k\to\infty} \Pr{\sigma}{\{lastSin_i > lastSin_i(h\cdot \omega')\} \mid h_k} = 0$.  Hence, there is $k \in \Nset$ such that $k > lastSin_i(h\cdot \omega' )$ and all prefixes $h_{k'}$ of length $k' \geq k$ satisfy
	\begin{align}\label{eq:rabin-det2}
	\Pr{\sigma}{ \{lastSin_i > k\} \mid h_k  } 
	\;<\;
	\frac{\lambda}{2}.
	\end{align}

	\noindent
	In total, we obtain from~\ref{eq:rabin-det1} and~\ref{eq:rabin-det2} the desired statement.
\end{proof}

Before constructing the accumulating and reaching strategies, we state the following  lemmas that we will need.

The first lemma says that if we can achieve a certain event with a large enough probability in an MDP, then we can achieve it with probability $1$. The proof follows from the fact that there are optimal deterministic strategies with memory of size 2.
\begin{lemma}\label{lemma:pone}
	Let $\mdp$ be an MDP with state space $S$, where each state $s$ is labelled with an atomic proposition $s$ unique to this state, let $p$ be the minimal probability occurring in it, and let $G_1$ and $G_2$ be two sets of states. The following statements hold true:
	\begin{enumerate}
		\item  If $\sup_\sigma \Pr{\sigma}{\F (G_1 \land \F G_2)} > 1 - p^{2\cdot|S|}$, then $\Pr{\sigma}{\F (G_1 \land \F G_2)}= 1$ for some $\sigma$.
		\item  If $\sup_\sigma \Pr{\sigma}{\{\omega = s_0a_0s_1a_1\ldots \mid \forall i\le |S| : s_i \in G_1 \}} > 1 - p^{|S|}$, then $\Pr{\sigma}{\G G_1)}= 1$ for some $\sigma$.
	\end{enumerate}
\end{lemma}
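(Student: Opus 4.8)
The statement to prove is Lemma~\ref{lemma:pone}, which has two parts. Both are ``gap'' results: if a reachability-type objective can be achieved with probability exceeding a threshold of the form $1 - p^{k}$, then it can be achieved with probability exactly $1$. The unifying idea is that these objectives admit optimal (or $\eps$-optimal) memoryless or finite-memory deterministic strategies, and for such strategies the achieved probability is either $1$ or bounded away from $1$ by the threshold, because of the minimum transition probability $p$.

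\textbf{Plan for part 1.} The objective $\F(G_1 \land \F G_2)$ is a nested reachability objective: reach a state in $G_1$ and afterwards reach a state in $G_2$. I would first recall the standard fact that such an objective is equivalent to a two-phase reachability and hence admits an optimal deterministic strategy that uses only a single bit of memory (recording whether $G_1$ has been hit yet), so effectively we can work in a modified MDP $\mdp'$ of size $2|S|$ with a plain reachability objective $\F G$ for the appropriate target set $G$, and there an optimal \emph{memoryless deterministic} strategy $\sigma$ exists. Under a memoryless deterministic strategy the MDP collapses to a Markov chain on (at most) $2|S|$ states; the probability of reaching $G$ from the initial state is then governed by the bottom strongly connected components: it equals the probability of reaching a bscc that intersects $G$ (since from any state of a bscc intersecting $G$, $G$ is reached almost surely, and from a bscc disjoint from $G$ it is never reached). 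If this probability is not $1$, then there is positive probability of reaching a bscc $B$ disjoint from $G$; any such $B$ is reachable within $2|S|$ steps via a path of length at most $2|S|$, and hence with probability at least $p^{2|S|}$. Therefore the probability of \emph{failing} is at least $p^{2|S|}$, i.e. the success probability is at most $1 - p^{2|S|}$. Contrapositively, if $\sup_\sigma \Pr{\sigma}{\F(G_1 \land \F G_2)} > 1 - p^{2|S|}$, then the optimal value must be $1$ (optimal reachability strategies attain the supremum), and it is attained by the memoryless strategy $\sigma$, which translates back to a finite-memory (2-memory) strategy in the original MDP.

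\textbf{Plan for part 2.} Here the objective is a safety-type property: $\G G_1$, the set of runs never leaving $G_1$. The hypothesis is phrased via the bounded approximation ``$\forall i \le |S| : s_i \in G_1$''. I would argue that the maximal probability of $\G G_1$ is achieved by a memoryless deterministic strategy (safety objectives, like reachability, have memoryless optimal strategies — indeed one can restrict attention to the sub-MDP on states from which staying in $G_1$ forever is possible with positive probability, and there a memoryless strategy survives forever with probability $1$ on the part of the state space from which this is feasible). For a memoryless deterministic $\sigma$, consider the induced Markov chain: the event $\G G_1$ from a state $s$ holds with probability $1$ if every state reachable under $\sigma$ from $s$ lies in $G_1$, and otherwise there is a shortest path of length at most $|S|$ to a state outside $G_1$, which has probability at least $p^{|S|}$, so $\Pr{\sigma}{\G G_1} \le 1 - p^{|S|}$; moreover in that case already $\Pr{\sigma}{\exists i \le |S| : s_i \notin G_1} \ge p^{|S|}$, i.e. $\Pr{\sigma}{\forall i \le |S| : s_i \in G_1} \le 1 - p^{|S|}$. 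Hence if the supremum over strategies of this bounded-safety probability exceeds $1 - p^{|S|}$, then for the optimal memoryless $\sigma$ the full-safety event $\G G_1$ must have probability $1$.

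\textbf{Main obstacle.} The routine calculations are the $p^{k}$ bounds, which are immediate once one is in a Markov chain and counts path lengths. The part that needs a little care — and is the real content — is justifying the reduction to memoryless (or 2-memory) deterministic strategies and, crucially, linking the \emph{bounded} quantities in the hypotheses (``$\F(G_1 \land \F G_2)$'' being a genuine reachability that is attained, and ``$\forall i\le|S|$'' in part 2) to the \emph{unbounded} conclusions ($\F(G_1\land\F G_2)$ with probability $1$, resp. $\G G_1$ with probability $1$). For part 2 one must be careful that the hypothesis uses the finite-horizon event whereas the conclusion is about the infinite-horizon safety event; the bridge is exactly that for a memoryless strategy, failing safety forces a short witnessing prefix, so the finite-horizon and infinite-horizon failure probabilities are related by the same $p^{|S|}$ bound. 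I would also remark that the atomic-proposition labelling of states mentioned in the statement is a red herring for the proof itself — it is only there so that the strategies produced can be referred to by the states they visit, as needed elsewhere — and does not enter the argument.
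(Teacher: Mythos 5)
Part~1 of your proposal is correct and essentially matches the paper's route: the paper takes a deterministic $2$-memory strategy attaining the supremum and uses that under a deterministic strategy every positive-probability history of length $i$ has probability at least $p^i$; your bscc analysis of the induced Markov chain on the $2|S|$-state product is an equivalent way to get the same dichotomy (value $1$ or value $\le 1-p^{2|S|}$). Crucially, in part~1 the hypothesis and the dichotomy concern the \emph{same} objective, so taking the contrapositive for the optimal strategy closes the argument.

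Part~2 has a genuine gap in the final inference. Your dichotomy --- either $\Pr{\sigma}{\G G_1}=1$ or $\Pr{\sigma}{\forall i\le|S|:s_i\in G_1}\le 1-p^{|S|}$ --- is established only for \emph{memoryless deterministic} $\sigma$, whereas the hypothesis is a supremum over \emph{all} strategies of the \emph{bounded} event. Since $\G G_1$ implies the bounded event, monotonicity runs the wrong way, and the supremum of the bounded event could a priori be witnessed by a history-dependent randomized strategy to which your induced-Markov-chain analysis does not apply; for finite-horizon objectives memoryless strategies are not optimal in general, so you cannot silently replace that witness by a memoryless one. Two standard repairs: (i) the paper's route --- the bounded event is LTL-expressible (this is exactly what the per-state atomic propositions you dismiss as a red herring are for), so its supremum is attained by a \emph{deterministic} strategy $\sigma'$, under which every positive-probability history with $i\le|S|$ transitions has probability at least $p^{i}$, whence failure of the bounded event with positive probability forces failure with probability at least $p^{|S|}$; or (ii) prove the bound uniformly over all strategies via the decreasing sequence $W_0=G_1$, $W_{k+1}=\{s\in G_1\mid \exists a:\mathrm{supp}(\tra(s,a))\subseteq W_k\}$, which stabilizes within $|S|$ steps at the sure-winning region for $\G G_1$: if $s_0\notin W_{|S|}$, then under \emph{any} strategy the run leaves $G_1$ within the first $|S|$ steps with probability at least $p^{|S|}$. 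Either repair also supplies the bridge from ``the bounded event holds almost surely'' to ``some strategy satisfies $\G G_1$ almost surely,'' a step that is needed (and left implicit) in both your write-up and the paper's.
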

\begin{proof}
	Let us analyse the second case which is slightly more technical.
	The set $\{\omega = s_0a_0s_1a_1\ldots \mid \forall i\le |S| : s_i \in G_1 \}$ can be captured using an LTL property
	and so the supremum is realised by some deterministic strategy $\sigma'$. Suppose it is lower than $1$.
	Then, since $\sigma'$ is deterministic, there must be a history $s_0 a_0 s_1 a_1\ldots s_i$ for $i\le |S|$
	such that $s_i\not\in G_1$ and $\Pr{\sigma'}{s_0 a_0 s_1 a_1\ldots s_i} \ge p^{i}$, which is a contradiction.
	
	The first case can be proved similarly, we only need to consider that deterministic strategies with memory of size 2 are
	sufficient to achieve the supremum.
\end{proof}

From now on, we will consider the strategy $\sigma_\otimes$ in $\mdp_\otimes$ instead of $\sigma$. We transfer the labelling with a pseudo-reaching and pseudo-accumulating phase to runs of $\mdp_\otimes$ in the straightforward way.

Let $\hat W_i$ be the set of histories that are in $i$-th pseudo-accumulating phase and whose predecessors are not in $i$-th pseudo-accumulating phase.
In order to define accumulating and reaching strategies, we need to select subsets of these histories that are ``connected'' with high probability.
We thus select non-empty sets $W_i\subseteq \hat W_i$ which in addition satisfy 
\begin{align*}
\Pr{\sigma_\otimes}{W_{i}} \ge 1-4 {\cdot} \kappa^{1/2^{\ell - i}} & \quad  \text{for all $1{\le} i{\le} \ell$, and } \\
\Pr{\sigma_\otimes}{W_{i+1}\mid h} \ge 1-4 {\cdot} \kappa^{1/2^{\ell - i}} & \quad \text{for all $h{\in}W_i$ and $1{\le} i {<} \ell$}
\end{align*}
(recall that we interchangeably interpret a set of histories also as a set of runs starting with some history from the set).
This is indeed possible, it suffices to put $W_{\ell} = \hat W_{\ell}$ that satisfies the first condition by Lemma~\ref{lemma:alternating-runs}. Supposing $W_{i+1}$ has been defined, we get $W_i$ by the following.

\begin{lemma}\label{lemma:future-prob}
	Let $\eps > 0$ and $\mathbb{P}$ be a probability measure. Further, let $U$ be a set of runs such that $\P{U} \ge 1-\varepsilon$, and let $H$ be the prefix-free set of finite paths such that $\P{H} \ge 1-\varepsilon$. There is a set $V\subseteq H$ with $\P{V} \ge 1 - 2\sqrt\eps$
	and $\P{U \mid w} > 1-\sqrt\eps$ for all $w\in V$.
\end{lemma}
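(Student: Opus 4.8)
The plan is a one-line Markov (averaging) inequality, dressed up with a little care about conditional probabilities. First I would dispose of the degenerate case: if $\eps\ge 1$ then $1-2\sqrt{\eps}\le -1<0$ and there is nothing to prove, so assume $0<\eps<1$; note that then $\eps\le\sqrt{\eps}$, which is the only elementary estimate the argument uses. Next I would recall that since $H$ is prefix-free, the cylinders $\cyl(w)$ for $w\in H$ are pairwise disjoint, so $\P{H}=\sum_{w\in H}\P{\cyl(w)}$ (the sum is over a countable set, as there are only countably many finite paths). Histories $w$ with $\P{\cyl(w)}=0$ contribute nothing and can be dropped from $H$ without changing $\P{H}$, so I may assume $\P{\cyl(w)}>0$ for every $w\in H$ and hence $\P{U\mid w}$ is well defined.

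\textbf{Main step.} Define
\[
V:=\{\,w\in H \mid \P{U\mid w}>1-\sqrt{\eps}\,\},\qquad B:=H\setminus V .
\]
Then $V\subseteq H$ and the second required property, $\P{U\mid w}>1-\sqrt{\eps}$ for all $w\in V$, holds by construction; the whole content of the lemma is the lower bound on $\P{V}$, equivalently an upper bound on $\P{B}$. For every $w\in B$ we have $\P{\Omega\setminus U\mid w}\ge\sqrt{\eps}$, and since the cylinders of histories in $B$ are disjoint and contained in the event they index,
\[
\eps\ \ge\ \P{\Omega\setminus U}\ \ge\ \sum_{w\in B}\P{(\Omega\setminus U)\cap\cyl(w)}\ =\ \sum_{w\in B}\P{\Omega\setminus U\mid w}\cdot\P{\cyl(w)}\ \ge\ \sqrt{\eps}\cdot\P{B}.
\]
Dividing by $\sqrt{\eps}>0$ gives $\P{B}\le\sqrt{\eps}$. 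Finally, as $B$ and $V$ partition $H$ into sets of disjoint cylinders, $\P{V}=\P{H}-\P{B}\ge(1-\eps)-\sqrt{\eps}\ge 1-2\sqrt{\eps}$, using $\eps\le\sqrt{\eps}$. This is exactly the claimed bound.

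\textbf{Expected difficulty.} I do not expect any real obstacle: the estimate is just the Markov inequality applied to the nonnegative quantity $\P{\Omega\setminus U\mid w}$ with weights $\P{\cyl(w)}$. The only points worth a sentence each are the reduction to $\eps<1$ (so that $\eps\le\sqrt{\eps}$) and the observation that prefix-freeness of $H$ makes the cylinders $\cyl(w)$ disjoint, which is what licenses splitting $\P{\Omega\setminus U}$ and $\P{H}$ as sums over $w\in H$; everything else is bookkeeping.
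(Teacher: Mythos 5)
Your proof is correct and takes essentially the same approach as the paper: both define $V$ as the set of histories $w\in H$ with $\P{U\mid w}>1-\sqrt{\eps}$ and bound the measure of the complement by an averaging argument. Your execution via Markov's inequality on $\P{\Omega\setminus U}$ summed over the disjoint cylinders of $H\setminus V$ is in fact a little cleaner than the paper's total-probability decomposition (which conditions on the full complement $\bar V$ rather than on $H\setminus V$), and it correctly handles the degenerate case and zero-probability histories.
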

\begin{proof}
	We can assume $\eps < 1/4$, otherwise the statement holds for any set $V$ as $2\sqrt{\eps} \geq 1$.
	
	We set $V := \{y\in H\mid \P{U\mid y} > 1-\sqrt\varepsilon\}$. We claim that $\P{V} > 0$. Let us assume the contrary. This means that all $y \in H$ with $\P{y} > 0$ satisfy $\P{U\mid y} \leq 1 - \sqrt{\eps} \leq 1- 2\eps \leq \P{U} - \eps$ and hence, also $\P{U \mid H} \leq \P{U} - \eps$. We have 
	$$\P{U} \leq \P{H} \cdot \P{U \mid H} + \P{\bar H} \cdot 1 \leq (1-\eps)\cdot (\P{U} - \eps) + \eps = \P{U} - \eps \cdot \P{U}$$
	yielding a contradiction which proves that $\P{V} > 0$.

	As all runs can be partitioned into sets $\bar H$, $V$, and $H\setminus V \subseteq \bar V$, we have
	\[
	1-\varepsilon \le \P{U} \leq \P{\bar H} \cdot 1 + \P{V}\cdot 1 + \P{\bar V} \cdot \P{U\mid \bar{V}}
	\]
	where $1$ overapproximates (potentially undefined) probabilities $\P{U \mid \bar H}$ and $\P{U\mid V}$.
	Since $\P{\bar H} < \eps$ and $\P{U\mid \bar V} \leq 1-\sqrt\eps$, we obtain
	\begin{align*}
	1- 2\varepsilon &\le \P{V} + \P{\bar V} \cdot (1-\sqrt\eps) \\
	1- 2\varepsilon &\le \P{V} + (1 - \P{V}) \cdot (1-\sqrt\eps) \\
	\sqrt\eps -2\eps &\le \P{V} \sqrt\eps
	\end{align*}
	and so $\P{V} \geq 1 - 2\sqrt{\eps}$.
\end{proof}

\noindent
In fact, it suffices to set $U = W_{i+1}$ and $H = \hat{W}_i$ and obtain $W_i$ as $V$ from the lemma. The probability of all such $\hat{W}_i$ is $\geq 1-\kappa$. It is easy to prove by induction that the probability of each $W_i$ is $1-2^{a_i} \cdot \kappa^{1/2^{\ell - i}}$ where $a_i = \sum_{j=i}^{\ell-1} 1/(\ell-i)$ obtaining the first inequality as $\lim_{i\to\infty} a_i = 4$. The second inequality is guaranteed by the properties of $V$ by Lemma~\ref{lemma:future-prob}.

Even the sets of $W_i$ are still not enough for our proof, we would like to get sets of histories that are ``connected'' with high probability from anywhere \emph{within} the accumulating phase.
For all $i$ and all $h\in W_i$ we apply the following lemma and obtain a prefix-free set of paths $Z_h$ such that
$\Pr{\sigma_\otimes}{Z_h} \ge 1 - 4 \cdot \ell^{3/2} \cdot \kappa^{1/2^\ell}$ and for all prefixes $h'$ of any path in $Z_h$ we have $\Pr{\sigma_\otimes}{W_{i+1}\mid h'} \ge 1 - 2 \cdot \kappa^{1/2^\ell}$.

\begin{lemma}\label{lemma:still-in-set}
	Let $W$ be a set of runs such that
	\[
	\Pr{\sigma_\otimes}{W} > 1-\varepsilon
	\]
	then there is a prefix-free set $V$ of finite paths of length
	$\ell$ such that $\Pr{\sigma_\otimes}{V } \ge 1- 2\cdot\ell\cdot\sqrt{\ell\epsilon}$
	and for all prefixes $h'$ of a path in $V$
	we have
	\[
	\Pr{\sigma_\otimes}{W \mid h'} > 1 - \sqrt{\epsilon}
	\]
\end{lemma}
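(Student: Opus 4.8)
The plan is to obtain Lemma~\ref{lemma:still-in-set} as a union bound over $\ell$ applications of Lemma~\ref{lemma:future-prob}, one for each intermediate length. First I would dispatch the degenerate case: if $\varepsilon\ge 1$ then both $1-\sqrt\varepsilon$ and $1-2\cdot\ell\cdot\sqrt{\ell\varepsilon}$ are non-positive, so $V=\emptyset$ meets all requirements vacuously; hence from now on assume $\varepsilon<1$.

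For the core step, for every $k\in\{1,\dots,\ell\}$ I would invoke Lemma~\ref{lemma:future-prob} with the probability measure $\Pr{\sigma_\otimes}{\cdot}$, with $U=W$, with $H$ the set of all histories of length $k$, and with parameter $\varepsilon$. This is legitimate since $\Pr{\sigma_\otimes}{H}=1\ge 1-\varepsilon$ (the length-$k$ histories partition the runs) and $\Pr{\sigma_\otimes}{W}>1-\varepsilon$ by hypothesis; note $H$ is automatically prefix-free as all its elements share the length $k$. The lemma yields a set $V_k\subseteq H$ with $\Pr{\sigma_\otimes}{V_k}\ge 1-2\sqrt\varepsilon$ and $\Pr{\sigma_\otimes}{W\mid h}>1-\sqrt\varepsilon$ for every $h\in V_k$. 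I would then let $V$ be the set of all histories of length $\ell$ whose length-$k$ prefix lies in $V_k$ for every $1\le k\le\ell$. Such $V$ is prefix-free because all its members have length $\ell$, and for any prefix $h'$ of a member of $V$ we get $\Pr{\sigma_\otimes}{W\mid h'}>1-\sqrt\varepsilon$: either $|h'|=k\ge 1$ and $h'\in V_k$, or $h'$ is the initial state and $\Pr{\sigma_\otimes}{W\mid h'}=\Pr{\sigma_\otimes}{W}>1-\varepsilon>1-\sqrt\varepsilon$.

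It remains to lower-bound $\Pr{\sigma_\otimes}{V}$: a run fails to begin with an element of $V$ exactly when, for some $k\in\{1,\dots,\ell\}$, its length-$k$ prefix is outside $V_k$, and the set of such runs has probability $1-\Pr{\sigma_\otimes}{V_k}\le 2\sqrt\varepsilon$; summing over $k$ gives $\Pr{\sigma_\otimes}{V}\ge 1-2\ell\sqrt\varepsilon\ge 1-2\cdot\ell\cdot\sqrt{\ell\varepsilon}$, using $\ell\ge 1$. I do not expect a real obstacle here; the only point needing a moment's care is that the sets $V_k$ for different $k$ need not be compatible (a history in $V_k$ need not extend one in $V_{k-1}$), so one should not try to build $V$ by iteratively refining a single chain of sets — the union bound over the $\ell$ possible ``first bad length'' sidesteps this entirely, and measurability is immediate since everything in sight is a countable union of cylinders. (Equivalently one could avoid Lemma~\ref{lemma:future-prob} altogether, since the expectation of $1-\Pr{\sigma_\otimes}{W\mid h_k}$ over the run equals $\Pr{\sigma_\otimes}{W}$'s complement $<\varepsilon$, and Markov's inequality bounds the probability of a bad length-$k$ prefix by $\sqrt\varepsilon$; but routing through Lemma~\ref{lemma:future-prob} keeps the argument uniform with the rest of the section.)
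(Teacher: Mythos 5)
Your proposal is correct and follows essentially the same route as the paper: apply Lemma~\ref{lemma:future-prob} once for each intermediate length $k\le\ell$ to obtain sets $V_k$, intersect (i.e.\ keep the length-$\ell$ histories all of whose prefixes land in the corresponding $V_k$), and union-bound, noting that $1-2\ell\sqrt{\varepsilon}\ge 1-2\ell\sqrt{\ell\varepsilon}$. Your write-up is in fact slightly more careful than the paper's (which has a typo taking $H$ to be paths of length $\ell$ rather than $k$, and does not spell out the union bound or the initial-state prefix), so there is nothing to fix.
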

\begin{proof}
	For all $k$, we can find a set
	$V_k$ of paths of length $k$ such that $\Pr{\sigma_{\otimes}}{V_k} \ge 1 - 2\sqrt{\epsilon}$ and
	$\Pr{\sigma_\otimes}{W \mid h'} > 1 - \sqrt{\epsilon}$ for all $h'\in V_k$; this is possible by
	Lemma~\ref{lemma:future-prob}, where for $H$ we take all paths
	of length $\ell$. The set $V$ is then
	obtain $V = \bigcap_{i=1}^\ell V_i$ (note that this is indeed a set of paths).
\end{proof}

Finally, we are ready to obtain the accumulating and reaching strategies. Below, $\last{h}$ is the last state of a path $h$.

\begin{lemma}
	For $i\le \ell$, $h\in W_i$ and $h'\in Z_h$, there is a strategy $\bar\zeta_{h'}$
	that from $\last{h'}$ almost surely reaches $\{\last{h''} \mid h''\in W_{i+1}\}$ after passing through a fulfilled state.
\end{lemma}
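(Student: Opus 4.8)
The plan is to reduce the statement to a ``reach $G_1$, then reach $G_2$'' objective in the product MDP $\mdp_\otimes$ and to boost the achievable probability to $1$ via Lemma~\ref{lemma:pone}. Concretely, I would take $G_1\subseteq S_\otimes$ to be the set of fulfilled states of $\mdp_\otimes$ and $G_2=\{\last{h''}\mid h''\in W_{i+1}\}$, and seek a strategy $\bar\zeta_{h'}$ from $\last{h'}$ with $\Pr[\last{h'}]{\bar\zeta_{h'}}{\F(G_1\wedge\F G_2)}=1$; this is exactly ``almost surely reach a fulfilled state and afterwards reach $\{\last{h''}\mid h''\in W_{i+1}\}$'', which is the claim.

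First I would produce a lower bound on $\sup_\sigma\Pr[\last{h'}]{\sigma}{\F(G_1\wedge\F G_2)}$ by using the residual of $\sigma_\otimes$ after $h'$. The key observation is that any run which, starting from $h'$ and following $\sigma_\otimes$, ever extends to a history in $W_{i+1}$ already witnesses $\F(G_1\wedge\F G_2)$ from $\last{h'}$: by definition of $G_2$ it visits $G_2$, and beforehand it must have visited $G_1$, because a history lies in $W_{i+1}\subseteq\hat W_{i+1}$ only if it opens the $(i{+}1)$-th pseudo-accumulating phase, and by the tagging rule such a phase can open only out of a pseudo-reaching phase whose range contains some prefix $g$ with $\thetas(g)$ a fulfilled set (in particular $\thetas(g)\neq\bot$); by the correspondence $\omega\mapsto\omega_\otimes$, at the position of $g$ the run $\omega_\otimes$ sits in the fulfilled state $(\last{g},\thetas(g))\in G_1$, which is passed after $h'$ and before the $W_{i+1}$-history. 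Since $h'\in Z_h$, the defining property of $Z_h$ yields $\Pr{\sigma_\otimes}{W_{i+1}\mid h'}\ge 1-2\kappa^{1/2^\ell}$, hence $\sup_\sigma\Pr[\last{h'}]{\sigma}{\F(G_1\wedge\F G_2)}\ge 1-2\kappa^{1/2^\ell}$.

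It then remains to note that, because $\lambda$ — and therefore $\kappa=3\ell^2 2^n\lambda$ — was fixed small enough, we have $2\kappa^{1/2^\ell}<p^{2|S_\otimes|}$, where $p$ is the least transition probability of $\mdp$, which is also the least transition probability of $\mdp_\otimes$. Consequently the supremum above exceeds $1-p^{2|S_\otimes|}$, and part~1 of Lemma~\ref{lemma:pone}, applied to $\mdp_\otimes$ (whose finitely many states we may equip with the unique labels that lemma requires) with initial state $\last{h'}$, delivers the desired strategy $\bar\zeta_{h'}$.

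The hard part is the middle step: rigorously arguing that every way of reaching $W_{i+1}$ forces an earlier visit to a fulfilled state of $\mdp_\otimes$. This amounts to unfolding the pseudo-accumulating/pseudo-reaching tagging, the inductive definition of the labellings $\thetas,\thetaa$, and the passage from $\mdp$-histories to $\mdp_\otimes$-runs (cf.\ Lemma~\ref{lemma:to-product}), and checking the edge case that the witnessing fulfilled prefix does not carry the error label $\bot$. Everything else — the bound inherited from $Z_h$ and the comparison of $2\kappa^{1/2^\ell}$ with $p^{2|S_\otimes|}$ — is routine given the way $\lambda$ was chosen.
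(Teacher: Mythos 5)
Your proposal is correct and follows essentially the same route as the paper's (one-line) proof: lower-bound the achievable probability of ``reach a fulfilled state, then reach $\{\last{h''}\mid h''\in W_{i+1}\}$'' by $1-2\kappa^{1/2^\ell}$ using the defining property of $Z_h$ under the residual of $\sigma_\otimes$, then invoke part~1 of Lemma~\ref{lemma:pone} with $\lambda$ (hence $\kappa$) chosen small enough. Your explicit verification that any continuation reaching $W_{i+1}$ must first pass through a fulfilled state (via the tagging rule for opening the $(i{+}1)$-th pseudo-accumulating phase) is a detail the paper leaves implicit, and it is a welcome addition.
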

\begin{proof}
	Such strategy always exists because of Lemma~\ref{lemma:pone} and because
	$\Pr{\sigma_\otimes}{W_{i+1}\mid w} \ge 1 - 2 \cdot \kappa^{1/2^\ell}$ by properties of elements of $W_i$ and $Z_w$.
\end{proof}

The following lemma can be easily obtained from Lemma~\ref{lemma:pone}.
\begin{lemma}
	For all $w\in W_i$, there is a memoryless deterministic strategy $\bar\pi_w$ (in $\mdp_\otimes$) which, when started in $\last{w}$, only ever reaches
	states and uses actions that occur on some history of $Z_w$.
\end{lemma}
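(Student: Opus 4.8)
The plan is to read the statement as a safety requirement and to obtain it from the second item of Lemma~\ref{lemma:pone}. Let $G_w\subseteq S_\otimes$ be the set of states occurring at a non-terminal position of some path of $Z_w$, and let $N_w\subseteq A_\otimes$ be the set of actions occurring on some path of $Z_w$; then every state of $G_w$ has at least one action of $N_w$ legal in it. Form the (finite) MDP $\mdp'_\otimes$ from $\mdp_\otimes$ by deleting, at every state of $G_w$, all actions outside $N_w$, and leaving the rest of $\mdp_\otimes$ untouched; since all transition probabilities are inherited from $\mdp$, the minimal probability $p$ occurring in $\mdp'_\otimes$ is the one of $\mdp_\otimes$. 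I will apply Lemma~\ref{lemma:pone}(2) in $\mdp'_\otimes$ with $G_1:=G_w$.

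For the hypothesis, recall that $Z_w$ is a prefix-free set of paths of length $\ell=|S_\otimes|+2$ and that, by its construction, $\sigma_\otimes$ started from $\last{w}$ keeps the run on a history of $Z_w$ with probability at least $1-4\cdot\ell^{3/2}\cdot\kappa^{1/2^\ell}$. A run that stays on such a history occupies, during its first $|S_\otimes|$ steps, only states of $G_w$ (those positions are non-terminal in a length-$\ell$ path) and uses only actions of $N_w$; restricting $\sigma_\otimes$ to its $N_w$-choices at such histories therefore yields a strategy of $\mdp'_\otimes$ under which $\{s_0a_0s_1\cdots\mid\forall i\le|S_\otimes|:s_i\in G_w\}$ keeps probability at least $1-4\cdot\ell^{3/2}\cdot\kappa^{1/2^\ell}$ from $\last{w}$. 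As $\lambda$, and hence $\kappa=3\cdot\ell^2\cdot 2^n\cdot\lambda$, was fixed small enough, this is larger than $1-p^{|S_\otimes|}$, so Lemma~\ref{lemma:pone}(2) gives a strategy of $\mdp'_\otimes$ that holds the run in $G_w$ with probability $1$ from $\last{w}$.

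It remains to make this strategy memoryless and deterministic, which is the standard positional determinacy of safety objectives. The set $X$ of states from which $G_w$ can be maintained forever in $\mdp'_\otimes$ is computed by the usual safety fixpoint and stabilizes in at most $|S_\otimes|$ rounds; any state outside $X$ leaves $G_w$ within $|S_\otimes|$ steps with probability at least $p^{|S_\otimes|}$ under every strategy, so the probability-$1$ strategy above forces $\last{w}\in X$. Fixing, in each state of $X$, one $\mdp'_\otimes$-action whose support stays inside $X$ defines a memoryless deterministic strategy $\bar\pi_w$ that, from $\last{w}$, never leaves $X\subseteq G_w$ and uses only actions of $N_w$ --- i.e.\ only ever reaches states and uses actions occurring on a history of $Z_w$, as required.

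The one point that needs care is the middle step: Lemma~\ref{lemma:pone} is phrased purely over states, so the ``use only actions of $Z_w$'' part of the statement has to be forced by first cutting $\mdp_\otimes$'s action set down to $N_w$ inside $G_w$, and one must then check --- by the elementary monotone-coupling argument sketched above --- that this restriction does not lose the probability bound inherited from $\sigma_\otimes$. Once that is in place, the conclusion follows immediately from the quoted lemma together with positional determinacy of safety.
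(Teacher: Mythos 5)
Your proof is correct and follows the route the paper intends: the paper merely remarks that the lemma "can be easily obtained from Lemma~\ref{lemma:pone}", and your argument is exactly that elaboration --- restrict $\mdp_\otimes$ to the actions of $Z_w$, transfer the probability bound from $\sigma_\otimes$ (which holds since $\kappa$ is small enough by the choice of $\lambda$), apply part~2 of Lemma~\ref{lemma:pone}, and extract a memoryless deterministic strategy by positional determinacy of almost-sure safety. The renormalization step you flag is indeed the only point needing care, and your monotone-coupling sketch for it is sound.
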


In addition, denote by $\pi_{s,C}$
a strategy $\bar\pi_w$ for $w$ belonging to $W_i$ for $i = \min \{j \mid \exists w'\in W_j, \last{w'} = (s,C)\}$.
By $(M_{s,C},N_{s,C})$ we denote the tuple of sets of states and actions that $\pi_{s,C}$ visits when started in $(s,C)$.

Let $\zeta_{s,C}$ be a strategy $\bar\zeta_w$ where $w\in \bigcup_{w'\in W_{i}} Z_{w'}$ for $i= \min \{j \mid \exists w'\in \bigcup_{w'\in W_{j}} Z_{w'} \wedge \last{w'} = (s,C)\}$. Let $\rank{(s,C)} = i$.

We inductively build $(M,N)$ as follows. Initially, $M = \{\last{w} \mid w\in W_1\}$ and $N=\emptyset$. We then keep adding to $M$ and $N$, until a fixpoint is reached, (i) the states and actions of $(M_{s,C},N_{s,C})$ for all $(s,C)\in M$,
and (ii) last states of histories $W_{i+1}$ for $i$ such that there is $(s,C)\in M$ with $\rank{(s,C)} = i$.
We claim that this procedure is well-defined in the sense that
the sets $W_{i+1}$ in step (ii) above were always defined, i.e. that $i<\ell$ in every case.
For this, we need to show that whenever $H_{(s,C)}$ is taken in the definition,
then $\rank{(s,C)} \le \ell-1 = |S_\otimes|$. Letting
$\rank{M} = \max\{\rank{(s,C)} \mid (s,C)\in M\}$, we can argue that initially $\rank{M} = 1$ and with every iteration of steps (i) and (ii) the rank increases at most by 1.
Since only $|S_\otimes|$ elements can be added to $M$ before a fixpoint is reached, we get that the bound on $\rank{M}$ is $|S_{\otimes}|$.

Now we claim that the Lemma~\ref{prop:screwing-paths} is satisfied.
\begin{itemize}
 \item As for the property of $N$, note that
we were only adding states to $N$ if they were last states of a history in a pseudo-accumulating phase, and by definition of $\thetaa$ we have
$(q_{in}, \symbopen)$ in the second component of such states for the initial states $q_{in}$ of the automata $\rabin_{\varphi_i, I}$ for all $i\in I$.
 \item For item \ref{item:screwing-a}, the strategy $\pi$ is defined as follows. Let $h$ be a history starting in $(s,C)\in M$, we put $\pi(h) = \pi_{(s',C')}(h)$ where $(s',C')$ is the element such that $(s,C)\in M_{s',C'}$. For any other history we define $\pi$ arbitrarily.
 \item For item \ref{item:screwing-c}, the strategy $\zeta$ is defined as follows. Let $h$ be a history starting in $(s,C)\in M$, we put $\zeta(h) = \zeta_{(s,C)}(h)$. For any other history we define $\zeta$ arbitrarily.
\end{itemize}

\subsection{Details for proof of Theorem~\ref{thm:main-mdp} and Section~\ref{sec:alg}}

\begin{reftheorem}{thm:main-mdp}
	The controller-synthesis problem for 1-fLTL for MDPs is solvable in time polynomial
	in the size of the model and doubly exponential in the size of the formula.
\end{reftheorem}

\begin{proof}
        We now give a more detailed description of the algorithm that is presented in Section~\ref{sec:alg}.
	
	\begin{enumerate}
		\item Construct the automata $\rabin_{\xi,I}$ for all $\xi\in \{\psi,\varphi_1,\ldots,\varphi_n\}$ and $I\subseteq \{1,\ldots,n\}$.
		\item Initialize $\Upsilon := \emptyset$.
		\item Repeat the following for every $I$. Find the largest sets $(M,N)$ satisfying the conditions \ref{item:screwing-a}--\ref{item:screwing-c} of Lemma~\ref{prop:screwing-paths}.
		It can be done as follows:
		\begin{itemize}
			\item Let $\trunc{M,N}$ denote the tuple $(M',N')$ that contains maximal subsets of $M$ and $N$
			satisfying that for every $s\in M$ there is $a\in N$ such that $\Delta(s,a)$ is defined and for every $s'$
			contained in the support of $\Delta(s,a)$ we have $s'\in M$. (Easily obtained by iteratively pruning actions and states violating the conditions.)
			\item We start with $M=S_\otimes$ and $N$ containing all ``accumulating'' actions $(a,C)$ with $\{(q_{in}, \symbopen) \mid q_{in} \text{ is the initial state of $\rabin_{\varphi_i,\fixI}$ for } i \in \fixI\} \subseteq C$. Then we apply the following steps until a fixpoint is reached:
			\begin{itemize}
				\item[(a)] $(M,N) := \trunc{M,N}$;
				\item[(b)] Remove from $M$ all states that do not satisfy item~\ref{item:screwing-a} or item~\ref{item:screwing-c} of
				Lemma~\ref{prop:screwing-paths}. (Easily achieved by qualitative safety and reachability analysis in $\mdp_\otimes$.)
			\end{itemize}
		\end{itemize}
		This yields a set $(M,N)$, and we add to $\Upsilon$ all pairs $(s,q)$ such that
		$(s,\{(q,\symbopen)\}) \in M$.
		\item Compute an optimal strategy $\sigma'$ for ``reaching'' $\Upsilon$ (defined in Proposition~\ref{prop:reach}) and return the probability that $\sigma'$ ``reaches'' $\Upsilon$. It can be done as follows:
		\begin{itemize}
			\item By $\mdp_\Diamond$ we denote the ``naive'' product of $\mdp$ with all the main Rabin automata $\rabin_{\psi,I}$
			for all $I\subseteq \{1,\ldots,n\}$. Formally, fixing $I_0,\ldots, I_m$ an enumeration of subsets of $\{1,\ldots,n\}$,
			the state space $\states_\Diamond$ of $\mdp_\Diamond$ contains tuples
			$(s,q^{I_0},\ldots,q^{I_m})$ where $q^{I_j}$ is a state of $\rabin_{\psi,I_j}$,
			the set of actions is $\actions_\Diamond = \actions$, and the transition function $\tra_\Diamond$
			is given by
			\[
			\tra_\Diamond((s,q^{I_0},\ldots, q^{I_m}),a)(t,\bar q^{I_0},\ldots \bar q^{I_m}) = \tra(s,a)(t)
			\]
			when for every $0 \leq i \leq m$, we have $q^{I_i}\tran{s} q'^{I_i}$.
			
			\item Furthermore, let $\Upsilon_\Diamond\subseteq S_\Diamond$ be the set of all $(s,q^{I_0},\ldots,q^{I_m})$ s.t.
			there is $i$ with $(s,q^{I_i})\in \Upsilon$.
			
			\item By the construction, we easily obtain equivalence of strategies of the following form.
			For any $\sigma$ in $\mdp$ there is $\sigma_\Diamond$ in $\mdp_\Diamond$, and also for any $\sigma_\Diamond$ there is $\sigma$ such that 
			$$\Pr{\sigma}{\{ \omega \mid \text{$\omega$ reaches a pair from $\Upsilon$} \}} = \Pr{\sigma_\Diamond}{\{ \omega \mid \text{$\omega$ reaches some state from $\Upsilon_\Diamond$} \}}.$$
			Let us prove the statement. For any \emph{finite or infinite} path $\omega=s_0a_0s_1a_1\ldots$ in $\mdp$ initiated in $\sinit$
			there is a unique path $\omega_\Diamond = (s_0,q_0^{I_0},\ldots, q_0^{I_m})a_0(s_1,q_1^{I_0},\ldots, q_1^{I_m})a_1\ldots$ with $s_i = s'_i$ and $a_i=a'_i$ for all $i$. 
			For a fixed $\sigma_\Diamond$ we define $\sigma$ by $\sigma(h) = \sigma_\Diamond(h_\Diamond)$ for all $h$.
			Similarly, for a fixed $\sigma$, we define $\sigma_\Diamond$ by $\sigma_\Diamond(h_\Diamond) = \sigma(h)$ for all $h$.
			The equality easily follows from the definitions.
			
			\item The above statement allows us to compute an optimal strategy $\sigma_\Diamond$ in $\mdp_\Diamond$ using ordinary reachability algorithms and set $\sigma'$ to the corresponding strategy $\mdp$.
		\end{itemize}
	\end{enumerate}
	
	\noindent
	
	Let us now analyse the complexity of the algorithm in more detail. Each of the Rabin automata in step 1. above can be computed in time $2^{2^{\poly{|\varphi|}}}$,
	and since there is exponentially many such automata (in $|\varphi|$), step 1. takes time $2^{2^{poly(|\varphi|)}}$.
	In step 3., for a fixed $I$, $M$ and $N$ the result of $\trunc{M,N}$ can be computed in polynomial time in the size of $M$ and $N$; the same holds 	for satisfaction of the conditions in (b).
	The size of $M\subseteq S_\otimes$ and
	$N\subseteq A_\otimes$ is
	$\poly{S}\cdot 2^{2^{\poly{|\varphi|}}}$, and for a fixed $I$ the fixpoint is reached in at most $|S_\otimes| \cdot |A_\otimes|$ iterations.
	Moreover, there is
	at most $2^{|\varphi|}$ different $I$s. Hence, step 3. can be performed in time $\poly{S}\cdot 2^{2^{\poly{|\varphi|}}}$.
	Finally, in step 4. we are computing reachability probability in a MDP $\mdp_\Diamond$ which is of size $\poly{S} \cdot 2^{2^{\poly{|\varphi|}}}$,
	and so also this step can be done in time $\poly{S}\cdot 2^{2^{\poly{|\varphi|}}}$.
	This completes proof of Theorem~\ref{thm:main-mdp}.
\end{proof}

\end{document}